\numberwithin{figure}{section}
\newtheorem{theorem}{Theorem}[section]
\newtheorem{lemma}[theorem]{Lemma}
\newtheorem{corollary}[theorem]{Corollary}
\newtheorem{proposition}[theorem]{Proposition}
\newtheorem{example}[theorem]{Example}
\newtheorem{remark}[theorem]{Remark}
\newtheorem{definition}[theorem]{Definition}
\begin{document}

\title{On $\mathbb{Z}_{p^r}\mathbb{Z}_{p^r}\mathbb{Z}_{p^s}$-Additive Cyclic Codes\footnote{Email: cristina.fernandez@uab.es (C. Fernández-Córdoba), sachinp@iiitd.ac.in (S. Pathak), upadhyay@bhu.ac.in (A.K. Upadhyay)}}
\author{ Cristina Fernández-Córdoba$^1$, Sachin Pathak$^2$\footnote{Corresponding author.},  Ashish Kumar Upadhyay$^3$}
\date{
\small{
1. Department of Information and Communications Engineering, Universitat Aut\'{o}noma de Barcelona, 08193-Bellaterra, Spain.\\
2. Department of Mathematics, IIIT-Delhi, New Delhi 110020, India.\\
3. Department of Mathematics, Institute of Science, Banaras Hindu University, Varanasi 221005,
India.
}
\today}
\maketitle

\begin{abstract}
In this paper, we introduce $\mathbb{Z}_{p^r}\mathbb{Z}_{p^r}\mathbb{Z}_{p^s}$-additive cyclic codes for $r\leq s$. These codes can be identified as $\mathbb{Z}_{p^s}[x]$-submodules of $\mathbb{Z}_{p^r}[x]/\langle x^{\alpha}-1\rangle \times \mathbb{Z}_{p^r}[x]/\langle x^{\beta}-1\rangle\times \mathbb{Z}_{p^s}[x]/\langle x^{\gamma}-1\rangle$. We determine the generator polynomials and minimal generating sets for this family of codes. Some previous works has been done for the case $p=2$ with $r=s=1$, $r=s=2$, and $r=1,s=2$. However, we show that in these previous works the classification of these codes were incomplete and the statements in this paper complete such classification. We also discuss the structure of separable $\mathbb{Z}_{p^r}\mathbb{Z}_{p^r}\mathbb{Z}_{p^s}$-additive cyclic codes and determine their generator polynomials. Further, we also study the duality of $\mathbb{Z}_{p^s}[x]$-submodules. As applications, we present some examples and construct some optimal binary codes. 
     
\end{abstract}

\noindent \textit{Keywords:} \small{$\mathbb{Z}_{p^r}\mathbb{Z}_{p^r}\mathbb{Z}_{p^s}$-additive cyclic codes, Generator polynomials,  Minimal generating sets, Duality.}\\
\textit{Mathematics Subject Classifications(2010):} \small{94B05, 94B60, 11T71, 14G50.}\\

\section{Introduction}
 Codes over finite rings have been an interesting topic of research for many researchers in the last few years, especially after the work of Hammons et al. \cite{hammons}. Cyclic codes and one of their significant generalization, constacyclic codes, are important classes of codes because of their rich algebraic structure and because of their ease of implementation. The structure of constacyclic codes has been studied by many researchers in \cite{Siap2014,AS09, abu2007,bonnecaze, dinh,QZZ06,zhu} over several finite rings.
 In all of these works, the codes have been studied over single alphabets.\par
 In 1997, Rifà et al. \cite{RP97} first described codes over the mixed alphabets. After that, in 1998, Brouwer et al. \cite{bro1998} considered the mixed alphabets  $\mathbb{Z}_2$ and $\mathbb{Z}_3$, and studied the lower and upper bound for the maximal possible size of error-correcting codes over mixed alphabets. In 1973, Delsarte \cite{D73} introduced additive codes in temrs of association schemes. In general, an additive code is defined as a subgroup of the underlying abelian group. A few years ago, Borges et al. \cite{Bor2009} studied $\mathbb{Z}_2\mathbb{Z}_4$-additive codes. For these codes, they partitioned the set of coordinates into two parts, the first of which corresponds to binary coordinates and the second one to quaternary coordinates. They defined $\mathbb{Z}_2 \mathbb{Z}_4$-additive codes as the subgroup of the group $\mathbb{Z}^{\alpha}_2\times \mathbb{Z}^{\beta}_4$. These codes are a generalization of binary and quaternary codes for $\beta=0$ and $\alpha=0$, respectively. In 2014, T. Abualrab et al. \cite{Th14}, studied $\mathbb{Z}_2\mathbb{Z}_4$-additive cyclic codes. They obtained the minimal generating set for these codes and some optimal bound from this family of codes. After that, in 2016, Borges et al. \cite{vall2106} studied the generating polynomials for dual codes of $\mathbb{Z}_2\mathbb{Z}_4$-additive cyclic codes. Very recently, Borges et al. \cite{Borg2018} studied binary images of additive cyclic codes over the product of chain rings.  Some generalizations of $\mathbb{Z}_2\mathbb{Z}_4$-additive codes and other related codes can be seen in \cite{Abu2015, Ay2016,Ay2017,siap2015}.\par 
  In 2018, Borges et al. \cite{BCV18} studied the generalization of $\mathbb{Z}_2\mathbb{Z}_4$-additive cyclic codes over $\mathbb{Z}_{p^r}\mathbb{Z}_{p^s}$-additive cyclic codes and defined the structure of these codes. They also discussed minimal generating sets for these codes and some properties of their duals.  In 2018, Borges et al. \cite{bor2014} introduced $\mathbb{Z}_2$-double cyclic codes and determined the generator polynomials for these codes. Further, they determined the minimal generating sets of these codes and generator polynomials for their duals and obtained several binary optimal codes from their study. They established the relationship between the generator polynomials of $\mathbb{Z}_2$-double cyclic codes and their duals. Moreover, they studied that $\mathbb{Z}_2$-double cyclic codes are related to $\mathbb{Z}_4$-cyclic codes and $\mathbb{Z}_2\mathbb{Z}_4$-additve cyclic codes. On the line of double cyclic codes, Gao et al. \cite{shi2105} and  Yao et al. \cite{yao2015} introduced the structure of double cyclic codes over $\mathbb{Z}_4$ and $\mathbb{F}_q+u\mathbb{F}_q+u^2\mathbb{F}_q$, respectively. They determined generator polynomials for these codes as well as their duals. They also studied the minimal generating sets for these codes. As a generalization, Mostafanasab \cite{mosta} studied the triple cyclic codes over $\mathbb{Z}_2$ and determined structural properties of triple cyclic codes over $\mathbb{Z}_2$. Wu et al. \cite{Gao2016} extended the work of Mostafanasab \cite{mosta} and studied triple cyclic codes over $\mathbb{Z}_4$. In 2018, Wu et al. \cite{224} studied $\mathbb{Z}_2\mathbb{Z}_2\mathbb{Z}_4$-additive cyclic codes, and the generator polynomials of these codes along with their duals. Inspired by \cite{BCV18, mosta, Gao2016, 224}, in this paper, we study $\mathbb{Z}_{p^r}\mathbb{Z}_{p^r}\mathbb{Z}_{p^s}$-additive cyclic codes, where $\mathbb{Z}_{p^n}$ is a ring of integers modulo $p^n$. Both the rings $\mathbb{Z}_{p^r}$ and $\mathbb{Z}_{p^s}$ are finite chain rings with maximal ideal $\langle p\rangle$. We study the algebraic structure of $\mathbb{Z}_{p^r}\mathbb{Z}_{p^r}\mathbb{Z}_{p^s}$-additive cyclic codes in details. In \cite{mosta, Gao2016, 224}, authors studied additive cyclic codes over mixed alphabets considering the product of three not necessarily different rings. They defined the codes over the product of rings and gave their generator polynomials. However, as we show in this paper, these generator polynomials do not define the general case and we give the general form of generator polynomials.\par
As applications, we provide some optimal binary codes in Table \ref{t1}. This paper is organized as follows: In Section 2, we introduce cyclic codes over $\mathbb{Z}_{p^r}$ and determine generating sets of a cyclic code over $\mathbb{Z}_{p^r}$. In Section 3, we discuss the structure of $\mathbb{Z}_{p^r}\mathbb{Z}_{p^r}\mathbb{Z}_{p^s}$-additive codes. In Section 4, we study the algebraic structure of $\mathbb{Z}_{p^r}\mathbb{Z}_{p^r}\mathbb{Z}_{p^s}$-additive cyclic codes and separable codes and discuss their generator polynomials. Further, we determine the generator polynomials of each class of additive codes separately. Moreover, in Example \ref{ex:Z2Z2Z4} and Remark \ref{R4.12}, we show that the code presented in Example \ref{ex:Z2Z2Z4} can not be generated by the generator polynomials discussed in previously published work. In Section 5, we discuss the minimal generating sets for this family of codes and present an example. Section 6 is devoted to the study of the duality of $\mathbb{Z}_{p^r}\mathbb{Z}_{p^r}\mathbb{Z}_{p^s}$-additive cyclic codes and we determine the relation between $\mathbb{Z}_{p^r}\mathbb{Z}_{p^r}\mathbb{Z}_{p^s}$-additive cyclic codes and dual of their duals. Section 7 concludes this paper.

\section{Cyclic Codes Over $\mathbb{Z}_{p^a}$}
Let $\mathbb{Z}_{p^a}$ be the finite ring of integers modulo $p^a$, where $p$ is a prime number. A non-empty subset of $\mathbb{Z}^n_{p^a}$ is said to be a code, and a submodule of $\mathbb{Z}^n_{p^a}$ is said to be linear code of length $n$ over $\mathbb{Z}_{p^a}$. The elements of linear code are called codewords. Let $C$ be a  linear code of length $n$ over $\mathbb{Z}_{p^a}$. Then $C$ is said to be a cyclic code of length $n$ if it has the property that for any codeword $(c_0,c_1,\cdots,c_{n-1})\in C$ the cyclic shift $(c_{n-1},c_0,\cdots,c_{n-2})\in C$.\par
Let $f_0, f_1, \cdots, f_{s-1}$ be polynomials in a $\mathbb{Z}_{p^a}[x]$-module. We denote by $\langle f_0, f_1, \cdots, f_{s-1}\rangle$ the $\mathbb{Z}_{p^a}[x]$-submodule and $\langle f_0, f_1, \cdots, f_{s-1}\rangle_{p^a}$ the $\mathbb{Z}_{p^a}$-submodule, respectively generated by  $f_0, f_1, \cdots, f_{s-1}$.\par
Let $C$ be a cyclic code of length $n$ over $\mathbb{Z}_{p^a}$. Every codeword $\mathbf c= (c_0,c_1,\cdots,c_{n-1})\in C$ can be naturally associated with the polynomial $\mathbf{c}(x)= c_0+c_1x+\cdots+c_{n-1}x^{n-1}$ in $\frac{\mathbb{Z}_{p^a}[x]}{\langle x^n-1\rangle}$. If we multiply by $x\in \mathbb{Z}_{p^a}[x]$ any element $\mathbf c(x)\in \mathbb{Z}_{p^a}[x]/\langle x^n-1\rangle$ such that $x\mathbf{c}(x)=c_{n-1}+c_0x+\cdots+c_{n-2}x^{n-1}$, then it is equivalent to the right shift of $\mathbf c(x)$ on $\mathbb{Z}^{n}_{p^a}$. So we can identify cyclic code $C$ as an ideal of $\frac{\mathbb{Z}_{p^a}[x]}{\langle x^n-1\rangle}$. We assume that $\mbox{gcd}(n,p)=1.$ Therefore, the polynomial $x^n-1$ can decompose as a product of pairwise coprime polynomials uniquely over $\mathbb{Z}_{p^a}[x]$.
 
\begin{theorem}{\cite[Corollary 3.5]{kanwar}}
Suppose $\mbox{gcd}(n,p)=1$ and $C$ is a cyclic code of length $n$ over $\mathbb{Z}_{p^a}$. Then there exist polynomials $f_0, f_1, \cdots, f_{a-1}$ such that $C= \langle f_0, pf_1,p^2f_2, \cdots, p^{a-1}f_{a-1}\rangle$ and $f_{a-1}|f_{a-2}|\cdots|f_1|f_0|(x^n-1)$.
\end{theorem}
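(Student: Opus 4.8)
The plan is to exploit the fact that, since $\gcd(n,p)=1$, the ring $R_n = \mathbb{Z}_{p^a}[x]/\langle x^n-1\rangle$ decomposes as a direct product of finite chain rings, which reduces the classification of its ideals to a coordinate-wise bookkeeping problem.

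First I would lift the factorization of $x^n-1$ from $\mathbb{Z}_p$ to $\mathbb{Z}_{p^a}$: because $\gcd(n,p)=1$, the polynomial $x^n-1$ is squarefree modulo $p$, so by Hensel's lemma it factors uniquely as $x^n-1 = g_1(x)g_2(x)\cdots g_k(x)$ with the $g_i$ monic, pairwise coprime, and irreducible modulo $p$ (this is exactly the uniqueness statement already invoked in the text). By the Chinese Remainder Theorem this yields a ring isomorphism
\[
R_n \;\cong\; \prod_{i=1}^{k}\, \mathbb{Z}_{p^a}[x]/\langle g_i(x)\rangle ,
\]
and each factor $R_i := \mathbb{Z}_{p^a}[x]/\langle g_i(x)\rangle$ is a Galois ring $GR(p^a,\deg g_i)$, hence a finite chain ring whose ideals form the single chain $0 = \langle p^a\rangle \subsetneq \langle p^{a-1}\rangle \subsetneq \cdots \subsetneq \langle p\rangle \subsetneq \langle 1\rangle = R_i$. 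Since every ideal of a finite product of rings is a product of ideals of the factors, every cyclic code $C$ (i.e. ideal of $R_n$) corresponds under the isomorphism to $\prod_{i=1}^{k}\langle p^{t_i}\rangle$ for a uniquely determined tuple $(t_1,\dots,t_k)$ with $0 \le t_i \le a$.

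Next I would reconstruct the polynomial generators from this tuple. For each $j$ with $0 \le j \le a-1$, set
\[
f_j(x) \;=\; \prod_{\,i\,:\, t_i > j\,} g_i(x),
\]
a monic divisor of $x^n-1$. Because $\{\,i : t_i > j\,\} \supseteq \{\,i : t_i > j+1\,\}$, the divisibility chain $f_{a-1}\mid f_{a-2}\mid\cdots\mid f_1\mid f_0\mid (x^n-1)$ is immediate. To check $C = \langle f_0, pf_1, \dots, p^{a-1}f_{a-1}\rangle$, I would compute the image of $p^j f_j$ in each component $R_i$: the element $f_j$ is a unit in $R_i$ when $t_i \le j$ (it is then coprime to $g_i$) and is zero in $R_i$ when $t_i > j$ (since $g_i \mid f_j$); hence $p^j f_j$ generates $\langle p^j\rangle$ in $R_i$ exactly when $j \ge t_i$ and generates $0$ otherwise. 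Summing over $j = 0,\dots,a-1$, the ideal $\langle f_0, pf_1, \dots, p^{a-1}f_{a-1}\rangle$ has $i$-th component $\sum_{j \ge t_i}\langle p^j\rangle = \langle p^{t_i}\rangle$, matching $C$ component by component, so the two ideals coincide.

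The routine ingredients here — Hensel lifting, the Chinese Remainder decomposition, and the chain structure of the Galois rings $GR(p^a,m)$ — are standard. The one step that requires genuine care is the bookkeeping in the last paragraph: one must pick the index sets $\{\,i : t_i > j\,\}$ so that the ideals $\langle p^j f_j\rangle$ "switch on" at precisely the right level $p^{t_i}$ in every coordinate, and simultaneously guarantee that the resulting $f_j$ are nested monic divisors of $x^n-1$. That is the part I would write out in full; the rest is formal.
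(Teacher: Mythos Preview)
Your argument is correct. The CRT decomposition into Galois rings, the coordinate-wise classification of ideals, and the reconstruction of the $f_j$ via the index sets $\{i : t_i > j\}$ all work exactly as you describe; the verification that $p^jf_j$ contributes $\langle p^j\rangle$ in component $R_i$ precisely when $j\ge t_i$ (and $0$ otherwise) is the right bookkeeping, and it also handles the boundary cases $t_i=0$ and $t_i=a$ cleanly.

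As for comparison: the paper does not actually prove this theorem. It is quoted verbatim as \cite[Corollary~3.5]{kanwar} and used as a black box, so there is no in-paper argument to compare against. Your proof is essentially the standard one from the cited reference (and from Dinh--L\'opez-Permouth \cite{dinh}, also invoked nearby): decompose $R_n$ via CRT into a product of chain rings and read off the ideal structure. Nothing more is needed here.
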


Let $C=\langle f_0, pf_1,p^2f_2 \cdots, p^{a-1}f_{a-1}\rangle$ be a cyclic code of length $n$ over $\mathbb{Z}_{p^a}$, and let $f=f_0+pf_1+p^2f_2+ \cdots+ p^{a-1}f_{a-1}$. By definition, $f_0$ is a factor of $x^n-1$ and for $i= 0,1,\cdots, a-1$ the polynomial $f_i$ is a factor of $f_{i-1}$. We denote $\hat{f_0}=\frac{x^n-1}{f_0}$ and $\hat{f_i}= \frac{f_{i-1}}{f_i}$ for $i= 1,2,\cdots, a-1$. We define, $F= \prod_{i=1}^{a-1}\hat{f_i}$. It is clear that $Ff= (\prod_{i=1}^{a-1}\hat{f_i})f=0$ over $\mathbb{Z}_p^a[x]/\langle x^n-1\rangle$.
\begin{corollary}{\cite[Theorem 3.4]{dinh}}\label{cor1}
Let $C$ be a cyclic code of length $n$ over $\mathbb{Z}_{p^a}$ such that $C= \langle f_0, pf_1,p^2f_2,\linebreak \cdots, p^{a-1}f_{a-1}\rangle$ and $f_{a-1}|f_{a-2}|\cdots|f_1|f_0|(x^n-1)$. Then
\begin{equation*}
 |C|=p^{\sum_{i=0}^{a-1}(a-i)\deg(\hat{f_i})}.   
\end{equation*}
\end{corollary}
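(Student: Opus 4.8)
The plan is to pass to the Chinese Remainder Theorem decomposition of $R := \mathbb{Z}_{p^a}[x]/\langle x^n-1\rangle$ and exploit the chain-ring structure of its factors, so that counting $|C|$ reduces to a bookkeeping identity among the irreducible factors of $x^n-1$. Since $\gcd(n,p)=1$, write $x^n-1 = m_1 m_2\cdots m_t$ with $m_1,\dots,m_t$ pairwise coprime monic polynomials, each irreducible modulo $p$ (this factorization exists and is unique, as recalled above), and put $d_j := \deg m_j$, so $\sum_{j=1}^t d_j = n$. By CRT, $R\cong \prod_{j=1}^t R_j$ where $R_j := \mathbb{Z}_{p^a}[x]/\langle m_j\rangle$ is a finite chain ring (a Galois ring) with maximal ideal $\langle p\rangle$ and $|p^k R_j| = p^{(a-k)d_j}$ for $0\le k\le a$; under this isomorphism $C$ corresponds to $\prod_{j}\pi_j(C)$, so $|C| = \prod_{j=1}^t |\pi_j(C)|$.

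Next I would compute each component $\pi_j(C)$. As $R$ is commutative, $C = \langle f_0\rangle + \langle p f_1\rangle + \cdots + \langle p^{a-1}f_{a-1}\rangle$ is a sum of ideals, hence $\pi_j(C) = \sum_{i=0}^{a-1}\pi_j(p^i f_i)R_j$. Writing $f_i = \prod_{j\in S_i} m_j$, the divisibility chain $f_{a-1}\mid\cdots\mid f_0\mid(x^n-1)$ yields nested index sets $S_{a-1}\subseteq\cdots\subseteq S_0\subseteq\{1,\dots,t\}$. If $j\in S_i$ then $m_j\mid p^i f_i$, so $\pi_j(p^i f_i)=0$; if $j\notin S_i$ then $f_i$ is a unit in $R_j$, so $\pi_j(p^i f_i)R_j = p^i R_j$. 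Since the ideals of the chain ring $R_j$ are totally ordered, $\pi_j(C) = p^{k_j}R_j$ where $k_j := \min\{\,i : j\notin S_i\,\}$, with the convention $k_j := a$ when $j\in S_{a-1}$. Therefore $|C| = \prod_{j=1}^t p^{(a-k_j)d_j}$.

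It then remains to check that $\sum_{j=1}^t (a-k_j)d_j = \sum_{i=0}^{a-1}(a-i)\deg\hat f_i$, which is where the real work lies: one must track, for each irreducible factor $m_j$, the unique "level" $i$ at which it appears among the quotients $\hat f_0, \hat f_1, \dots, \hat f_{a-1}$. Using the nesting of the $S_i$ one gets $m_j\mid f_i \iff i\le k_j-1$; hence $m_j\mid\hat f_0 = (x^n-1)/f_0 \iff m_j\nmid f_0 \iff k_j = 0$, and for $1\le i\le a-1$, $m_j\mid\hat f_i = f_{i-1}/f_i \iff (m_j\mid f_{i-1}$ and $m_j\nmid f_i) \iff i = k_j$. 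Because $x^n-1$ is squarefree this says $\hat f_i = \prod_{j:\,k_j=i} m_j$, so $\deg\hat f_i = \sum_{j:\,k_j=i} d_j$; summing with weights $a-i$ and noting that the indices with $k_j=a$ contribute $0$ on both sides gives $\sum_{i=0}^{a-1}(a-i)\deg\hat f_i = \sum_{j=1}^t (a-k_j)d_j = \log_p|C|$, as claimed. (Alternatively, one can induct on $a$: reduction modulo $p$ maps $C$ onto $\langle\bar f_0\rangle$, of size $p^{\deg\hat f_0}$, with kernel $p\cdot\langle f_1, pf_2,\dots,p^{a-2}f_{a-1}\rangle$, a code of the same shape over $\mathbb{Z}_{p^{a-1}}$; but the CRT route avoids having to analyze this kernel.)
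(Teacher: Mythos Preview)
The paper does not supply its own proof of this corollary; it is quoted verbatim as \cite[Theorem~3.4]{dinh} and used as a black box. Your CRT argument is correct and is in fact the standard route taken in the cited reference: decompose $\mathbb{Z}_{p^a}[x]/\langle x^n-1\rangle$ into a product of Galois rings $R_j$, observe that the image of $C$ in each $R_j$ is $p^{k_j}R_j$ for the level $k_j$ at which the irreducible factor $m_j$ first drops out of the chain $f_0\supseteq f_1\supseteq\cdots$, and then match the exponent count against the degrees of the $\hat f_i$. The bookkeeping identity $\hat f_i=\prod_{k_j=i}m_j$ is exactly what makes the two expressions for $\log_p|C|$ agree, and your verification of it is clean. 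There is nothing to compare against in the present paper.
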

\begin{lemma}\cite[Lemma 2.2]{BCV18}
Let $C$ be a cyclic code of length $n$ over $\mathbb{Z}_{p^a}$. Let $f_0,f_1,\cdots, f_{a-1}$ polynomials in $\mathbb{Z}_{p^a}[x]$ such that $C= \langle f_0, pf_1,p^2f_2, \cdots, p^{a-1}f_{a-1}\rangle$ and $f_{a-1}|f_{a-2}|\cdots|f_1|f_0|(x^n-1)$, and let $f=f_0+pf_1+p^2f_2+ \cdots+ p^{a-1}f_{a-1}$. Then,
\begin{enumerate}
    \item $p^{a-1}f= p^{a-1}f_{a-1}\frac{F}{\hat{f_0}},$
    \item $p^{a-1-i}(\prod_{k=0}^{i-1}\hat{f_k})f= p^{a-1}f_{a-1}\frac{F}{\hat{f_i}},$ for $i= 1,2,\cdots, a-1.$
\end{enumerate}
\end{lemma}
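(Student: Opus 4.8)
\noindent\emph{Proof idea.} The plan is to treat the two identities as a single mechanical computation: expand $f=\sum_{j=0}^{a-1}p^{j}f_{j}$, multiply it by the prescribed factor, and isolate the one summand that survives. Three elementary observations do all the work. First, a telescoping identity: since $\hat f_0=(x^{n}-1)/f_0$ and $\hat f_i=f_{i-1}/f_i$, one has $\hat f_0\hat f_1\cdots\hat f_{i-1}=(x^{n}-1)/f_{i-1}$ for every $i$, and in particular $F=\hat f_0\hat f_1\cdots\hat f_{a-1}=(x^{n}-1)/f_{a-1}$; hence $F/\hat f_0=f_0/f_{a-1}$ and, for $1\le i\le a-1$, $F/\hat f_i=(x^{n}-1)f_i/(f_{a-1}f_{i-1})$, both honest polynomials by the chain $f_{a-1}\mid\cdots\mid f_0\mid(x^{n}-1)$. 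Second, multiplying by $p^{m}$ annihilates every term $p^{j}f_{j}$ with $j+m\ge a$, because $p^{a}=0$. Third, if $g\in\mathbb{Z}_{p^a}[x]$ is divisible by $f_{i-1}$, then $\frac{x^{n}-1}{f_{i-1}}\,g$ is a multiple of $x^{n}-1$ and therefore vanishes in $\mathbb{Z}_{p^a}[x]/\langle x^{n}-1\rangle$.

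For the first identity, multiplying $f$ by $p^{a-1}$ kills every term but $j=0$, so $p^{a-1}f=p^{a-1}f_0$; on the other side $p^{a-1}f_{a-1}\,\frac{F}{\hat f_0}=p^{a-1}f_{a-1}\cdot\frac{f_0}{f_{a-1}}=p^{a-1}f_0$, and the two coincide.

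For the second identity, fix $1\le i\le a-1$ and use $\prod_{k=0}^{i-1}\hat f_k=(x^{n}-1)/f_{i-1}$ to write
\[
p^{a-1-i}\Bigl(\prod_{k=0}^{i-1}\hat f_k\Bigr)f=\sum_{j=0}^{a-1}p^{\,a-1-i+j}\,\frac{x^{n}-1}{f_{i-1}}\,f_j .
\]
The summands with $j\ge i+1$ vanish since then $a-1-i+j\ge a$; the summands with $j\le i-1$ vanish since $f_{i-1}\mid f_j$ makes $\frac{x^{n}-1}{f_{i-1}}f_j$ a multiple of $x^{n}-1$. Only $j=i$ remains, giving $p^{a-1}\frac{(x^{n}-1)f_i}{f_{i-1}}$, which equals $p^{a-1}f_{a-1}\cdot\frac{(x^{n}-1)f_i}{f_{a-1}f_{i-1}}=p^{a-1}f_{a-1}\frac{F}{\hat f_i}$, as required. (The first identity is the case $i=0$ of this, with the empty product read as $1$ and $f_{-1}:=x^{n}-1$.)

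There is no deep obstacle: the whole argument is bookkeeping. The points that need attention are, first, organising the expanded sum so it is transparent which summands survive — the mod-$p^{a}$ truncation removes the high-index terms, while the divisibility $f_{i-1}\mid f_j$ for $j<i$, which is the real reason the right-hand side is so compact, removes the low-index ones — and, second, verifying that the quotients appearing (notably $F/\hat f_i$ and $(x^{n}-1)f_i/f_{i-1}$) are genuine polynomials, which is immediate from the divisibility chain but is what makes the manipulations legitimate in the quotient ring.
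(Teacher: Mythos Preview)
Your argument is correct and complete. The paper does not supply its own proof of this lemma; it is quoted verbatim from \cite{BCV18}, so there is no in-paper proof to compare against. Your direct expansion---kill the high-index terms via $p^{a}=0$, kill the low-index terms via $f_{i-1}\mid f_j$ so that $\tfrac{x^n-1}{f_{i-1}}f_j\in\langle x^n-1\rangle$, and identify the lone survivor $j=i$---is exactly the intended computation.

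One point worth flagging: you write $F=\hat f_0\hat f_1\cdots\hat f_{a-1}=(x^n-1)/f_{a-1}$, whereas the paper defines $F=\prod_{i=1}^{a-1}\hat f_i=f_0/f_{a-1}$, omitting $\hat f_0$. Your reading is the correct one. With the paper's product the expression $F/\hat f_0$ in part~(1) is not even a polynomial, and in part~(2) the identity fails already for $p=2$, $a=2$, $n=3$, $f_0=x^2+x+1$, $f_1=1$: there $\hat f_0\cdot f=2x+2$ in $\mathbb{Z}_4[x]/\langle x^3-1\rangle$, while $2f_1\cdot F/\hat f_1=2$ under the paper's $F$. So the lower index in $F=\prod_{i=1}^{a-1}\hat f_i$ is a typo for $i=0$, and your telescoping identity $\prod_{k=0}^{i-1}\hat f_k=(x^n-1)/f_{i-1}$ (hence $F=(x^n-1)/f_{a-1}$) is what makes both the lemma and the paper's subsequent claim $Ff=0$ true.
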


The following result is well known for cyclic codes of length $n$ over $\mathbb{Z}_{p^a}$, where $p$ is a prime not dividing $n$. In \cite{Calder1995,kanwar}, it has been proved that $\mathbb{Z}_{p^a}[x]/\langle x^n-1\rangle$ is a principal ideal ring.
\begin{theorem}\label{th2.4}\cite{Calder1995}
Let $C$ be a cyclic code of length $n$ over $\mathbb{Z}_{p^a}$.  Let $f_0,f_1,\cdots, f_{a-1}$ polynomials in $\mathbb{Z}_{p^a}[x]$ such that $C= \langle f_0, pf_1,p^2f_2, \cdots, p^{a-1}f_{a-1}\rangle$ and $f_{a-1}|f_{a-2}|\cdots|f_1|f_0|(x^n-1)$. Then $C=\langle f \rangle$, where $f= f_0+pf_1+p^2f_2+ \cdots+ p^{a-1}f_{a-1}$. 
\end{theorem}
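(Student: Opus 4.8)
The plan is to prove the two inclusions $\langle f\rangle\subseteq C$ and $C\subseteq\langle f\rangle$ separately (the result itself is classical, cf.\ \cite{Calder1995}, but the tools just assembled give a short self-contained argument). The inclusion $\langle f\rangle\subseteq C$ is immediate, since $f=f_0+pf_1+\cdots+p^{a-1}f_{a-1}$ lies in $C$ and $C$ is an ideal of $\mathbb Z_{p^a}[x]/\langle x^n-1\rangle$. For the reverse inclusion I would induct on $a$, the case $a=1$ being trivial because then $C=\langle f_0\rangle=\langle f\rangle$.

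For the inductive step, assume the statement for cyclic codes of length $n$ over $\mathbb Z_{p^{a-1}}$. The heart of the argument is to show $p^{a-1}f_{a-1}\in\langle f\rangle$. Because $\gcd(n,p)=1$, the polynomial $x^n-1=f_{a-1}\hat f_{a-1}\hat f_{a-2}\cdots\hat f_1\hat f_0$ is squarefree, so $f_{a-1},\hat f_0,\hat f_1,\dots,\hat f_{a-1}$ are pairwise coprime. By the preceding lemma, for each $i\in\{0,1,\dots,a-1\}$ the element $p^{a-1-i}\bigl(\prod_{k=0}^{i-1}\hat f_k\bigr)f$ lies in $\langle f\rangle$ and equals $p^{a-1}f_{a-1}$ times the product of the $\hat f_k$ with $k\neq i$ (for $i=0$ this element is $p^{a-1}f$). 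Pairwise coprimality makes the overall gcd of these $a$ products equal to $1$; taking a $\mathbb Z_{p^a}[x]$-combination of them equal to $1$ and multiplying by $p^{a-1}f_{a-1}$ then exhibits $p^{a-1}f_{a-1}$ as an element of $\langle f\rangle$. In particular $\langle p^{a-1}f_{a-1}\rangle\subseteq\langle f\rangle$, and Corollary \ref{cor1} applied to this cyclic code gives $|\langle p^{a-1}f_{a-1}\rangle|=p^{n-\deg f_{a-1}}$.

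Now reduce modulo $p^{a-1}$: let $\pi\colon R:=\mathbb Z_{p^a}[x]/\langle x^n-1\rangle\to\bar R:=\mathbb Z_{p^{a-1}}[x]/\langle x^n-1\rangle$ be the natural projection. Then $\pi(C)=\langle\bar f_0,p\bar f_1,\dots,p^{a-2}\bar f_{a-2}\rangle$ is a cyclic code over $\mathbb Z_{p^{a-1}}$ in standard form and $\pi(f)=\bar f_0+p\bar f_1+\cdots+p^{a-2}\bar f_{a-2}$, so the induction hypothesis gives $\pi(C)=\langle\pi(f)\rangle=\pi(\langle f\rangle)$. Given $c\in C$, choose $h$ with $\pi(c)=\pi(fh)$; then $c-fh\in C\cap\ker\pi=C\cap p^{a-1}R$. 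Since $\langle p^{a-1}f_{a-1}\rangle\subseteq C\cap p^{a-1}R$, and since comparing the size formulas of Corollary \ref{cor1} for $C$ and for $\pi(C)$ (via $|C|=|\pi(C)|\,|C\cap p^{a-1}R|$) yields $|C\cap p^{a-1}R|=p^{n-\deg f_{a-1}}$, we conclude $C\cap p^{a-1}R=\langle p^{a-1}f_{a-1}\rangle\subseteq\langle f\rangle$. Hence $c=fh+(c-fh)\in\langle f\rangle$, so $C\subseteq\langle f\rangle$ and $C=\langle f\rangle$.

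The only real obstacle is the step $p^{a-1}f_{a-1}\in\langle f\rangle$; the identities of the preceding lemma were designed precisely for this, so once the coprimality of the $\hat f_i$ is recorded the rest is bookkeeping with Corollary \ref{cor1}. A more conceptual alternative avoids the induction altogether: factoring $x^n-1=\prod_j g_j$ into pairwise coprime basic irreducible polynomials gives $R\cong\prod_j\mathbb Z_{p^a}[x]/\langle g_j\rangle$ as a product of finite chain rings, and in the $j$-th component both $C$ and $\langle f\rangle$ reduce to the ideal generated by $p^{\nu_j}$, where $\nu_j=\min\{i:g_j\nmid f_i\}$ (because $f_i$ is a unit modulo $g_j$ exactly when $g_j\nmid f_i$); as these ideals coincide for every $j$, $C=\langle f\rangle$.
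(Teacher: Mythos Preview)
The paper does not actually prove Theorem \ref{th2.4}; it is quoted from \cite{Calder1995} without argument, so there is no ``paper's own proof'' to compare against. Your task was therefore to supply a proof from scratch, and both routes you sketch are sound.

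The inductive argument is correct. The key identity you extract from the preceding lemma, namely that for each $i\in\{0,\dots,a-1\}$ one has
\[
p^{a-1-i}\Bigl(\prod_{k=0}^{i-1}\hat f_k\Bigr)f \;=\; p^{a-1}f_{a-1}\prod_{\substack{0\le k\le a-1\\k\neq i}}\hat f_k
\]
in $R=\mathbb Z_{p^a}[x]/\langle x^n-1\rangle$, is exactly what the lemma records (the paper's display of item (1) contains a typographical slip in the factor $F/\hat f_0$; your reading is the intended one). One small point of phrasing: the Bezout combination $\sum_i u_i\prod_{k\ne i}\hat f_k=1$ holds a priori only modulo $\prod_k\hat f_k$, not in $\mathbb Z_{p^a}[x]$ on the nose; but since $f_{a-1}\prod_k\hat f_k=x^n-1\equiv 0$ in $R$, multiplying through by $p^{a-1}f_{a-1}$ still yields $p^{a-1}f_{a-1}\in\langle f\rangle$, so your conclusion stands. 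The size comparison via Corollary \ref{cor1} to identify $C\cap p^{a-1}R$ with $\langle p^{a-1}f_{a-1}\rangle$ is clean and does not presuppose the theorem being proved.

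Your CRT alternative is the standard conceptual proof and is in fact closer in spirit to the original argument in \cite{Calder1995}: decompose $R\cong\prod_j\mathbb Z_{p^a}[x]/\langle g_j\rangle$ over the basic irreducible factors $g_j$ of $x^n-1$, observe that each factor is a chain ring, and read off that both $C$ and $\langle f\rangle$ project to $\langle p^{\nu_j}\rangle$ in the $j$-th component. This avoids the induction entirely and is arguably the more natural route; your first proof has the virtue of using only the tools already assembled in the paper.
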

It is a well known result that $\mathbb{Z}_{p^a}[x]/\langle x^n-1\rangle$ is a $\mathbb{Z}_{p^a}[x]$-module. Furthermore, we know that $\mathbb{Z}^{n}_{p^a}$ is a $\mathbb{Z}_{p^a}$-module. Now we  discuss the set of generator polynomials for a cyclic code $C$ of length $n$ as a $\mathbb{Z}_{p^a}$-module in terms of polynomials.

\begin{theorem}\label{th2.5} \cite[Theorem 2.5]{BCV18}
Let $C$ be a cyclic code of length $n$ over $\mathbb{Z}_{p^a}$ and $C=\langle f \rangle = \langle f_0+pf_1+p^2f_2+ \cdots+ p^{a-1}f_{a-1}\rangle $ with $f_{a-1}|f_{a-2}|\cdots|f_1|f_0|(x^n-1)$. We define the following sets 
\begin{align*}
    S_0 &= \{x^if\}_{i=0}^{\deg(\hat{f}_0)-1}= \{x^i(f_0+pf_1+p^2f_2+ \cdots+ p^{a-1}f_{a-1})\}_{i=0}^{\deg(\hat{f}_0)-1}\\
    S_1   &= \{x^i\hat{f}_0f\}_{i=0}^{\deg(\hat{f}_1)-1}= \{x^i(pf_1\hat{f}_0+p^2f_2\hat{f}_0+ \cdots+ p^{a-1}f_{a-1}\hat{f}_0)\}_{i=0}^{\deg(\hat{f}_1)-1}\\
    &\vdots\\
    S_k   &= \Big\{x^i(\prod_{t=0}^{k-1}\hat{f}_t)f\Big\}_{i=0}^{\deg(\hat{f}_k)-1}\\
    &\vdots\\
    S_{a-1} &= \Big\{x^i(\prod_{t=0}^{a-2}\hat{f}_t)f\Big\}_{i=0}^{\deg(\hat{f}_{a-1})-1}.\\
    \end{align*}
    Then,
    \begin{equation*}
        S=\bigcup_{k=0}^{a-1}S_k
    \end{equation*}
    forms a minimal generating set for the cyclic code $C$ as a $\mathbb{Z}_{p^a}$-module.
\end{theorem}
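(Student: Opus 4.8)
\medskip
\noindent\textbf{Proof strategy.}

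The plan is to prove three things in sequence: that $\langle S\rangle_{p^a}\subseteq C$; that $C\subseteq\langle S\rangle_{p^a}$; and that $|S|$ equals the least possible number of $\mathbb{Z}_{p^a}$-generators of $C$, so that $S$ is minimal. Throughout I write $g_0=f$ and $g_k=\big(\prod_{t=0}^{k-1}\hat{f}_t\big)f$ for $1\le k\le a-1$, so that $S_k=\{x^ig_k\}_{i=0}^{\deg(\hat{f}_k)-1}$ and $g_{k+1}=\hat{f}_k\,g_k$. The first inclusion is immediate: each $g_k$ is a $\mathbb{Z}_{p^a}[x]$-multiple of $f$, hence lies in the ideal $C=\langle f\rangle$, which is closed under $\mathbb{Z}_{p^a}$-linear combinations; therefore $\langle S\rangle_{p^a}\subseteq C$.

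For the reverse inclusion I would use iterated Euclidean division. One may assume $f_0,\dots,f_{a-1}$ are the monic divisors of $x^n-1$ in the chain, so that $\hat{f}_0=(x^n-1)/f_0$ and each $\hat{f}_i=f_{i-1}/f_i$ are monic and division in $\mathbb{Z}_{p^a}[x]$ by them is legitimate. Given $c=rf\in C$ with $r$ a polynomial, write $r=q_0\hat{f}_0+r_0$, then $q_0=q_1\hat{f}_1+r_1$, and in general $q_{j-1}=q_j\hat{f}_j+r_j$ with $\deg r_j<\deg(\hat{f}_j)$ for $j=0,1,\dots,a-1$ (where $q_{-1}$ denotes $r$). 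Substituting repeatedly and using $g_{j+1}=\hat{f}_j g_j$ gives $c=\sum_{j=0}^{a-1}r_jg_j+q_{a-1}g_a$, where $g_a=\big(\prod_{t=0}^{a-1}\hat{f}_t\big)f$. Since $\prod_{t=1}^{a-1}\hat{f}_t=f_0/f_{a-1}$, one gets $\prod_{t=0}^{a-1}\hat{f}_t=(x^n-1)/f_{a-1}$, and because $f_{a-1}$ divides every $f_j$, each summand of $\big((x^n-1)/f_{a-1}\big)f$ is a multiple of $x^n-1$; hence $g_a=0$ in $\mathbb{Z}_{p^a}[x]/\langle x^n-1\rangle$. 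Finally, the bound $\deg r_j<\deg(\hat{f}_j)$ makes each $r_jg_j$ a $\mathbb{Z}_{p^a}$-linear combination of $S_j$, so $c\in\langle S\rangle_{p^a}$ and $C=\langle S\rangle_{p^a}$.

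For minimality, I would observe that $g_k=p^k\big(\prod_{t=0}^{k-1}\hat{f}_t\big)f_k+p^{k+1}(\cdots)$, because $\prod_{t=0}^{k-1}\hat{f}_t$ annihilates each term $p^jf_j$ with $j<k$; hence $p^{a-k}g_k=0$, and since multiplication by $x^i$ is a group automorphism of $\mathbb{Z}_{p^a}[x]/\langle x^n-1\rangle$, each $x^ig_k$ has additive order dividing $p^{a-k}$, i.e. $|\langle x^ig_k\rangle_{p^a}|\le p^{a-k}$. Consequently
\[
|C|=\Big|{\textstyle\sum_{k,i}\langle x^ig_k\rangle_{p^a}}\Big|\le\prod_{k=0}^{a-1}\big(p^{a-k}\big)^{\deg(\hat{f}_k)}=p^{\sum_{k=0}^{a-1}(a-k)\deg(\hat{f}_k)}=|C|
\]
by Corollary~\ref{cor1}, so all of these are equalities. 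This forces the sum $\sum_{k,i}\langle x^ig_k\rangle_{p^a}$ to be direct with each $x^ig_k$ of order exactly $p^{a-k}$; in particular the elements of $S$ are pairwise distinct, $C\cong\bigoplus_{k=0}^{a-1}\big(\mathbb{Z}_{p^{a-k}}\big)^{\deg(\hat{f}_k)}$, and reducing modulo $p$ gives $\dim_{\mathbb{F}_p}C/pC=\sum_{k=0}^{a-1}\deg(\hat{f}_k)=|S|$. Since any $\mathbb{Z}_{p^a}$-generating set of $C$ maps onto an $\mathbb{F}_p$-spanning set of $C/pC$, no generating set can have fewer than $|S|$ elements, so $S$ is minimal.

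I expect the heart of the argument to be the second step: one must choose the $f_i$ monic for the iterated division to be valid, keep the successive remainders in the degree range that guarantees each $r_jg_j$ actually lies in $\langle S_j\rangle_{p^a}$, and verify the telescoping identity $\prod_{t=0}^{a-1}\hat{f}_t=(x^n-1)/f_{a-1}$ that makes the leftover term $g_a$ collapse to zero. Once $C=\langle S\rangle_{p^a}$ is established, minimality follows almost formally from the cardinality count in Corollary~\ref{cor1}.
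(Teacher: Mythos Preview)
The paper does not supply its own proof of this statement; it is quoted from \cite{BCV18}. Your argument is correct, and the spanning step---iterated Euclidean division of the multiplier by the monic polynomials $\hat f_0,\hat f_1,\dots,\hat f_{a-1}$, with the leftover term killed by the telescoping identity $\prod_{t=0}^{a-1}\hat f_t=(x^n-1)/f_{a-1}$---is exactly the mechanism the paper itself uses in the proof of Theorem~\ref{th5.1}, which is the three-block analogue of this result. Your minimality argument (bounding the additive order of each $x^ig_k$ by $p^{a-k}$, matching the product of these bounds against the cardinality in Corollary~\ref{cor1} to force a direct-sum decomposition, and then reading off $\dim_{\mathbb{F}_p}C/pC=\sum_k\deg\hat f_k=|S|$) is a clean finish that the paper does not make explicit anywhere; in Theorem~\ref{th5.1} minimality is asserted rather than argued, so your third step actually fills a small gap in the exposition.
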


\section{$\mathbb{Z}_{p^r}\mathbb{Z}_{p^r}\mathbb{Z}_{p^s}$-Additive  Codes}
 Let $\mathbb{Z}_{p^r}$ and $\mathbb{Z}_{p^s}$ be two rings of integer modulo $p^r$ and $p^s$, respectively, with $r\leq s$ and $p$ prime. We can prove that both the rings $\mathbb{Z}_{p^r}$ and $\mathbb{Z}_{p^s}$ have the same residue field $\mathbb{Z}_p$. Any element $v$ of $\mathbb{Z}_{p^r}$  can be written as $v= v_0+pv_1+p^2v_2+\cdots+p^{r-1}v_{r-1}$ uniquely, and an element $u$ of $\mathbb{Z}_{p^s}$ can be written as $u= u_0+pu_1+p^2u_2+\cdots+p^{s-1}u_{s-1}$ uniquely, where $u_i,v_j\in \mathbb{Z}_p$. This expansion is called $p$-adic expansion \cite{Calder1995}.\par
 Then we can have a surjective ring homomorphism from $\mathbb{Z}_{p^r}$ to $\mathbb{Z}_{p^s}$ such that
 \begin{align*}
     \phi&: \mathbb{Z}_{p^s}\longrightarrow \mathbb{Z}_{p^r}\\
          & u\longmapsto u ~\mbox{mod}~ p^r.
    \end{align*}
 For $i\geq r$, we note that $\phi(p^i)=0.$ For $u\in \mathbb{Z}_{p^s}$ and $v\in \mathbb{Z}_{p^r}$, we define the multiplication denoted by $*$ as follows:  $u*v= \phi(u)v$. Then, $\mathbb{Z}_{p^r}$ is a $\mathbb{Z}_{p^s}$-module with the multiplication $*$ given by $\phi$. Since the ring $\mathbb{Z}_{p^r}$ is a commutative ring, then the multiplication $*$ has the commutative property. We can generalize the above defined multiplication $*$ over the ring $\mathbb{Z}^{\alpha}_{p^r}\times \mathbb{Z}^{\beta}_{p^r}\times \mathbb{Z}^{\gamma}_{p^s}$ as follows. Let $b$ be any element of $\mathbb{Z}_{p^s}$ and $\mathbf{w}= (u~|~v~|~w)=(u_0,u_1,\cdots,u_{\alpha-1}~|~v_0,v_1,\cdots,v_{\beta-1}~|~w_0,w_1,\cdots,w_{\gamma-1})\in \mathbb{Z}^{\alpha}_{p^r}\times \mathbb{Z}^{\beta}_{p^r}\times \mathbb{Z}^{\gamma}_{p^s}$. Then,
 \[b*\mathbf{w}=(\phi(b)u_0,\phi(b)u_1,\cdots,\phi(b)u_{\alpha-1}~|~\phi(b)v_0,\phi(b)v_1,\cdots,\phi(b)v_{\beta-1}~|~bw_0,bw_1,\cdots,bw_{\gamma-1}).\]
 With that external multiplication, the ring $\mathbb{Z}^{\alpha}_{p^r}\times \mathbb{Z}^{\beta}_{p^r}\times \mathbb{Z}^{\gamma}_{p^s}$ forms a $\mathbb{Z}_{p^s}$-module. Now, we define additive codes as submodules.
 \begin{definition}
 A non-empty subset $C$ of $\mathbb{Z}^{\alpha}_{p^r}\times \mathbb{Z}^{\beta}_{p^r}\times \mathbb{Z}^{\gamma}_{p^s}$ is said to be a $\mathbb{Z}_{p^r}\mathbb{Z}_{p^r}\mathbb{Z}_{p^s}$-additive code of block length $(\alpha,\beta,\gamma)$ if $C$ is a $\mathbb{Z}_{p^s}$-submodule of $\mathbb{Z}^{\alpha}_{p^r}\times \mathbb{Z}^{\beta}_{p^r}\times \mathbb{Z}^{\gamma}_{p^s}$.
 \end{definition}
 Let $C_{\alpha}$ be the canonical projection of a $\mathbb{Z}_{p^r}\mathbb{Z}_{p^r}\mathbb{Z}_{p^s}$-additive code $C$ on first $\alpha$ coordinates, $C_{\beta}$ on the next $\beta$ coordinates and $C_{\gamma}$ on the last $\gamma$ coordinates. Then, clearly $C_{\alpha}$ and $C_{\beta}$ are linear codes of length $\alpha$ and $\beta$ over $\mathbb{Z}_{p^r}$, respectively, and $C_{\gamma}$ is a linear code of length $\gamma$ over $\mathbb{Z}_{p^s}$. The code $C$ is said to be  separable if $C$ is the direct product of $C_{\alpha}$, $C_{\beta}$ and $C_{\gamma}$ i.e. $C=C_{\alpha}\times C_{\beta}\times C_{\gamma}$.\par
 Since we have assumed that $r\leq s$, we can consider the inclusion map
 \begin{align*}
     \epsilon&: \mathbb{Z}_{p^r} \hookrightarrow \mathbb{Z}_{p^s}\\
                & a\longmapsto a.
    \end{align*}
 Let $\mathbf{w}, \mathbf{w'}\in \mathbb{Z}^{\alpha}_{p^r}\times \mathbb{Z}^{\beta}_{p^r}\times \mathbb{Z}^{\gamma}_{p^s}$, then we define the inner product as follows
 \begin{equation*}
     \mathbf{w}\cdot\mathbf{w'}= p^{s-r}\sum_{i=0}^{\alpha-1}\epsilon(u_iu'_i)+p^{s-r}\sum_{i=0}^{\beta-1}\epsilon(v_iv'_i)+\sum_{i=0}^{\gamma-1}w_iw'_i\in \mathbb{Z}_{p^s},
 \end{equation*}
 and the dual of $\mathbb{Z}_{p^r}\mathbb{Z}_{p^r}\mathbb{Z}_{p^s}$-additive code $C$ is defined as follows
 \begin{equation*}
     C^{\perp}= \{\mathbf{w'}\in \mathbb{Z}_{p^r}\times \mathbb{Z}_{p^r}\times \mathbb{Z}_{p^s} |~ \mathbf{w}\cdot\mathbf{w'}=0, \forall \mathbf{w}\in C \}.
 \end{equation*}
 Let $C$ be a separable code in $\mathbb{Z}^{\alpha}_{p^r}\times \mathbb{Z}^{\beta}_{p^r}\times \mathbb{Z}^{\gamma}_{p^s}$, then $C^{\perp}$ is also a separable code and $C^{\perp}=C_{\alpha}^{\perp}\times C_{\beta}^{\perp}\times C_{\gamma}^{\perp}.$ $C$ is called self-orthogonal if $C^{\perp} \subseteq C$ and self-dual if $C^{\perp}= C$.
 \section{The Structure of $\mathbb{Z}_{p^r}\mathbb{Z}_{p^r}\mathbb{Z}_{p^s}$-Additive Cyclic Codes}
 This section is dedicated to the study of additive cyclic codes. In this section, we define $\mathbb{Z}_{p^r}\mathbb{Z}_{p^r}\mathbb{Z}_{p^s}$-additive cyclic codes and discuss structural properties of these codes. Further, we discuss the generator polynomials of this family of codes and the structure of separable codes has also been discussed.
 \begin{definition}
 Let $C\subseteq \mathbb{Z}^{\alpha}_{p^r}\times \mathbb{Z}^{\beta}_{p^r}\times \mathbb{Z}^{\gamma}_{p^s}$ be a $\mathbb{Z}_{p^r}\mathbb{Z}_{p^r}\mathbb{Z}_{p^s}$-additive codes of block length $(\alpha, \beta, \gamma)$. The code $C$ is said to be cyclic if for any
 \[(u_0,u_1,\cdots,u_{\alpha-1}~|~v_0,v_1,\cdots,v_{\beta-1}~|~w_0,w_1,\cdots,w_{\gamma-1})\in C,\]
 its cyclic shift
 \[(u_{\alpha-1},u_0,u_1,\cdots,u_{\alpha-2}~|~v_{\beta-1},v_0,v_1,\cdots,v_{\beta-2}~|~w_{\gamma-1},w_0,w_1,\cdots,w_{\gamma-2})\in C.\]
 \end{definition}
 Let $\mathbf{w}=(u_0,u_1,\cdots,u_{\alpha-1}~|~v_0,v_1,\cdots,v_{\beta-1}~|~w_0,w_1,\cdots,w_{\gamma-1})\in C$ and $i$ be a positive integer. We define the $i^{th}$ shift of $\mathbf{w}$ by $\mathbf{w}^{(i)}= (u_{0-i},u_{1-i},\cdots,u_{\alpha-1-i}~|~v_{0-i},v_{1-i},\cdots,v_{\beta-1-i}~|~w_{0-i},w_{1-i},\cdots,$ $w_{\gamma-1-i})$, where we read the subscripts modulo $\alpha$, $\beta$ and $\gamma$, respectively. Now if $C\subseteq \mathbb{Z}^{\alpha}_{p^r}\times \mathbb{Z}^{\beta}_{p^r}\times \mathbb{Z}^{\gamma}_{p^s}$ is a $\mathbb{Z}_{p^r}\mathbb{Z}_{p^r}\mathbb{Z}_{p^s}$-additive cyclic code, then $C_{\alpha}$, $C_{\beta}$ and $C_{\gamma}$ are also cyclic code over $\mathbb{Z}_{p^r}$, $\mathbb{Z}_{p^r}$ and $\mathbb{Z}_{p^s}$ of length $\alpha$, $\beta$ and $\gamma$, respectively.\par
 We note that the definition of $\mathbb{Z}_{p^r}\mathbb{Z}_{p^r}\mathbb{Z}_{p^s}$-additive cyclic codes is well defined for both $\mathbb{Z}_{p^r}$ and $\mathbb{Z}_{p^s}$ different rings, since the elements of first $\alpha+\beta$ coordinates and last $\gamma$ coordinates belongs to the rings $\mathbb{Z}_{p^r}$ and $\mathbb{Z}_{p^s}$, respectively. We present some remarks for particular cases.
 \begin{remark}\em 
 If $\gamma=0$ then $\mathbb{Z}_{p^r}\mathbb{Z}_{p^r}\mathbb{Z}_{p^s}$-additive cyclic codes are known as double cyclic codes over $\mathbb{Z}_{p^r}$ \cite{bor2014, shi2105}.
 \end{remark}
 
 \begin{remark}\em 
 In particular case, if we have $r=s$ then the $\mathbb{Z}_{p^r}\mathbb{Z}_{p^r}\mathbb{Z}_{p^s}$-additive cyclic codes are known in the literature \cite{mosta, Gao2016} as triple cyclic codes.
 \end{remark}
 \begin{remark}\em
 If $\alpha=0$ or ($\beta=0$) then $\mathbb{Z}_{p^r}\mathbb{Z}_{p^r}\mathbb{Z}_{p^s}$-additive cyclic codes are known as $\mathbb{Z}_{p^r}\mathbb{Z}_{p^s}$-additive cyclic codes studied in \cite{BCV18}.
 \end{remark}
 \par
 \noindent We denote the ring $\mathbb{Z}_{p^r}/\langle x^{\alpha}-1\rangle \times\mathbb{Z}_{p^r}/\langle x^{\beta}-1\rangle \times \mathbb{Z}_{p^s}/\langle x^{\gamma}-1\rangle $ by $R_{\alpha,\beta,\gamma}$. There is a bijective map between $\mathbb{Z}_{p^r}\times \mathbb{Z}_{p^r}\times \mathbb{Z}_{p^s}$ and $R_{\alpha,\beta,\gamma}$, where $\mathbf{w}=(u_0,u_1,\cdots,u_{\alpha-1}~|~v_0,v_1,\cdots,v_{\beta-1}~|~w_0,w_1,\cdots,w_{\gamma-1})$ maps to $\mathbf{w}(x)=(u_0+u_1x+\cdots+u_{\alpha-1}x^{\alpha-1}~|~v_0+v_1x+\cdots+v_{\beta-1}x^{\beta-1}~|~w_0+w_1x+\cdots+w_{\gamma-1}x^{\gamma-1}).$\par
 We can extend both the maps $\phi$ and $\epsilon$ to the polynomial rings $\mathbb{Z}_{p^s}[x]$ and $\mathbb{Z}_{p^r}[x]$ by applying these maps to each of the coefficients of the given polynomial. We define an external operation $*$ so that $R_{\alpha,\beta,\gamma}$ is a $\mathbb{Z}_{p^s}$-module. 
 \begin{definition}
 Define the operation $*:\mathbb{Z}_{p^s}[x]\times R_{\alpha,\beta,\gamma}\rightarrow R_{\alpha,\beta,\gamma} $ as follows
 \[\eta(x)*(p(x)~|~q(x)~|~r(x))= (\phi(\eta(x))p(x)~|~\phi(\eta(x))q(x)~|~\eta(x)r(x)),\]
 where $\eta(x)\in \mathbb{Z}_{p^s}$ and $(p(x)~|~q(x)~|~r(x))\in R_{\alpha,\beta,\gamma}$.
 \end{definition}
 With this external operation $*$ and usual addition, the ring $R_{\alpha,\beta,\gamma}$ is a $\mathbb{Z}_{p^s}$-module. Let $\mathbf{w}(x)=(u(x)~|~v(x)~|~w(x))$ be an element of $R_{\alpha,\beta,\gamma}$. If we operate $x$ on $\mathbf{w}(x)$, we get 
 \begin{align*}
     x*\mathbf{w}(x) &= x*(u(x)~|~v(x)~|~w(x))\\
     &= (u_0x+u_1x^2+\cdots+u_{\alpha-1}x^{\alpha}~|~v_0x+v_1x^2+\cdots+v_{\beta-1}x^{\beta}~|~w_0x+w_1x^2+\cdots+w_{\gamma-1}x^{\gamma})\\
     &= (u_{\alpha-1}+u_0x+\cdots+u_{\alpha-2}x^{\alpha-1}~|~v_{\beta-1}+v_0x+\cdots+v_{\beta-2}x^{\beta-1}~|~w_{\gamma-1}+w_0x+\cdots+w_{\gamma-2}x^{\gamma-1}).
     \end{align*}
     Hence, $x*\mathbf{w}(x)\in R_{\alpha,\beta,\gamma}$ and it corresponds to the vector $\mathbf{w}^{(1)}\in \mathbb{Z}_{p^r}\times\mathbb{Z}_{p^r}\times\mathbb{Z}_{p^s}$. Thus, if we operate $x$ on $\mathbf{w}(x)$ in $R_{\alpha,\beta,\gamma}$ corresponds to a shift of $\mathbf{w}$. In general case, $x^i*\mathbf{w}(x)= \mathbf{w}^{(i)}(x)$ for all $i$.\par
     From the above definition, we have the following  easily proven result.
     \begin{lemma}
     A $\mathbb{Z}_{p^r}\mathbb{Z}_{p^r}\mathbb{Z}_{p^s}$-additive code $C$ is a $\mathbb{Z}_{p^r}\mathbb{Z}_{p^r}\mathbb{Z}_{p^s}$-additive cyclic code of block length $(\alpha,\beta,\gamma)$ if and only if $C$ is a $\mathbb{Z}_{p^s}[x]$-submodule of $R_{\alpha,\beta,\gamma}$.
     \end{lemma}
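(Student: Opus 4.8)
The plan is to establish the equivalence by unpacking both sides in terms of the module action of $\mathbb{Z}_{p^s}[x]$ on $R_{\alpha,\beta,\gamma}$ and relating that action to cyclic shifts of codewords. First I would recall the identification $\mathbf{w}\leftrightarrow\mathbf{w}(x)$ between $\mathbb{Z}^{\alpha}_{p^r}\times\mathbb{Z}^{\beta}_{p^r}\times\mathbb{Z}^{\gamma}_{p^s}$ and $R_{\alpha,\beta,\gamma}$, and the computation already carried out in the excerpt showing that $x*\mathbf{w}(x)=\mathbf{w}^{(1)}(x)$, i.e. multiplication by $x$ under $*$ corresponds exactly to the simultaneous cyclic shift of the three blocks (the shift is consistent because reduction modulo $x^{\alpha}-1$, $x^{\beta}-1$, $x^{\gamma}-1$ implements the wrap-around in each block, and $\phi(x)=x$ so the $*$-action agrees with ordinary multiplication by $x$ on the $\mathbb{Z}_{p^r}$-components). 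By iteration, $x^{i}*\mathbf{w}(x)=\mathbf{w}^{(i)}(x)$ for every $i\geq 0$.

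Next I would prove the two implications. For the forward direction, suppose $C$ is additive cyclic. Additivity gives that $C$ is closed under addition, so it is an additive subgroup; to see it is a $\mathbb{Z}_{p^s}[x]$-submodule I must check closure under $*$-multiplication by an arbitrary $\eta(x)=\sum_i a_i x^i\in\mathbb{Z}_{p^s}[x]$. Since $C$ is already a $\mathbb{Z}_{p^s}$-module (being a $\mathbb{Z}_{p^r}\mathbb{Z}_{p^r}\mathbb{Z}_{p^s}$-additive code) and cyclic, for any $\mathbf{w}(x)\in C$ each $x^{i}*\mathbf{w}(x)=\mathbf{w}^{(i)}(x)\in C$, hence $a_i*(x^{i}*\mathbf{w}(x))\in C$ by the $\mathbb{Z}_{p^s}$-module structure, and summing over $i$ (closure under addition) yields $\eta(x)*\mathbf{w}(x)=\sum_i a_i*(x^{i}*\mathbf{w}(x))\in C$. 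Here I would note the compatibility $(\eta(x)\mu(x))*\mathbf{w}(x)=\eta(x)*(\mu(x)*\mathbf{w}(x))$, which follows because $\phi$ is a ring homomorphism, so the partial sums really do assemble into $\eta(x)*\mathbf{w}(x)$. Therefore $C$ is a $\mathbb{Z}_{p^s}[x]$-submodule of $R_{\alpha,\beta,\gamma}$.

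For the converse, suppose $C$ is a $\mathbb{Z}_{p^s}[x]$-submodule of $R_{\alpha,\beta,\gamma}$. Then in particular $C$ is closed under addition and under $*$-multiplication by constants in $\mathbb{Z}_{p^s}\subseteq\mathbb{Z}_{p^s}[x]$, so $C$ is a $\mathbb{Z}_{p^s}$-submodule, i.e. a $\mathbb{Z}_{p^r}\mathbb{Z}_{p^r}\mathbb{Z}_{p^s}$-additive code. Moreover, for any $\mathbf{w}(x)\in C$ we have $x*\mathbf{w}(x)\in C$; translating back through the bijection, $x*\mathbf{w}(x)=\mathbf{w}^{(1)}(x)$, so the cyclic shift $\mathbf{w}^{(1)}$ of every codeword $\mathbf{w}$ lies in $C$. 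That is precisely the defining property of a $\mathbb{Z}_{p^r}\mathbb{Z}_{p^r}\mathbb{Z}_{p^s}$-additive cyclic code, completing the proof.

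I do not expect a genuine obstacle here; the statement is flagged in the excerpt as "easily proven." The only point requiring a little care is making the forward direction airtight: one must verify that the $*$-action of $\mathbb{Z}_{p^s}[x]$ is associative/compatible in the sense $(\eta\mu)*\mathbf{w}=\eta*(\mu*\mathbf{w})$ and distributes over addition, so that closure under addition plus closure under each monomial action $x^i*(-)$ and each scalar action $a_i*(-)$ genuinely implies closure under the full polynomial action. This rests on $\phi\colon\mathbb{Z}_{p^s}\to\mathbb{Z}_{p^r}$ being a ring homomorphism (hence its extension to polynomial rings respects products), which has already been observed; so the argument goes through with only routine bookkeeping.
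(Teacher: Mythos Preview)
Your proposal is correct and is exactly the routine verification the paper has in mind: the paper states the lemma without proof, calling it ``easily proven,'' and your argument---reducing the $\mathbb{Z}_{p^s}[x]$-action to closure under addition, scalar action, and the shift $x*\mathbf{w}(x)=\mathbf{w}^{(1)}(x)$ already computed in the text---is the standard way to fill that in. No gaps; the care you take with associativity and distributivity of $*$ via the ring homomorphism $\phi$ is appropriate and sufficient.
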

    From the  structure of $\mathbb{Z}_4$-double cyclic codes studied in \cite{shi2105} and $\mathbb{Z}_{p^r}\mathbb{Z}_{p^s}$-additive cyclic codes discussed in \cite{BCV18}, we obtain the structure of double cyclic codes over $\mathbb{Z}_{p^r}$. The following theorem gives the generator polynomials of double cyclic codes over $\mathbb{Z}_{p^r}$ of block length $(\alpha,\beta)$. We consider $\gcd(\alpha, p)=\gcd(\beta,p)=1$.
     \begin{theorem}\label{th46}
     Let $\mathfrak{C}\subseteq \mathbb{Z}^{\alpha}_{p^r}\times \mathbb{Z}^{\beta}_{p^r}$ be a double cyclic code of block length $(\alpha,\beta)$. Then,
     \begin{equation*}
     \mathfrak{C}= \langle(a_0(x)+pa_1(x)+\cdots+p^{r-1}a_{r-1(x)}|0), (l(x)|b_0(x)+pb_1(x)+\cdots+p^{r-1}b_{r-1}(x))\rangle,
 \end{equation*}
 where $l(x)\in \mathbb{Z}_{p^r}[x]/\langle x^{\alpha}-1 \rangle$ and $a_{r-1}(x)|a_{r-2}(x)|\cdots|a_1(x)|a_0(x)|(x^{\alpha}-1)$,  $b_{r-1}(x)|b_{r-2}(x)|\cdots|b_1(x)|b_0(x)$ $|(x^{\beta}-1)$ over $\mathbb{Z}_{p^r}[x]$ .
     \end{theorem}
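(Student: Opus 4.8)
The plan is to realize a double cyclic code $\mathfrak{C}\subseteq \mathbb{Z}_{p^r}^\alpha\times\mathbb{Z}_{p^r}^\beta$ as a $\mathbb{Z}_{p^r}[x]$-submodule of $\mathcal{R}=\mathbb{Z}_{p^r}[x]/\langle x^\alpha-1\rangle\times\mathbb{Z}_{p^r}[x]/\langle x^\beta-1\rangle$ and to exhibit the claimed two generators by projecting onto the second component. Consider the projection $\pi:\mathcal{R}\to\mathbb{Z}_{p^r}[x]/\langle x^\beta-1\rangle$ onto the last $\beta$ coordinates. Its image $\pi(\mathfrak{C})$ is an ideal of $\mathbb{Z}_{p^r}[x]/\langle x^\beta-1\rangle$, i.e. a cyclic code of length $\beta$ over $\mathbb{Z}_{p^r}$; by Theorem \ref{th2.4} (applied with $a=r$ and $n=\beta$, using $\gcd(\beta,p)=1$) it is generated by a single polynomial $b(x)=b_0(x)+pb_1(x)+\cdots+p^{r-1}b_{r-1}(x)$ with $b_{r-1}(x)\mid\cdots\mid b_0(x)\mid(x^\beta-1)$. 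Choose a preimage of the generator of $\pi(\mathfrak{C})$ of the form $(l(x)\mid b(x))\in\mathfrak{C}$ for some $l(x)\in\mathbb{Z}_{p^r}[x]/\langle x^\alpha-1\rangle$. On the other hand, $\ker\pi\cap\mathfrak{C}$ consists of the codewords $(a(x)\mid 0)$ in $\mathfrak{C}$; the set of such $a(x)$ forms an ideal of $\mathbb{Z}_{p^r}[x]/\langle x^\alpha-1\rangle$, hence again by Theorem \ref{th2.4} is generated by $a(x)=a_0(x)+pa_1(x)+\cdots+p^{r-1}a_{r-1}(x)$ with $a_{r-1}(x)\mid\cdots\mid a_0(x)\mid(x^\alpha-1)$.

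Next I would verify that these two elements generate $\mathfrak{C}$. Given any $(u(x)\mid v(x))\in\mathfrak{C}$, since $v(x)\in\pi(\mathfrak{C})=\langle b(x)\rangle$ there is $\lambda(x)\in\mathbb{Z}_{p^r}[x]$ with $v(x)=\lambda(x)b(x)$ in $\mathbb{Z}_{p^r}[x]/\langle x^\beta-1\rangle$. Then $(u(x)\mid v(x))-\lambda(x)\cdot(l(x)\mid b(x))=(u(x)-\lambda(x)l(x)\mid 0)$ lies in $\mathfrak{C}\cap\ker\pi$, hence equals $\mu(x)\cdot(a(x)\mid 0)$ for some $\mu(x)$. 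Therefore $(u(x)\mid v(x))=\mu(x)\cdot(a(x)\mid 0)+\lambda(x)\cdot(l(x)\mid b(x))$, which shows
\[
\mathfrak{C}=\langle(a_0(x)+pa_1(x)+\cdots+p^{r-1}a_{r-1}(x)\mid 0),\ (l(x)\mid b_0(x)+pb_1(x)+\cdots+p^{r-1}b_{r-1}(x))\rangle,
\]
with the stated divisibility conditions. The reverse inclusion is immediate since both generators lie in $\mathfrak{C}$.

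I expect the main obstacle to be none too severe here, since the argument is the standard "kernel–image" decomposition for submodules of a product of two principal ideal rings, combined with the structure theorem for cyclic codes over $\mathbb{Z}_{p^r}$ recalled in Theorem \ref{th2.4}. The one point requiring a little care is that the structure theorem is invoked twice with possibly different lengths ($\alpha$ for the kernel part and $\beta$ for the image), and one must keep the two chains of divisor polynomials $\{a_i\}$ and $\{b_i\}$ separate; also $l(x)$ is merely \emph{some} coset representative and carries no divisibility constraint, so I would emphasize that its choice is not unique. A remark I would add is that, exactly as in the $\mathbb{Z}_{p^r}\mathbb{Z}_{p^s}$ case of \cite{BCV18}, one may further normalize $l(x)$ (for instance by reducing it modulo the annihilator relations coming from $a(x)$ and $b(x)$), but this is not needed for the statement as phrased.
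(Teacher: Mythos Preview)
Your proposal is correct and is precisely the standard kernel--image argument the paper relies on: in the paper this theorem is not proved in detail but is deferred to \cite[Proposition~2]{shi2105}, and the very same projection-onto-the-last-block argument you outline is what the authors spell out immediately afterwards for the three-block case in Theorem~\ref{th 4.4}. So your approach coincides with theirs.
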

   \begin{proof}
   Proof follows from \cite[Proposition 2]{shi2105}.
   \end{proof} 
   
   A similar result can be given for the case when $\alpha=0$.
   \begin{theorem}
    Let $C\subseteq \mathbb{Z}^{\alpha}_{p^r}\times \mathbb{Z}^{\beta}_{p^r}\times \mathbb{Z}^{\gamma}_{p^s}$ be a $\mathbb{Z}_{p^r}\mathbb{Z}_{p^r}\mathbb{Z}_{p^s}$-additive cyclic code. If $\alpha=0$, then $C$ is a $\mathbb{Z}_{p^r}\mathbb{Z}_{p^s}$-additive cyclic code of block length $(\beta,\gamma)$ and
    \begin{equation*}
     C= \langle(b_0(x)+pb_1(x)+\cdots+p^{r-1}b_{r-1(x)}|0), (l_2(x)|g_0(x)+pg_1(x)+\cdots+p^{s-1}g_{s-1}(x))\rangle,
 \end{equation*}
 where $l_2(x)\in \mathbb{Z}_{p^r}[x]/\langle x^{\beta}-1 \rangle$ and $b_{r-1}(x)|b_{r-2}(x)|\cdots|b_1(x)|b_0(x)|(x^{\beta}-1)$ over $\mathbb{Z}_{p^r}[x]$ and  $g_{s-1}(x)|g_{s-2}(x)|$ $ \cdots|g_1(x)|g_0(x)|(x^{\gamma}-1)$ over $\mathbb{Z}_{p^s}[x]$.
   \end{theorem}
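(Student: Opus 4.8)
The plan is to reduce the statement to the classification of $\mathbb{Z}_{p^r}\mathbb{Z}_{p^s}$-additive cyclic codes already recalled in the excerpt. When $\alpha=0$ the ambient module $R_{0,\beta,\gamma}$ collapses to $\mathbb{Z}_{p^r}[x]/\langle x^{\beta}-1\rangle\times\mathbb{Z}_{p^s}[x]/\langle x^{\gamma}-1\rangle$, and the external operation $*$ restricted to this product is exactly the operation under which $\mathbb{Z}_{p^r}\mathbb{Z}_{p^s}$-additive cyclic codes were defined in \cite{BCV18} (this is the content of the remark above identifying the $\alpha=0$ case). Hence a $\mathbb{Z}_{p^s}[x]$-submodule of $R_{0,\beta,\gamma}$ is precisely a $\mathbb{Z}_{p^r}\mathbb{Z}_{p^s}$-additive cyclic code of block length $(\beta,\gamma)$, and the desired description of the generator polynomials is then the structure theorem of \cite{BCV18}. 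Formally the proof is a single reduction plus a citation; for completeness I would also recall the argument behind that description.

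First I would look at the canonical projection $\pi_\gamma\colon C\to C_\gamma$ onto the last $\gamma$ coordinates. As noted after the definition of additive cyclic codes, $C_\gamma$ is a cyclic code of length $\gamma$ over $\mathbb{Z}_{p^s}$, so by Theorem~\ref{th2.4} (using $\gcd(\gamma,p)=1$) we may write $C_\gamma=\langle g(x)\rangle$ with $g(x)=g_0(x)+pg_1(x)+\cdots+p^{s-1}g_{s-1}(x)$ and $g_{s-1}\mid\cdots\mid g_1\mid g_0\mid (x^{\gamma}-1)$ over $\mathbb{Z}_{p^s}[x]$. Choosing a preimage yields a codeword $(l_2(x)\mid g(x))\in C$ for some $l_2(x)\in\mathbb{Z}_{p^r}[x]/\langle x^{\beta}-1\rangle$. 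Next consider $K=\{(h(x)\mid 0)\in C\}$, the kernel of $\pi_\gamma$, which is a $\mathbb{Z}_{p^s}[x]$-submodule; projecting onto the first block identifies it with a cyclic code of length $\beta$ over $\mathbb{Z}_{p^r}$ (here one uses that the $*$-action on the $\mathbb{Z}_{p^r}$-block is multiplication after reduction mod $p^r$, and that $\gcd(\beta,p)=1$ so $\mathbb{Z}_{p^r}[x]/\langle x^{\beta}-1\rangle$ is a principal ideal ring). Thus $K=\langle(b(x)\mid 0)\rangle$ with $b(x)=b_0(x)+pb_1(x)+\cdots+p^{r-1}b_{r-1}(x)$ and $b_{r-1}\mid\cdots\mid b_1\mid b_0\mid (x^{\beta}-1)$ over $\mathbb{Z}_{p^r}[x]$. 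Finally, given any $c\in C$, writing $\pi_\gamma(c)$ as $\eta(x)*g(x)$ for a suitable $\eta(x)\in\mathbb{Z}_{p^s}[x]$ shows that $c-\eta(x)*(l_2(x)\mid g(x))\in K$, so $C=\langle(b(x)\mid 0),(l_2(x)\mid g(x))\rangle$, which is the claimed form.

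The main difficulty is not conceptual but bookkeeping: one must verify that $K$ is carried isomorphically onto a genuine cyclic code over $\mathbb{Z}_{p^r}$ of length $\beta$ — so that the divisibility chain in $b(x)$ and the condition $b_0\mid x^{\beta}-1$ are legitimate — and that the subtraction step really lands in $K$; both rely on $\gcd(\beta,p)=\gcd(\gamma,p)=1$ and on $\mathbb{Z}_{p^r}[x]/\langle x^{\beta}-1\rangle$ being a principal ideal ring (Theorem~\ref{th2.4}). Since these are exactly the ingredients used for the two-alphabet case in \cite{BCV18}, I expect the proof to amount to the reduction above followed by an appeal to \cite{BCV18}; alternatively one can simply present it as a corollary of \cite{BCV18} via the remark identifying $\mathbb{Z}_{p^r}\mathbb{Z}_{p^r}\mathbb{Z}_{p^s}$-additive cyclic codes with $\alpha=0$ and $\mathbb{Z}_{p^r}\mathbb{Z}_{p^s}$-additive cyclic codes of block length $(\beta,\gamma)$.
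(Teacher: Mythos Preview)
Your proposal is correct and matches the paper's approach exactly: the paper's proof is the single line ``Proof follows from \cite[Theorem 4.3]{BCV18},'' which is precisely the reduction you describe (when $\alpha=0$ the ambient module is $\mathbb{Z}_{p^r}[x]/\langle x^{\beta}-1\rangle\times\mathbb{Z}_{p^s}[x]/\langle x^{\gamma}-1\rangle$ and one quotes the structure theorem for $\mathbb{Z}_{p^r}\mathbb{Z}_{p^s}$-additive cyclic codes). Your additional paragraph recalling the projection/kernel argument from \cite{BCV18} is a faithful sketch of that result and is not needed for the paper, but it is correct.
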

   \begin{proof}
   Proof follows from \cite[Theorem 4.3]{BCV18}.
   \end{proof}
\noindent Now we determine the generator polynomials of submodules of $R_{\alpha,\beta,\gamma}$. Now onward, we denote the $\mathbb{Z}_{p^s}[x]$-submodule generated by a subset $S$ of $R_{\alpha,\beta,\gamma}$ by $\langle S\rangle$.\par
 In rest of the paper, we consider $\mbox{gcd}(\alpha, p)=\mbox{gcd}(\beta, p)=\mbox{gcd}(\gamma, p)=1$ i.e. the integers $\alpha$, $\beta$ and $\gamma$ are coprime with $p$.  For this assumption, this has been proved in \cite{dinh, kanwar} that $\mathbb{Z}_{p^r}[x]/\langle x^{\alpha}-1\rangle$, $\mathbb{Z}_{p^r}[x]/\langle x^{\beta}-1\rangle$ and $\mathbb{Z}_{p^s}[x]/\langle x^{\gamma}-1\rangle$ are principal ideal rings.
 \begin{theorem}\label{th 4.4}
 Let $C$ be a $\mathbb{Z}_{p^s}[x]$-submodule of $R_{\alpha,\beta,\gamma}$. Then,
 \begin{equation*}
     C= \langle(\mathcal{A}(x)|0|0), (l(x)|\mathcal{B}(x)|0), (l_1(x)|l_2(x)|\mathcal{G}(x))\rangle,
 \end{equation*}
 where $\mathcal{A}(x)= a_0(x)+pa_1(x)+\cdots+p^{r-1}a_{r-1}(x)$ with $a_{r-1}(x)|a_{r-2}(x)|\cdots|a_1(x)|a_0(x)|(x^{\alpha}-1)$,  $\mathcal{B}(x)= b_0(x)+pb_1(x)+\cdots+p^{r-1}b_{r-1}(x)$ with $b_{r-1}(x)|b_{r-2}(x)|\cdots|b_1(x)|b_0(x)|(x^{\beta}-1)$ over $\mathbb{Z}_{p^r}[x]$,  $\mathcal{G}(x)= g_0(x)+pg_1(x)+\cdots+p^{s-1}g_{s-1}(x)$ with $g_{s-1}(x)|g_{s-2}(x)|\cdots|g_1(x)|g_0(x)|(x^{\gamma}-1)$ over $\mathbb{Z}_{p^s}[x]$ and $l(x),~ l_1(x)\in \mathbb{Z}_{p^r}[x]/ \langle x^{\alpha}-1\rangle$, $l_2(x)\in \mathbb{Z}_{p^r}[x]/ \langle x^{\beta}-1\rangle$.
 
 \end{theorem}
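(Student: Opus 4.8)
The plan is to peel off the last ($\mathbb{Z}_{p^s}$) block first by means of the natural coordinate projection, and then to reduce the surviving two blocks to the already-settled double cyclic situation of Theorem~\ref{th46}. Consider the map
\[
\pi \colon R_{\alpha,\beta,\gamma} \longrightarrow \mathbb{Z}_{p^s}[x]/\langle x^{\gamma}-1\rangle, \qquad (p(x)\,|\,q(x)\,|\,r(x)) \longmapsto r(x).
\]
From the definition of $*$ one checks at once that $\pi$ is $\mathbb{Z}_{p^s}[x]$-linear, since $\pi(\eta(x)*(p|q|r)) = \eta(x)r(x)$. Restricting $\pi$ to $C$, the image $\pi(C)$ is an ideal of $\mathbb{Z}_{p^s}[x]/\langle x^{\gamma}-1\rangle$, i.e. a cyclic code of length $\gamma$ over $\mathbb{Z}_{p^s}$; as $\gcd(\gamma,p)=1$ this ring is a principal ideal ring, so by Theorem~\ref{th2.4} we may write $\pi(C)=\langle \mathcal{G}(x)\rangle$ with $\mathcal{G}(x)=g_0(x)+pg_1(x)+\cdots+p^{s-1}g_{s-1}(x)$ and $g_{s-1}(x)\mid\cdots\mid g_0(x)\mid(x^{\gamma}-1)$ over $\mathbb{Z}_{p^s}[x]$. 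Fix any preimage $(l_1(x)\,|\,l_2(x)\,|\,\mathcal{G}(x))\in C$ of $\mathcal{G}(x)$, noting $l_1\in\mathbb{Z}_{p^r}[x]/\langle x^{\alpha}-1\rangle$ and $l_2\in\mathbb{Z}_{p^r}[x]/\langle x^{\beta}-1\rangle$.

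Next I would study $K:=\ker(\pi|_C)=\{(p(x)\,|\,q(x)\,|\,0)\in C\}$, a $\mathbb{Z}_{p^s}[x]$-submodule of $\mathbb{Z}_{p^r}[x]/\langle x^{\alpha}-1\rangle\times\mathbb{Z}_{p^r}[x]/\langle x^{\beta}-1\rangle\times\{0\}$. Because $\phi\colon\mathbb{Z}_{p^s}[x]\to\mathbb{Z}_{p^r}[x]$ is surjective, the action of $\mathbb{Z}_{p^s}[x]$ via $*$ on the first two blocks coincides with the ordinary $\mathbb{Z}_{p^r}[x]$-action; hence $K$, identified with its first two coordinates, is exactly a double cyclic code over $\mathbb{Z}_{p^r}$ of block length $(\alpha,\beta)$. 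Theorem~\ref{th46} then yields
\[
K = \langle\,(\mathcal{A}(x)\,|\,0\,|\,0),\ (l(x)\,|\,\mathcal{B}(x)\,|\,0)\,\rangle,
\]
with $\mathcal{A}(x)=a_0(x)+\cdots+p^{r-1}a_{r-1}(x)$, $\mathcal{B}(x)=b_0(x)+\cdots+p^{r-1}b_{r-1}(x)$, $l(x)\in\mathbb{Z}_{p^r}[x]/\langle x^{\alpha}-1\rangle$, and the divisibility chains $a_{r-1}\mid\cdots\mid a_0\mid(x^{\alpha}-1)$ and $b_{r-1}\mid\cdots\mid b_0\mid(x^{\beta}-1)$ over $\mathbb{Z}_{p^r}[x]$.

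Finally I would establish $C=\langle(\mathcal{A}|0|0),\ (l|\mathcal{B}|0),\ (l_1|l_2|\mathcal{G})\rangle$ by the two inclusions. The inclusion $\supseteq$ is immediate, since all three generators lie in $C$ and $C$ is a $\mathbb{Z}_{p^s}[x]$-submodule. For $\subseteq$, take $(p\,|\,q\,|\,r)\in C$. Then $r\in\pi(C)$, so $r=\lambda(x)\mathcal{G}(x)$ in $\mathbb{Z}_{p^s}[x]/\langle x^{\gamma}-1\rangle$ for some $\lambda(x)\in\mathbb{Z}_{p^s}[x]$, whence
\[
(p\,|\,q\,|\,r)-\lambda(x)*(l_1\,|\,l_2\,|\,\mathcal{G}) = (\,p-\phi(\lambda)l_1\ |\ q-\phi(\lambda)l_2\ |\ 0\,)\in C,
\]
so this difference lies in $K$ and is therefore a $\mathbb{Z}_{p^r}[x]$-combination — equivalently, by surjectivity of $\phi$, a $*$-combination over $\mathbb{Z}_{p^s}[x]$ — of $(\mathcal{A}|0|0)$ and $(l|\mathcal{B}|0)$. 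Adding $\lambda(x)*(l_1|l_2|\mathcal{G})$ back shows $(p\,|\,q\,|\,r)$ lies in the submodule generated by the three displayed elements.

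The point requiring the most care is the identification in the second paragraph: one must argue cleanly that, because $\phi$ is onto, being a $\mathbb{Z}_{p^s}[x]$-submodule of the first-two-block part is the same as being a $\mathbb{Z}_{p^r}[x]$-submodule (so that Theorem~\ref{th46} applies verbatim), and, symmetrically in the last step, that a $\mathbb{Z}_{p^r}[x]$-linear expression can be lifted to a $*$-expression over $\mathbb{Z}_{p^s}[x]$. Everything else is the standard kernel–image decomposition over a principal-ideal-type ring together with the divisibility conditions, which are inherited directly from Theorems~\ref{th2.4} and~\ref{th46}. (Alternatively, one could avoid Theorem~\ref{th46} and iterate the projection argument three times, first onto the third block, then onto the second, then onto the first, but invoking Theorem~\ref{th46} is cleaner.)
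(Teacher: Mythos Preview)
Your proposal is correct and follows essentially the same approach as the paper: project onto the third coordinate, identify the image as a principal cyclic code via Theorem~\ref{th2.4}, identify the kernel with a $\mathbb{Z}_{p^r}$-double cyclic code via Theorem~\ref{th46}, and assemble the three generators. The paper phrases the final step through the First Isomorphism Theorem rather than explicit mutual inclusion, and it is slightly less explicit than you are about why the $\mathbb{Z}_{p^s}[x]$-submodule structure on the kernel coincides with the $\mathbb{Z}_{p^r}[x]$-submodule structure (your observation that this follows from surjectivity of $\phi$ is exactly the right justification), but the substance is the same.
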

 \begin{proof}
 Since $C$ and $\mathbb{Z}_{p^s}[x]$ are $\mathbb{Z}_{p^s}[x]$-submodules of $R_{\alpha,\beta,\gamma}$, we define a map $\Phi: C \rightarrow \mathbb{Z}_{p^s}[x]/\langle x^{\gamma}-1\rangle$ such that $\Phi(u(x)|v(x)|w(x))= w(x)$. It can be seen that the map $\Phi$ is a $\mathbb{Z}_{p^s}[x]$-module homomorphism and the image $\Phi(C)$ is an ideal of $\mathbb{Z}_{p^s}[x]/\langle x^{\gamma}-1\rangle$.
 Since $\gamma$ and $p$ are coprime, from Theorem \ref{th2.4}, there exists a polynomial $\mathcal{G}(x)=g_0(x)+pg_1(x)+\cdots+p^{s-1}g_{s-1}(x)\in \mathbb{Z}_{p^s}[x]/\langle x^{\gamma}-1\rangle$, where $g_{s-1}(x)|g_{s-2}(x)|\cdots|g_1(x)|g_0(x)|(x^{\gamma}-1)$ such that $\Phi(C)=\langle \mathcal{G}(x)\rangle$. We have
 \[Ker(\Phi)=\{(u(x)|v(x)|0)\in R_{\alpha,\beta,\gamma}: (u(x)|v(x)|0)\in C\},\]
 We define the set 
 \[I= \{(u(x)|v(x))\in R_{\alpha,\beta}: (u(x)|v(x)|0)\in Ker(\Phi)\},\]
 where $R_{\alpha,\beta}= \mathbb{Z}_{p^r}[x]/\langle x^{\alpha}-1\rangle \times \mathbb{Z}_{p^r}[x]/\langle x^{\beta}-1\rangle$. It is clear that $I$ is a $\mathbb{Z}_{p^r}[x]$-submodule of $R_{\alpha,\beta}$. Therefore, $I$ is a double cyclic code over $\mathbb{Z}_{p^r}$. Thus, from Theorem \ref{th46}, $I$ can be written in the following form
 \[I= \langle (\mathcal{A}(x)|0), (l(x)|\mathcal{B}(x))\rangle,\]
 where $\mathcal{A}(x)= a_0(x)+pa_1(x)+\cdots+p^{r-1}a_{r-1}(x)$ with $a_{r-1}(x)|a_{r-2}(x)|\cdots|a_1(x)|a_0(x)|(x^{\alpha}-1)$ and $\mathcal{B}(x)= b_0(x)+pb_1(x)+\cdots+p^{r-1}b_{r-1}(x)$ with $b_{r-1}(x)|b_{r-2}(x)|\cdots|b_1(x)|b_0(x)|(x^{\beta}-1)$ and $l(x)\in \mathbb{Z}_{p^r}[x]/\langle x^{\alpha}-1\rangle$. This implies that 
 \[Ker(\Phi)= \langle (\mathcal{A}(x)|0|0), (l(x)|\mathcal{B}(x)|0)\rangle.\]
 On the other hand, by the First Isomorphism Theorem we have
 \[C/Ker(\Phi)\cong \Phi(C),\]
 which implies that
 \[C/Ker(\Phi)\cong \langle \mathcal{G}(x)\rangle.\]
 Let $(l_1(x)|l_2(x)|\mathcal{G}(x))\in C$ be such that
 \[\Phi(l_1(x)|l_2(x)|\mathcal{G}(x))= \mathcal{G}(x).\] Hence,  from this discussion we can conclude that any $\mathbb{Z}_{p^r}\mathbb{Z}_{p^r}\mathbb{Z}_{p^s}$-additive cyclic code can be generated as a $\mathbb{Z}_{p^s}[x]$-submodule of $R_{\alpha,\beta,\gamma}$ by the elements of the form   $(\mathcal{A}(x)|0|0), (l(x)|\mathcal{B}(x)|0)$ and $(l_1(x)|l_2(x)|\mathcal{G}(x))$. Moreover, any codeword of $C$ can be expressed as 
 \[m_1(x)*(\mathcal{A}(x)|0|0)+m_2(x)*(l(x)|\mathcal{B}(x)|0)+m_3(x)*(l_1(x)|l_2(x)|\mathcal{G}(x)),\]
where $m_1(x) ~\mbox{and}~ m_2(x)$ are elements in $\mathbb{Z}_{p^r}[x]$, $m_3(x)\in \mathbb{Z}_{p^s}[x]$.
 \end{proof}
 \begin{lemma}\label{lemma 4.5}
 Let $C=\langle (\mathcal{A}(x)|0|0),(l(x)|\mathcal{B}(x)|0), (l_1(x)|l_2(x)|\mathcal{G}(x))\rangle$ be a $\mathbb{Z}_{p^r}\mathbb{Z}_{p^r}\mathbb{Z}_{p^s}$-additive cyclic code. Then $\deg(l(x))<\deg(\mathcal{A}(x)),~\deg(l_1(x))<\deg(\mathcal A(x)),~ \deg(l_2(x))<\deg(\mathcal{B}(x))$, where $\deg(\mathcal{A}(x))= \deg(a_0(x))$, $\deg(\mathcal{B}(x))= \deg(b_0(x)).$
 \end{lemma}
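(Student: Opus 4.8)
The plan is to prove the lemma in its natural reading, namely that the generating set produced in Theorem~\ref{th 4.4} may always be replaced by an equivalent one (generating the same $\mathbb{Z}_{p^s}[x]$-submodule $C$) whose ``off-diagonal'' polynomials $l(x),l_1(x),l_2(x)$ satisfy the asserted degree bounds. The mechanism is repeated Euclidean reduction of $l(x),l_1(x),l_2(x)$ modulo the ``diagonal'' generators $\mathcal{A}(x)$ and $\mathcal{B}(x)$; each reduction step replaces a single generator by that generator plus a $*$-multiple of another one, so it never alters $C$.

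First I would record the elementary fact that, since $\gcd(\alpha,p)=\gcd(\beta,p)=1$, the divisors $a_0(x)\mid x^{\alpha}-1$ and $b_0(x)\mid x^{\beta}-1$ may be chosen monic; because $a_i(x)\mid a_0(x)$ and $b_i(x)\mid b_0(x)$ for all $i$, the leading coefficient of $\mathcal{A}(x)=a_0(x)+pa_1(x)+\cdots+p^{r-1}a_{r-1}(x)$ is a unit of $\mathbb{Z}_{p^r}$ with $\deg(\mathcal{A}(x))=\deg(a_0(x))$, and similarly for $\mathcal{B}(x)$. Consequently the division algorithm is available when dividing by $\mathcal{A}(x)$ (resp.\ $\mathcal{B}(x)$) in $\mathbb{Z}_{p^r}[x]$.

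Next, if $\deg(l(x))\ge\deg(\mathcal{A}(x))$, write $l(x)=q(x)\mathcal{A}(x)+l'(x)$ with $\deg(l'(x))<\deg(\mathcal{A}(x))$, lift $q(x)$ to $\tilde{q}(x)\in\mathbb{Z}_{p^s}[x]$ with $\phi(\tilde{q}(x))=q(x)$, and replace $(l(x)\mid\mathcal{B}(x)\mid 0)$ by $(l(x)\mid\mathcal{B}(x)\mid 0)-\tilde{q}(x)*(\mathcal{A}(x)\mid 0\mid 0)=(l'(x)\mid\mathcal{B}(x)\mid 0)$; the three elements still generate $C$. Then apply the analogous reduction to $l_2(x)$ using the (already reduced) generator $(l(x)\mid\mathcal{B}(x)\mid 0)$, producing a third generator $(l_1'(x)\mid l_2'(x)\mid\mathcal{G}(x))$ with $\deg(l_2'(x))<\deg(\mathcal{B}(x))$ but an altered first coordinate; finally reduce that first coordinate modulo $\mathcal{A}(x)$ using $(\mathcal{A}(x)\mid 0\mid 0)$, which changes neither the second nor the third coordinate, to obtain $\deg(l_1(x))<\deg(\mathcal{A}(x))$.

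The only point that needs care is the order of the three reductions: reducing $l_2(x)$ modulo $\mathcal{B}(x)$ perturbs the first coordinate of the third generator, so the reduction of $l_1(x)$ modulo $\mathcal{A}(x)$ must come last; since $(\mathcal{A}(x)\mid 0\mid 0)$ vanishes in the last two coordinates, this closing step leaves $\mathcal{B}(x)$, $l_2(x)$ and $\mathcal{G}(x)$ intact, so all three bounds hold simultaneously. Beyond verifying that each replacement preserves the submodule and keeping track of which coordinates move, I do not anticipate a substantive obstacle.
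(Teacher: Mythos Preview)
Your proposal is correct and follows essentially the same approach as the paper: replace the generators by Euclidean reductions modulo $\mathcal{A}(x)$ and $\mathcal{B}(x)$, checking at each step that the submodule $C$ is unchanged. The paper carries out a single reduction step for $l(x)$ and then writes ``similarly'' for $l_1(x)$ and $l_2(x)$; your version is somewhat more careful in noting that the leading coefficient of $\mathcal{A}(x)$ is a unit (so the division algorithm applies over $\mathbb{Z}_{p^r}[x]$) and in ordering the reductions so that reducing $l_2(x)$ via $(l(x)\mid\mathcal{B}(x)\mid 0)$ does not spoil the bound on $l_1(x)$, but the underlying argument is the same.
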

 \begin{proof}
 Suppose that $\deg(l(x))\geq \deg(\mathcal{A}(x))$ with $\deg(l(x))- \deg(\mathcal{A}(x))=i$. Take \[D'=\langle(\mathcal{A}(x)|0|0),(l(x)+x^i\mathcal{A}(x)| \mathcal{B}(x)|0)\rangle.\] 
 Notice that \[(l(x)+x^i\mathcal{A}(x)|\mathcal{B}(x)|0)= x^i*(\mathcal{A}(x)|0|0)+ (l(x)|\mathcal{B}(x)|0),\]
 it follows that $D'\subseteq C$. On the other hand, we have \[(l(x)|\mathcal{B}(x)|0)=(l(x)+x^i\mathcal{A}(x)|\mathcal{B}(x)|0)- x^i*(\mathcal{A}(x)|0|0).\]
We get $C\subseteq D'$. Hence, $C=D'$. Therefore,  $\deg(l(x))<\deg(\mathcal{A}(x))$. Similarly, we can prove $\deg(l_1(x))< \deg(\mathcal A(x))$ and $\deg(l_2(x))< \deg(\mathcal{B}(x))$.
 \end{proof}
 Now, we present an example that illustrate the result obtained in Theorem \ref{th 4.4} and Lemma \ref{lemma 4.5}. 
 \begin{example}\label{ex:Z2Z2Z4}\em
Let $R_{3,3,5}=\mathbb Z_2[x]/\langle x^3-1\rangle\times \mathbb Z_2[x]/\langle x^3-1\rangle\times\mathbb Z_4[x]/\langle x^5-1\rangle$. Consider the $\mathbb Z_2\mathbb Z_2\mathbb Z_4$-additive cyclic code $C$ of block length $(3,3,5)$ generated by
$$
\{(1+x|1+x|0),(0|0|(1+x+x^2+x^3+x^4)+2)\},
$$
where $\mathcal{A}(x)=x^3-1,~l(x)=\mathcal{B}(x)=1+x,~ \mathcal{G}(x)=(1+x+x^2+x^3+x^4)+2$ and $l_1(x)=l_2(x)=0$. Note that every codeword $(\lambda_1(x)|\lambda_2(x)|\lambda_3(x))\in C$ satisfies that $\lambda_1(x)=\lambda_2(x)=\mu_1(x)(1+x)$ for some $\mu_1(x)\in\mathbb Z_2[x]/(x^3-1)$, $\deg(\mu_1(x))\leq 2$, and $\lambda_3(x)=\mu_2(x)((1+x+x^2+x^3+x^4)+2)$ for some $\mu_2(x)\in\mathbb Z_4[x]/(x^5-1)$, $\deg(\mu_2(x))\leq 1$.
\end{example}
\begin{remark}\label{R4.12}\em 
Let $C$ be a $\mathbb Z_{p^r}\mathbb Z_{p^r}\mathbb Z_{p^s}$-additive cyclic code of block length $(\alpha,\beta,\gamma)$. Note that when $p=2$ and $r=s=1$, we obtain triple cyclic codes over $\mathbb Z_2$ that are defined in \cite{mosta}. When $p=2$, $r=s=2$ then $C$ is a triple cyclic code over $\mathbb Z_4$ defined in \cite{Gao2016}. Finally, when $p=2$, $r=1$ and $s=2$, we have that $C$ is a $\mathbb Z_2\mathbb Z_2\mathbb Z_4$- additive cyclic code studied in \cite{224}. The generator polynomials of the code $C$ given in Theorem \ref{th 4.4} is therefore a generalization of the statements that provide the generator polynomials of triple cyclic codes over $\mathbb Z_2$ \cite[Theorem 3.2]{mosta}, triple cyclic codes over $\mathbb Z_4$ \cite[Proposition 2.4]{Gao2016} and $\mathbb Z_2\mathbb Z_2\mathbb Z_4$-additive cyclic code
 \cite[Theorem 3.1]{224}. However, in all these statements, the polynomial $l(x)$ is zero. Therefore, these statements do not give the generator polynomial  of a general additive cyclic code but only of those codes with $l(x)=0$. For instance, the $\mathbb Z_2\mathbb Z_2\mathbb Z_4$-additive cyclic code given in Example \ref{ex:Z2Z2Z4}, can not be generated with the generator polynomials described in \cite{224}.
 \end{remark}

 \begin{lemma}\label{lemma4.6}
 Let $C= \langle (\mathcal{A}(x)|0|0),(l(x)|\mathcal{B}(x)|0), (l_1(x)|l_2(x)|\mathcal{G}(x))\rangle$ be a $\mathbb{Z}_{p^r}\mathbb{Z}_{p^r}\mathbb{Z}_{p^s}$-additive cyclic code of block length $(\alpha,\beta,\gamma)$. Then,
 \begin{enumerate}[label=(\roman*)]
     \item $\mathcal{A}(x)|\phi\Big(\frac{x^{\beta}-1}{b_{r-1}(x)}\Big)l(x)$.
     
     \item $\mathcal{B}(x)|\phi\Big(\frac{x^{\gamma}-1}{g_{s-1}(x)}\Big)l_2(x)$.
     
     \item $\mathcal{A}(x)| (Q(x)l(x)- \phi\Big(\frac{x^{\gamma}-1}{g_{s-1}(x)}\Big)l_1(x))$, where $Q(x)\mathcal{B}(x)= \phi\Big(\frac{x^{\gamma}-1}{g_{s-1}(x)}\Big)l_2(x).$
 \end{enumerate}
 \end{lemma}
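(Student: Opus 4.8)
The plan is to derive each of the three divisibilities by producing, through the module action $*$, a codeword of $C$ that is zero on its last one or two blocks, and then reading off the divisibility from the description of $C$ in Theorem \ref{th 4.4}. Two ingredients are needed. First, an \emph{annihilation} fact: since $\frac{x^{\beta}-1}{b_{r-1}(x)}=\frac{x^{\beta}-1}{b_{0}(x)}\prod_{i=1}^{r-1}\frac{b_{i-1}(x)}{b_{i}(x)}$ and the product $\prod_{i=1}^{r-1}\frac{b_{i-1}(x)}{b_{i}(x)}$ annihilates $\mathcal{B}(x)$ in $\mathbb{Z}_{p^{r}}[x]/\langle x^{\beta}-1\rangle$ by the identity noted in Section 2 (applied to the cyclic code $\langle\mathcal{B}(x)\rangle$ over $\mathbb{Z}_{p^{r}}$), we get $\phi\Big(\frac{x^{\beta}-1}{b_{r-1}(x)}\Big)\mathcal{B}(x)=0$; in the same way $\phi\Big(\frac{x^{\gamma}-1}{g_{s-1}(x)}\Big)\mathcal{G}(x)=0$ in $\mathbb{Z}_{p^{s}}[x]/\langle x^{\gamma}-1\rangle$. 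Second, a \emph{descent} fact: if $(u(x)|v(x)|0)\in C$ then $\mathcal{B}(x)\mid v(x)$, and if in addition $v(x)=0$ then $\mathcal{A}(x)\mid u(x)$. The former holds because, by the proof of Theorem \ref{th 4.4}, $Ker(\Phi)=\langle(\mathcal{A}(x)|0|0),(l(x)|\mathcal{B}(x)|0)\rangle$, so $(u(x)|v(x)|0)=m_{1}(x)*(\mathcal{A}(x)|0|0)+m_{2}(x)*(l(x)|\mathcal{B}(x)|0)$ with $m_1(x),m_2(x)\in\mathbb{Z}_{p^s}[x]$, whence $v(x)=\phi(m_{2}(x))\mathcal{B}(x)$; the latter holds because $(u(x)|0)$ then lies in the double cyclic code $I$ of Theorem \ref{th46}, in which, by the construction underlying Theorem \ref{th46}, $\mathcal{A}(x)$ is the generator polynomial of the cyclic code $\{\,w(x):(w(x)|0)\in I\,\}$.

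Granting these, here is the plan for the three parts. For (i): multiply the generator $(l(x)|\mathcal{B}(x)|0)$ by a lift to $\mathbb{Z}_{p^{s}}[x]$ of $\frac{x^{\beta}-1}{b_{r-1}(x)}$; by annihilation the middle block vanishes, so $\big(\phi\big(\tfrac{x^{\beta}-1}{b_{r-1}(x)}\big)l(x)\,\big|\,0\,\big|\,0\big)\in C$, and descent gives $\mathcal{A}(x)\mid\phi\big(\tfrac{x^{\beta}-1}{b_{r-1}(x)}\big)l(x)$. For (ii): multiply $(l_{1}(x)|l_{2}(x)|\mathcal{G}(x))$ by a lift of $\frac{x^{\gamma}-1}{g_{s-1}(x)}$; by annihilation the last block vanishes, so $\big(\phi\big(\tfrac{x^{\gamma}-1}{g_{s-1}(x)}\big)l_{1}(x)\,\big|\,\phi\big(\tfrac{x^{\gamma}-1}{g_{s-1}(x)}\big)l_{2}(x)\,\big|\,0\big)\in C$, and descent gives $\mathcal{B}(x)\mid\phi\big(\tfrac{x^{\gamma}-1}{g_{s-1}(x)}\big)l_{2}(x)$. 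For (iii): using (ii) choose $Q(x)$ with $Q(x)\mathcal{B}(x)=\phi\big(\tfrac{x^{\gamma}-1}{g_{s-1}(x)}\big)l_{2}(x)$ in $\mathbb{Z}_{p^{r}}[x]/\langle x^{\beta}-1\rangle$, subtract $Q(x)*(l(x)|\mathcal{B}(x)|0)$ from the codeword obtained in (ii), and note that its middle and last blocks now vanish, leaving $\big(\phi\big(\tfrac{x^{\gamma}-1}{g_{s-1}(x)}\big)l_{1}(x)-Q(x)l(x)\,\big|\,0\,\big|\,0\big)\in C$; descent then gives $\mathcal{A}(x)\mid\big(Q(x)l(x)-\phi\big(\tfrac{x^{\gamma}-1}{g_{s-1}(x)}\big)l_{1}(x)\big)$.

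The computations behind the annihilation fact are the usual telescoping identities, and the bookkeeping with $*$ is routine once one keeps in mind that the action on the first two blocks factors through $\phi$, so lifting a polynomial from $\mathbb{Z}_{p^{r}}[x]$ to $\mathbb{Z}_{p^{s}}[x]$ and then applying $\phi$ returns it unchanged. The step that genuinely needs care --- and the one I expect to be the main obstacle --- is the second half of the descent fact: that a codeword of $C$ whose $\beta$- and $\gamma$-blocks are both zero must be a $\mathbb{Z}_{p^{r}}[x]$-multiple of $(\mathcal{A}(x)|0|0)$. This is not formal from the generating set; it rests on $\mathcal{A}(x)$ having been produced in Theorem \ref{th46} (hence in Theorem \ref{th 4.4}) precisely as the generator polynomial of the cyclic code $\{\,w(x):(w(x)|0)\in I\,\}$. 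Once that is made explicit, parts (i) and (iii) follow at once.
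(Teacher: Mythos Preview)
Your proposal is correct and follows essentially the same route as the paper's proof: in each part one multiplies the relevant generator by (a lift of) $\frac{x^{\beta}-1}{b_{r-1}(x)}$ or $\frac{x^{\gamma}-1}{g_{s-1}(x)}$ to kill the appropriate block, and then reads off the divisibility from the fact that codewords of the form $(u(x)|0|0)$ lie in $\langle(\mathcal{A}(x)|0|0)\rangle$ and codewords of the form $(u(x)|v(x)|0)$ lie in $Ker(\Phi)$. If anything, you are more explicit than the paper about the ``descent'' step --- the paper simply asserts that $(\phi(\tfrac{x^{\beta}-1}{b_{r-1}(x)})l(x)\,|\,0\,|\,0)\in\langle(\mathcal{A}(x)|0|0)\rangle$ without comment, whereas you correctly trace this back to how $\mathcal{A}(x)$ was produced in the proof of Theorem~\ref{th 4.4} (via Theorem~\ref{th46}).
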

 \begin{proof}
 \begin{enumerate}[label=(\roman*)]
     \item We consider
     \begin{align*}
         \prod_{t=0}^{r-1}\hat{b}_t*(l(x)|\mathcal{B}(x)|0)&= \frac{x^{\beta}-1}{b_{r-1}(x)}*(l(x)|\mathcal{B}(x)|0)\\
         &= \Big(\phi\Big(\frac{x^{\beta}-1}{b_{r-1}(x)}\Big)l(x)|\frac{x^{\beta}-1}{b_{r-1}(x)}\mathcal{B}(x)|0\Big)\\
         &= \Big(\phi\Big(\frac{x^{\beta}-1}{b_{r-1}(x)}\Big)l(x)|0|0\Big).
     \end{align*}
     Therefore, $\Big(\phi\Big(\frac{x^{\beta}-1}{b_{r-1}(x)}\Big)l(x)|0|0\Big)\in \langle (\mathcal{A}(x)|0|0)\rangle$. Hence, $\mathcal{A}(x)|\phi\Big(\frac{x^{\beta}-1}{b_{r-1}(x)}\Big)l(x)$.
     
    \item Let us consider
    \begin{align*}
        \prod_{t=0}^{s-1}\hat{g}_t*(l_1(x)|l_2(x)|\mathcal{G}(x))&= \frac{x^{\gamma}-1}{g_{s-1}(x)}*(l_1(x)|l_2(x)|\mathcal{G}(x))\\
        &=  \Big(\phi\Big(\frac{x^{\gamma}-1}{g_{s-1}(x)}\Big)l_1(x)|\phi\Big(\frac{x^{\gamma}-1}{g_{s-1}(x)}\Big)l_2(x)|0\Big).
    \end{align*}
    Since 
    \[\Phi\Big(\phi\Big(\frac{x^{\gamma}-1}{g_{s-1}(x)}\Big)l_1(x)|\phi\Big(\frac{x^{\gamma}-1}{g_{s-1}(x)}\Big)l_2(x)|0\Big)= 0,\]
    therefore,
    \[\Big(\phi\Big(\frac{x^{\gamma}-1}{g_{s-1}(x)}\Big)l_1(x)|\phi\Big(\frac{x^{\gamma}-1}{g_{s-1}(x)}\Big)l_2(x)|0\Big)\in Ker(\Phi)\subseteq C.\]
    Hence, we get $\mathcal{B}(x)|\phi\Big(\frac{x^{\gamma}-1}{g_{s-1}(x)}\Big)l_2(x)$.
    
    \item On the one hand, since $\mathcal{B}(x)|\phi\Big(\frac{x^{\gamma}-1}{g_{s-1}(x)}\Big)l_2(x),$ then there exists a $Q(x)\in \mathbb{Z}_{p^r}[x]$ such that
    \[\mathcal{B}(x)Q(x)=\phi\Big(\frac{x^{\gamma}-1}{g_{s-1}(x)}\Big)l_2(x).\]
    Let us consider
    \[Q(x)*(l(x)|\mathcal{B}(x)|0)= (Q(x)l(x)|Q(x)\mathcal{B}(x)|0)\in C.\]
    On the other hand, we have 
    \[\Big(\phi\Big(\frac{x^{\gamma}-1}{g_{s-1}(x)}\Big)l_1(x)|\phi\Big(\frac{x^{\gamma}-1}{g_{s-1}(x)}\Big)l_2(x)|0\Big)\in C.\]
    So, we get
    \begin{align*}
       (Q(x)l(x)|Q(x)\mathcal{B}(x)|0)&- \Big(\phi\Big(\frac{x^{\gamma}-1}{g_{s-1}(x)}\Big)l_1(x)|\phi\Big(\frac{x^{\gamma}-1}{g_{s-1}(x)}\Big)l_2(x)|0\Big)\\
       &= \Big(Q(x)l(x)-\phi\Big(\frac{x^{\gamma}-1}{g_{s-1}(x)}\Big)l_1|0|0\Big)\in Ker(\Phi)\subseteq C.
    \end{align*}
    Therefore, $\mathcal{A}(x)|(Q(x)l(x)- \phi\Big(\frac{x^{\gamma}-1}{g_{s-1}(x)}\Big)l_1(x)).$
    
 \end{enumerate}
 \end{proof}
 
 If any $\mathbb{Z}_{p^r}\mathbb{Z}_{p^r}\mathbb{Z}_{p^s}$-additive cyclic code has only one generator $(l_1(x)|l_2(x)|\mathcal{G}(x))$, then from Lemma \ref{lemma4.6}, we have  $(x^{\alpha}-1)|\phi\Big(\frac{x^{\gamma}-1}{g_{s-1}(x)}\Big)l_1(x),~ (x^{\beta}-1)|\phi\Big(\frac{x^{\gamma}-1}{g_{s-1}(x)}\Big)l_2(x)$ and $\mathcal{G}(x)$ holds the conditions given in Theorem \ref{th 4.4}.\par
 From Lemma \ref{lemma 4.5}  and \ref{lemma4.6}, we have the following result.
 \begin{corollary}
 If $\gcd\Big(\mathcal{A}(x), \phi\Big(\frac{x^{\beta}-1}{b_{r-1}(x)}\Big)\Big)= \gcd\Big(\mathcal{B}(x), \phi\Big(\frac{x^{\gamma}-1}{g_{s-1}(x)}\Big)\Big)=1$, then $l(x)=0,~ l_2(x)=0.$
 \end{corollary}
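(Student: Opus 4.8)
The plan is to feed the divisibility relations of Lemma~\ref{lemma4.6} into the two coprimality hypotheses and then close the argument with the degree bounds of Lemma~\ref{lemma 4.5}, treating $l(x)$ and $l_2(x)$ symmetrically.

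First I would apply Lemma~\ref{lemma4.6}(i), which gives $\mathcal A(x)\mid \phi\bigl(\frac{x^{\beta}-1}{b_{r-1}(x)}\bigr)l(x)$ in $\mathbb Z_{p^r}[x]/\langle x^{\alpha}-1\rangle$. Since $\gcd(\alpha,p)=1$ this ring is a principal ideal ring, so $\gcd\bigl(\mathcal A(x),\phi(\frac{x^{\beta}-1}{b_{r-1}(x)})\bigr)=1$ means that $\mathcal A(x)$ and $\phi(\frac{x^{\beta}-1}{b_{r-1}(x)})$ generate the unit ideal, and a Bézout identity $u(x)\mathcal A(x)+v(x)\phi(\frac{x^{\beta}-1}{b_{r-1}(x)})=1$ exists. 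Writing $\phi(\frac{x^{\beta}-1}{b_{r-1}(x)})l(x)=\mathcal A(x)h(x)$ and multiplying the Bézout identity by $l(x)$ yields $l(x)=\mathcal A(x)\bigl(u(x)l(x)+v(x)h(x)\bigr)$, i.e. $\mathcal A(x)\mid l(x)$; the identical argument applied to Lemma~\ref{lemma4.6}(ii) and $\gcd\bigl(\mathcal B(x),\phi(\frac{x^{\gamma}-1}{g_{s-1}(x)})\bigr)=1$ gives $\mathcal B(x)\mid l_2(x)$. Now Lemma~\ref{lemma 4.5} gives $\deg l(x)<\deg\mathcal A(x)=\deg a_0(x)$ and $\deg l_2(x)<\deg\mathcal B(x)=\deg b_0(x)$. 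The observation that makes this conclusive is that $\mathcal A(x)=a_0(x)+pa_1(x)+\cdots+p^{r-1}a_{r-1}(x)$ has a unit as its leading coefficient at degree $\deg a_0(x)$: taking the $a_i(x)$ monic in the chain $a_{r-1}(x)\mid\cdots\mid a_0(x)$, every $a_i(x)$ has degree at most $\deg a_0(x)$, so the coefficient of $x^{\deg a_0(x)}$ in $\mathcal A(x)$ is $1$ plus a multiple of $p$, a unit of $\mathbb Z_{p^r}$; similarly for $\mathcal B(x)$. Hence $\mathcal A(x)$ and $\mathcal B(x)$ admit Euclidean division as monic polynomials do, a nonzero multiple of $\mathcal A(x)$ has degree at least $\deg\mathcal A(x)$, and therefore $\mathcal A(x)\mid l(x)$ with $\deg l(x)<\deg\mathcal A(x)$ forces $l(x)=0$; symmetrically $l_2(x)=0$.

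The step I expect to be the main obstacle is precisely this final one: making rigorous that $\mathcal A(x)\mid l(x)$ together with $\deg l(x)<\deg\mathcal A(x)$ implies $l(x)=0$. One has to be careful whether the divisibility is read in $\mathbb Z_{p^r}[x]$ or in the quotient $\mathbb Z_{p^r}[x]/\langle x^{\alpha}-1\rangle$; in the quotient I would first reduce modulo $p$ — using that $a_0(x)$ is a monic divisor of $x^{\alpha}-1$, so that a nonzero element of $\langle a_0(x)\rangle$ in $\mathbb Z_p[x]/\langle x^{\alpha}-1\rangle$ has degree at least $\deg a_0(x)$ — to deduce $p\mid l(x)$, and then descend through the $p$-adic levels of $\mathcal A(x)$, the unit-leading-coefficient property keeping the degree bookkeeping consistent at each level.
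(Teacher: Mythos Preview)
Your proof is correct and follows exactly the route the paper indicates: the corollary is stated without proof beyond the remark that it follows from Lemmas~\ref{lemma 4.5} and~\ref{lemma4.6}, which is precisely your B\'ezout-plus-degree-bound argument. Your worry about the last step is legitimate but already resolved by your unit-leading-coefficient observation; alternatively, Lemma~\ref{lemma4.9} (stated immediately after the corollary) proves $\mathcal A(x)\mid l(x)\Rightarrow l(x)=0$ by a direct generator-replacement argument, so you could simply invoke that in place of the $p$-adic descent.
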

 
 In the above discussion, we have determined the generator polynomials of $\mathbb{Z}_{p^r}\mathbb{Z}_{p^r}\mathbb{Z}_{p^s}$-additive cyclic codes. Now, from these generator polynomials, we determine the generator polynomials of separable codes.
 \begin{lemma}\label{lemma4.8}
 Let $C=\langle (\mathcal{A}(x)|0|0),(l(x)|\mathcal{B}(x)|0), (l_1(x)|l_2(x)|\mathcal{G}(x))\rangle$ be a $\mathbb{Z}_{p^r}\mathbb{Z}_{p^r}\mathbb{Z}_{p^s}$-additive cyclic code of block length $(\alpha,\beta,\gamma)$. Then
 \[C_{\alpha}= \langle \gcd(\mathcal{A}(x),l(x), l_1(x))\rangle,~ C_{\beta}= \langle \gcd(\mathcal{B}(x),l_2(x))\rangle,~ C_{\gamma}= \langle \mathcal{G}(x)\rangle. \]
 \end{lemma}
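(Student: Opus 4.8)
The plan is to obtain each of the three projections $C_\alpha$, $C_\beta$, $C_\gamma$ by pushing the three module generators of $C$ through the relevant coordinate projection, and then to collapse the resulting finitely generated ideals to a single generator using that the ambient quotient rings are principal ideal rings.

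First I would invoke Theorem~\ref{th 4.4}: every codeword of $C$ can be written as
\[
m_1(x)*(\mathcal{A}(x)|0|0)+m_2(x)*(l(x)|\mathcal{B}(x)|0)+m_3(x)*(l_1(x)|l_2(x)|\mathcal{G}(x)),
\]
with $m_1(x),m_2(x)\in\mathbb{Z}_{p^r}[x]$ and $m_3(x)\in\mathbb{Z}_{p^s}[x]$. Projecting onto the first $\alpha$ coordinates and unwinding the definition of $*$, the first block of such a codeword is $m_1(x)\mathcal{A}(x)+m_2(x)l(x)+\phi(m_3(x))l_1(x)$ in $\mathbb{Z}_{p^r}[x]/\langle x^{\alpha}-1\rangle$. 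Since $\phi\colon\mathbb{Z}_{p^s}[x]\to\mathbb{Z}_{p^r}[x]$ is surjective, $\phi(m_3(x))$ ranges over all of $\mathbb{Z}_{p^r}[x]/\langle x^{\alpha}-1\rangle$ as $m_3(x)$ ranges over $\mathbb{Z}_{p^s}[x]$, and therefore $C_{\alpha}=\langle\mathcal{A}(x),l(x),l_1(x)\rangle$ as an ideal of $\mathbb{Z}_{p^r}[x]/\langle x^{\alpha}-1\rangle$. The same computation on the middle block, where the first two generators contribute $0$ and $m_2(x)\mathcal{B}(x)$ and the third contributes $\phi(m_3(x))l_2(x)$, gives $C_{\beta}=\langle\mathcal{B}(x),l_2(x)\rangle$, and on the last block only the third generator survives, so $C_{\gamma}=\langle\mathcal{G}(x)\rangle$; this last equality is also exactly the identification $C_{\gamma}=\Phi(C)=\langle\mathcal{G}(x)\rangle$ already obtained inside the proof of Theorem~\ref{th 4.4}.

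It remains to rewrite the first two ideals with a single generator. Because $\gcd(\alpha,p)=\gcd(\beta,p)=1$, the rings $\mathbb{Z}_{p^r}[x]/\langle x^{\alpha}-1\rangle$ and $\mathbb{Z}_{p^r}[x]/\langle x^{\beta}-1\rangle$ are principal ideal rings; in such a ring a generator $d$ of $\langle f,g,\dots\rangle$ divides each of $f,g,\dots$ and is in turn divisible by every common divisor of $f,g,\dots$, i.e.\ $d$ is a greatest common divisor of $f,g,\dots$. Applying this to $\langle\mathcal{A}(x),l(x),l_1(x)\rangle$ and $\langle\mathcal{B}(x),l_2(x)\rangle$ yields the stated formulas for $C_{\alpha}$ and $C_{\beta}$, while $C_\gamma=\langle\mathcal G(x)\rangle$ needs no such step.

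The step I expect to be the only delicate one is this last gcd identification, since $\mathbb{Z}_{p^r}$ is not a domain and ``gcd'' has no naive meaning over it. I would make it rigorous by reading $\gcd(\cdot)$ as ``a generator of the corresponding ideal'' and verifying the greatest-common-divisor property through the decomposition $x^{\alpha}-1=\prod_i h_i(x)$ into pairwise coprime basic irreducible factors, under which $\mathbb{Z}_{p^r}[x]/\langle x^{\alpha}-1\rangle\cong\bigoplus_i\mathbb{Z}_{p^r}[x]/\langle h_i(x)\rangle$ is a direct sum of chain rings: in each summand the ideals are the powers of the maximal ideal and hence are totally ordered, so the sum of two ideals is again an ideal of that form (the one with the smaller exponent in each coordinate), which is precisely the Bézout property needed. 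Everything else is a routine consequence of Theorem~\ref{th 4.4}.
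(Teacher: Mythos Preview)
Your argument is correct and follows essentially the same route as the paper: the paper also projects an arbitrary codeword onto the first block to get $u(x)=\phi(m_1)\mathcal A+\phi(m_2)l+\phi(m_3)l_1$, concludes $C_\alpha\subseteq\langle\gcd(\mathcal A,l,l_1)\rangle$ by divisibility, and obtains the reverse inclusion by writing $\gcd(\mathcal A,l,l_1)=\lambda_1\mathcal A+\lambda_2 l+\lambda_3 l_1$ (a B\'ezout identity) and exhibiting the explicit codeword $(\gcd\,|\,\lambda_2\mathcal B+\lambda_3 l_2\,|\,\lambda_3\mathcal G)\in C$. Your packaging---first $C_\alpha=\langle\mathcal A,l,l_1\rangle$ via surjectivity of $\phi$, then collapse to a single generator via the PIR property---is the same argument reorganised, and your discussion of what $\gcd$ means over the non-domain $\mathbb Z_{p^r}[x]/\langle x^\alpha-1\rangle$ (through the CRT decomposition into chain rings) is actually more careful than the paper, which simply invokes B\'ezout without comment.
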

 \begin{proof}
 On the one hand, let $u(x)\in C_{\alpha}$, then there exist $v(x)\in \mathbb{Z}_{p^r}[x]/\langle x^{\beta}-1\rangle,~ w(x)\in \mathbb{Z}_{p^s}[x]/\langle x^{\gamma}-1\rangle$ such that $(u(x)|v(x)|w(x))\in C$. It follows that there exist some polynomials $m_1(x),m_2(x),m_3(x)\in \mathbb{Z}_{p^s}[x]$ such that
 \[(u(x)|v(x)|w(x))= m_1(x)*(\mathcal{A}(x)|0|0)+m_2(x)*(l(x)|\mathcal{B}(x)|0)+m_3(x)*(l_1(x))|l_2(x)|\mathcal{G}(x)).\]
 Clearly, \[u(x)= \phi(m_1(x))\mathcal{A}(x)+ \phi(m_2(x))l(x)+ \phi(m_3(x))l_1(x).\]
 This implies that $\mbox{gcd}(\mathcal{A}(x), l(x), l_1(x))|u(x)$. Thus, $u(x)\in \langle\mbox{gcd}(\mathcal{A}(x), l(x), l_1(x))\rangle$. It follows that $C_{\alpha}\subseteq \langle\mbox{gcd}(\mathcal{A}(x),l(x),l_1(x))\rangle.$ On the other hand, there exist $\lambda_1(x),\lambda_2(x), \lambda_3(x)\in \mathbb{Z}_{p^r}[x]$ such that 
 \begin{align*}
   \mbox{gcd}(\mathcal{A}(x),l(x), l_1(x))&= \lambda_1(x)\mathcal{A}(x)+\lambda_2(x)l(x)+\lambda_3(x)l_1(x)\\
   &= \phi(\lambda_1(x))\mathcal{A}(x)+\phi(\lambda_2(x))l(x)+\lambda_3(x)l_1(x).\\
    &~~(\mbox{Since}~ \lambda_1(x),\lambda_2(x),\lambda_3(x)\in \mathbb{Z}_{p^r}[x], ~\mbox{then}~ \phi(\lambda_i(x))=\lambda_i(x),~\mbox{for}~ i=1,2,3.)
 \end{align*}
 Hence, we have that
 \begin{align*}
     &(\mbox{gcd}(\mathcal{A}(x),l(x),l_1(x))|\lambda_2\mathcal{B}(x)+\lambda_3(x)l_2(x)|\lambda_3\mathcal{G}(x))\\&= \lambda_1(x)*(\mathcal{A}(x)|0|0)+\lambda_2(x)*(l(x)|\mathcal{B}(x)|0)+\lambda_3(x)*(l_1(x)|l_2(x)|\mathcal{G}(x))\in C.
    \end{align*}
It follows that $\langle\mbox{gcd}(\mathcal{A}(x),l(x), l_1(x))\rangle\subseteq C_{\alpha}.$ Thus, we get $C_{\alpha}= \langle\mbox{gcd}(\mathcal{A}(x), l(x),l_1(x))\rangle$.
Similarly, we can prove that $C_{\beta}= \langle \mbox{gcd}(\mathcal{B}(x),l_2(x))\rangle,~ C_{\gamma}= \langle \mathcal{G}(x)\rangle.$
 \end{proof}
 \begin{lemma}\label{lemma4.9}
 Let $C=\langle (\mathcal{A}(x)|0|0),(l(x)|\mathcal{B}(x)|0), (l_1(x)|l_2(x)|\mathcal{G}(x))\rangle$ be a $\mathbb{Z}_{p^r}\mathbb{Z}_{p^r}\mathbb{Z}_{p^s}$-additive cyclic code of block length $(\alpha,\beta,\gamma)$. Then $\mathcal{A}(x)|l(x)$ if and only if $l(x)=0$ and $\mathcal{A}(x)|l_1(x)$ if and only if $l_1(x)=0$.
 \end{lemma}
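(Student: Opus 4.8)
The reverse implications are immediate: if $l(x)=0$ then $\mathcal A(x)\mid l(x)$, and if $l_1(x)=0$ then $\mathcal A(x)\mid l_1(x)$. So I only need to establish that $\mathcal A(x)\mid l(x)$ forces $l(x)=0$ and that $\mathcal A(x)\mid l_1(x)$ forces $l_1(x)=0$. The plan is to obtain both from a single short degree argument resting on the bounds $\deg l(x)<\deg\mathcal A(x)$ and $\deg l_1(x)<\deg\mathcal A(x)$ supplied by Lemma~\ref{lemma 4.5}.

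First I would record that $\mathcal A(x)$ behaves like a monic polynomial in $\mathbb Z_{p^r}[x]$. By the form in Theorem~\ref{th 4.4}, $a_0(x)$ is a monic divisor of $x^{\alpha}-1$ and $a_{r-1}(x)\mid\cdots\mid a_1(x)\mid a_0(x)$, so every $a_i(x)$ has degree at most $\deg a_0(x)$; hence the coefficient of $x^{\deg a_0(x)}$ in $\mathcal A(x)=a_0(x)+pa_1(x)+\cdots+p^{r-1}a_{r-1}(x)$ equals $1$ plus a multiple of $p$, which is a unit of $\mathbb Z_{p^r}$, and $\deg\mathcal A(x)=\deg a_0(x)$ (as already recorded in Lemma~\ref{lemma 4.5}). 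Then I would argue as follows: if $\mathcal A(x)\mid l(x)$, write $l(x)=q(x)\mathcal A(x)$ with $q(x)\in\mathbb Z_{p^r}[x]$; were $l(x)\neq 0$ we would have $q(x)\neq 0$, and since the leading coefficient of $\mathcal A(x)$ is a unit, $\deg l(x)=\deg q(x)+\deg\mathcal A(x)\geq\deg\mathcal A(x)$, contradicting $\deg l(x)<\deg\mathcal A(x)$. Hence $l(x)=0$, and the identical computation with the bound $\deg l_1(x)<\deg\mathcal A(x)$ gives $l_1(x)=0$.

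The only delicate point is the additivity $\deg\big(q(x)\mathcal A(x)\big)=\deg q(x)+\deg\mathcal A(x)$, an identity that can fail over a ring with zero divisors such as $\mathbb Z_{p^r}$; it is legitimate here precisely because the leading coefficient of $\mathcal A(x)$ is a unit, which is exactly what the monicity of $a_0(x)$ secures in the first step. Beyond keeping this in mind — and reading the relation $\mathcal A(x)\mid l(x)$ as divisibility of polynomials in $\mathbb Z_{p^r}[x]$, consistent with the way divisibility appears in the divisor chains throughout — I foresee no genuine obstacle.
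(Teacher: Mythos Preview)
Your proof is correct and takes a different route from the paper. The paper proceeds by generator manipulation: assuming $\mathcal A(x)\mid l(x)$, write $l(x)=\mu(x)\mathcal A(x)$ and observe that replacing the second generator $(l(x)\mid\mathcal B(x)\mid 0)$ by $(l(x)\mid\mathcal B(x)\mid 0)-\mu(x)*(\mathcal A(x)\mid 0\mid 0)=(0\mid\mathcal B(x)\mid 0)$ leaves $C$ unchanged, so one may take $l(x)=0$; this is essentially the reduction of Lemma~\ref{lemma 4.5} redone in the special case of exact divisibility. You instead invoke the conclusion of Lemma~\ref{lemma 4.5} directly: with $\deg l(x)<\deg\mathcal A(x)$ already secured, and having checked that the leading coefficient of $\mathcal A(x)$ is a unit in $\mathbb Z_{p^r}$ (so that degrees add in the product $q(x)\mathcal A(x)$), the relation $l(x)=q(x)\mathcal A(x)$ with $l(x)\neq 0$ would force $\deg l(x)\geq\deg\mathcal A(x)$, a contradiction. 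Your argument is shorter and more transparent, at the price of depending on Lemma~\ref{lemma 4.5}; the paper's argument is self-contained, but its final clause ``which implies $l(x)=0$'' really means ``we may take $l(x)=0$ in the normalized generating set'', which tacitly rests on the same degree normalization you exploit explicitly.
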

 \begin{proof}
 If $l(x)=0$, then clearly $\mathcal{A}(x)|l(x)$.\par
Suppose that $\mathcal{A}(x)|l(x)$, then there exists a polynomial $\mu(x)\in \mathbb{Z}_{p^r}[x]$ such that $l(x)= \mu(x)\mathcal{A}(x)$. Let
 \[C'= \langle (\mathcal{A}(x)|0|0),(0|\mathcal{B}(x)|0), (l_1(x)|l_2(x)|\mathcal{G}(x))\rangle.\]
 On the one hand, notice that 
 \[(0|\mathcal{B}(x)|0)=(l(x)|\mathcal{B}(x)|0)- \mu(x)*(\mathcal{A}(x)|0|0)\in C,\]
 thus, $C'\subseteq C$. On the other hand,
 \[(l(x)|\mathcal{B}(x)|0)= \mu(x)*(\mathcal{A}(x)|0|0)+(0|\mathcal{B}(x)|0)\in C',\]
 thus, $C\subseteq C'$. Therefore, $C=C'$, which implies $l(x)=0.$ Similarly, we can prove $\mathcal{A}(x)|l_1(x)$ if and only if $l_1(x)=0$.
 \end{proof}
 Similar to the above lemma, we have one more result as follows.
 \begin{lemma}\label{lemma4.10}
  Let $C=\langle (\mathcal{A}(x)|0|0),(l(x)|\mathcal{B}(x)|0), (l_1(x)|l_2(x)|\mathcal{G}(x))\rangle$ be a $\mathbb{Z}_{p^r}\mathbb{Z}_{p^r}\mathbb{Z}_{p^s}$-additive cyclic code of block length $(\alpha,\beta,\gamma)$. Then $\mathcal{B}(x)|l_2(x)$ if and only if $l_2(x)=0$.
 \end{lemma}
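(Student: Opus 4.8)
The plan is to mirror the argument of Lemma~\ref{lemma4.9}. The implication ``$l_2(x)=0\Rightarrow\mathcal{B}(x)\mid l_2(x)$'' is immediate, so assume $\mathcal{B}(x)\mid l_2(x)$ and write $l_2(x)=\mu(x)\mathcal{B}(x)$ for some $\mu(x)\in\mathbb{Z}_{p^r}[x]$ (a lift of the corresponding quotient in the principal ideal ring $\mathbb{Z}_{p^r}[x]/\langle x^{\beta}-1\rangle$). Regarding $\mu(x)$ as an element of $\mathbb{Z}_{p^s}[x]$ via $\epsilon$ and using $\phi\circ\epsilon=\mathrm{id}$ on $\mathbb{Z}_{p^r}$, one has
\[
\mu(x)*(l(x)\mid\mathcal{B}(x)\mid 0)=(\mu(x)l(x)\mid\mu(x)\mathcal{B}(x)\mid 0)=(\mu(x)l(x)\mid l_2(x)\mid 0).
\]

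Next I would introduce the code
\[
C'=\big\langle(\mathcal{A}(x)\mid 0\mid 0),\ (l(x)\mid\mathcal{B}(x)\mid 0),\ (l_1(x)-\mu(x)l(x)\mid 0\mid\mathcal{G}(x))\big\rangle,
\]
obtained from $C$ by subtracting $\mu(x)*(l(x)\mid\mathcal{B}(x)\mid 0)$ from the last generator. Since that subtraction is carried out inside $C$, every generator of $C'$ lies in $C$, whence $C'\subseteq C$; conversely $(l_1(x)\mid l_2(x)\mid\mathcal{G}(x))=(l_1(x)-\mu(x)l(x)\mid 0\mid\mathcal{G}(x))+\mu(x)*(l(x)\mid\mathcal{B}(x)\mid 0)\in C'$, so $C\subseteq C'$ and therefore $C=C'$. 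Thus $C$ is generated by a triple of the shape of Theorem~\ref{th 4.4} in which the middle component of the last generator vanishes, and arguing as in Lemma~\ref{lemma4.9} this forces $l_2(x)=0$. The symmetric reductions of Lemma~\ref{lemma 4.5} may be reapplied to the new generating set so that it is again in the normalised form.

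The step I expect to be the main obstacle is the very last one: ``$C=C'\Rightarrow l_2(x)=0$'' is not a purely formal consequence of the set equality, since elements of $\langle\mathcal{B}(x)\rangle$ can have degree strictly below $\deg\mathcal{B}(x)$, so a naive degree comparison does not close the argument. Rather, as in Lemma~\ref{lemma4.9}, it relies on the reduced generating data $(\mathcal{A},\mathcal{B},\mathcal{G};\,l,l_1,l_2)$ (normalised via Lemma~\ref{lemma 4.5} and Lemma~\ref{lemma4.6}) being essentially canonical, so that producing an alternative presentation with vanishing middle entry in the last generator means that entry was already zero. Granting this, the remaining bookkeeping is routine; equivalently, the lemma can be read as the assertion that the generators may always be normalised so that $\mathcal{B}(x)\nmid l_2(x)$ unless $l_2(x)=0$.
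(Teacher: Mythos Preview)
Your argument mirrors the paper's exactly: the paper does not prove Lemma~\ref{lemma4.10} separately but simply declares it ``similar to the above lemma'' (Lemma~\ref{lemma4.9}), and your subtraction producing $C'=\langle(\mathcal{A}|0|0),(l|\mathcal{B}|0),(l_1-\mu l\mid 0\mid\mathcal{G})\rangle$ is precisely the analogous move. The concern you flag about the final step is legitimate, but note that the paper's own proof of Lemma~\ref{lemma4.9} makes the identical jump (``$C=C'$, which implies $l(x)=0$'') without further comment; the intended reading, as you surmise, is that the normalized presentation of Lemma~\ref{lemma 4.5} may then be taken with $l_2(x)=0$.
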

By using the results discussed in Lemmas \ref{lemma4.8}, \ref{lemma4.9}, \ref{lemma4.10}, we get the following results for a $\mathbb{Z}_{p^r}\mathbb{Z}_{p^r}\mathbb{Z}_{p^s}$-additive cyclic code to be a separable code.
 \begin{lemma}
  Let $C=\langle (\mathcal{A}(x)|0|0),(l(x)|\mathcal{B}(x)|0), (l_1(x)|l_2(x)|\mathcal{G}(x))\rangle$ be a $\mathbb{Z}_{p^r}\mathbb{Z}_{p^r}\mathbb{Z}_{p^s}$-additive cyclic code of block length $(\alpha,\beta,\gamma)$. Then the following statements are equivalent:
  \begin{enumerate}
      \item C is separable;
      \item $\mathcal{A}(x)|l(x),~ \mathcal{A}(x)|l_1(x)$, $\mathcal{B}(x)|l_2(x)$;
      \item $C_{\alpha}= \langle \mathcal{A}(x)\rangle$, $C_{\beta}= \langle \mathcal{B}(x)\rangle$;
      \item $C=\langle (\mathcal{A}(x)|0|0),(0|\mathcal{B}(x)|0), (0|0|\mathcal{G}(x))\rangle$.
  \end{enumerate}
 \end{lemma}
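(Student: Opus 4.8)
The plan is to prove the implications $(1)\Rightarrow(3)\Rightarrow(2)\Rightarrow(4)\Rightarrow(1)$, which close up into a circle and hence give all four equivalences. Three of the four arrows are quick consequences of Lemmas \ref{lemma4.8}, \ref{lemma4.9} and \ref{lemma4.10} together with the fact that $\mathbb{Z}_{p^r}[x]/\langle x^{\alpha}-1\rangle$ and $\mathbb{Z}_{p^r}[x]/\langle x^{\beta}-1\rangle$ are principal ideal rings; the real content lies only in $(1)\Rightarrow(3)$. For $(3)\Rightarrow(2)$: by Lemma \ref{lemma4.8} we have $C_{\alpha}=\langle\gcd(\mathcal{A}(x),l(x),l_1(x))\rangle$, so $C_{\alpha}=\langle\mathcal{A}(x)\rangle$ forces the two ideals $\langle\gcd(\mathcal{A}(x),l(x),l_1(x))\rangle$ and $\langle\mathcal{A}(x)\rangle$ of $\mathbb{Z}_{p^r}[x]/\langle x^{\alpha}-1\rangle$ to coincide; since $l(x)$ and $l_1(x)$ lie in the former they then lie in $\langle\mathcal{A}(x)\rangle$, i.e.\ $\mathcal{A}(x)\mid l(x)$ and $\mathcal{A}(x)\mid l_1(x)$, and in exactly the same way $C_{\beta}=\langle\mathcal{B}(x)\rangle$ yields $\mathcal{B}(x)\mid l_2(x)$.

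For $(2)\Rightarrow(4)$: Lemmas \ref{lemma4.9} and \ref{lemma4.10} upgrade the divisibilities $\mathcal{A}(x)\mid l(x)$, $\mathcal{A}(x)\mid l_1(x)$, $\mathcal{B}(x)\mid l_2(x)$ to the equalities $l(x)=l_1(x)=l_2(x)=0$, and substituting these into the generating set of $C$ from Theorem \ref{th 4.4} produces exactly the generating set displayed in $(4)$. For $(4)\Rightarrow(1)$: when $C$ is generated as in $(4)$ the three generators occupy pairwise disjoint coordinate blocks, so an arbitrary codeword is $(\phi(m_1(x))\mathcal{A}(x)|\phi(m_2(x))\mathcal{B}(x)|m_3(x)\mathcal{G}(x))$ with $m_i(x)\in\mathbb{Z}_{p^s}[x]$; since $\phi$ is surjective on polynomials this gives $C=\langle\mathcal{A}(x)\rangle\times\langle\mathcal{B}(x)\rangle\times\langle\mathcal{G}(x)\rangle$, and projecting onto each block identifies the three factors with $C_{\alpha}$, $C_{\beta}$, $C_{\gamma}$, so $C=C_{\alpha}\times C_{\beta}\times C_{\gamma}$ is separable.

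It remains to prove $(1)\Rightarrow(3)$, and this is the step I expect to require the most care. Assume $C=C_{\alpha}\times C_{\beta}\times C_{\gamma}$. The inclusions $\langle\mathcal{A}(x)\rangle\subseteq C_{\alpha}$ and $\langle\mathcal{B}(x)\rangle\subseteq C_{\beta}$ are immediate because $(\mathcal{A}(x)|0|0)$ and $(l(x)|\mathcal{B}(x)|0)$ are codewords. For the reverse inclusions I would go back to the way $\mathcal{A}(x)$ and $\mathcal{B}(x)$ are produced in the proof of Theorem \ref{th 4.4}: there $Ker(\Phi)=\langle(\mathcal{A}(x)|0|0),(l(x)|\mathcal{B}(x)|0)\rangle$, and $\mathcal{B}(x)$ is a generator of the projection of $Ker(\Phi)$ onto the second block, i.e.\ of $\{v(x):(u(x)|v(x)|0)\in C\ \text{for some}\ u(x)\}$, while $\mathcal{A}(x)$ is a generator of the cyclic code $\{u(x):(u(x)|0|0)\in C\}$. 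Since $C$ is separable and $0\in C_{\beta}$, $0\in C_{\gamma}$, we get $(u(x)|0|0)\in C$ iff $u(x)\in C_{\alpha}$, so $C_{\alpha}=\{u(x):(u(x)|0|0)\in C\}=\langle\mathcal{A}(x)\rangle$; similarly $(u(x)|v(x)|0)\in C$ for some $u(x)$ iff $v(x)\in C_{\beta}$ (take $u(x)=0$), so $C_{\beta}=\langle\mathcal{B}(x)\rangle$, which is $(3)$. The delicate point is exactly this identification of $\mathcal{A}(x)$ as a generator of $\{u(x):(u(x)|0|0)\in C\}$: it does not follow from the divisibility chains $a_{r-1}(x)\mid\cdots\mid a_0(x)\mid(x^{\alpha}-1)$ alone but must be read off from the construction of $\mathcal{A}(x)$ in Theorem \ref{th 4.4} (via Theorem \ref{th46}), and it is precisely this property that pins $\mathcal{A}(x)$ down inside $C_{\alpha}$ in the separable case. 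Once that is made explicit, everything else in the argument is the gcd and annihilator bookkeeping already packaged in Lemmas \ref{lemma4.8}--\ref{lemma4.10}.
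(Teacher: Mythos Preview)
Your proof is correct and follows essentially the same route as the paper: the same three lemmas (\ref{lemma4.8}, \ref{lemma4.9}, \ref{lemma4.10}) handle the easy arrows, and the non-trivial direction is deduced from the construction in Theorem~\ref{th 4.4} identifying $\langle\mathcal{A}(x)\rangle$ with $\{u(x):(u(x)\mid 0\mid 0)\in C\}$. The only cosmetic difference is the ordering of the cycle---the paper goes $(1)\Rightarrow(2)$, $(2)\Leftrightarrow(3)$, $(2)\Rightarrow(4)$, $(4)\Rightarrow(1)$, while you go $(1)\Rightarrow(3)\Rightarrow(2)\Rightarrow(4)\Rightarrow(1)$---but the substance of each step is identical.
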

 \begin{proof}
 $(1)\implies (2):$\\
 If $C$ is separable, then by Lemma \ref{lemma4.8}, we have
 \[C=C_{\alpha}\times C_{\beta}\times C_{\gamma}= \langle \mbox{gcd}(\mathcal{A}(x),l(x),l_1(x))\rangle\times \langle \mbox{gcd}(\mathcal{B}(x),l_2(x))\rangle\times \langle \mathcal{G}(x)\rangle.\]
 Since $(\mbox{gcd}(\mathcal{A}(x),l(x),l_1(x))|0|0)\in C$, this implies that there exists some $\lambda(x)\in \mathbb{Z}_{p^r}[x]$ such that $\mbox{gcd}(\mathcal{A}(x),l(x),l_1(x))= \lambda(x)\mathcal{A}(x)$. Hence, $\mathcal{A}(x)|l(x), ~\mathcal{A}(x)|l_1(x)$. Similarly, we can obtain that $\mathcal{B}(x)|l_1(x).$\\
 $(2)\iff (3):$\\
 From Lemma \ref{lemma4.8}, we can obtain this result.\\
 $(2)\implies (4)$:\\
 From previous Lemmas \ref{lemma4.9} and \ref{lemma4.10}, we can prove this result.\\
 $(4)\implies (1):$\\
 Suppose that, $C=\langle (\mathcal{A}(x)|0|0),(0|\mathcal{B}(x)|0), (0|0|\mathcal{G}(x))\rangle$, this implies that $C= \langle \mathcal{A}(x)\rangle \times \langle \mathcal{B}(x)\rangle \times \langle \mathcal{G}(x)\rangle = C_{\alpha}\times C_{\beta}\times C_{\gamma}.$ Hence, $C$  is separable.
 \end{proof}
 In the above discussions, We studied the generator polynomials of $\mathbb{Z}_{p^r}\mathbb{Z}_{p^r}\mathbb{Z}_{p^s}$- additive cyclic codes and each of their canonical projection. We also have discussed some necessary and sufficient conditions on these polynomials. These conditions are used to determine generator polynomials of separable codes. We now classify some cases of $\mathbb{Z}_{p^r}\mathbb{Z}_{p^r}\mathbb{Z}_{p^s}$- additive cyclic codes as follows.
\begin{theorem}\label{th4.12}
Let $C$ be a $\mathbb{Z}_{p^r}\mathbb{Z}_{p^r}\mathbb{Z}_{p^s}$- additive cyclic code of block length $(\alpha,\beta,\gamma)$. Then the classification of $C$ is as follows:
\begin{enumerate}
\item $C=\langle \mathcal{A}(x)|0|0\rangle$ where $\mathcal{A}(x)=a_0(x)+pa_1(x)+\cdots+p^{r-1}a_{r-1}(x)$ with $a_{r-1}(x)|a_{r-2}(x)|\cdots|a_1(x)|$ $a_0(x)|(x^{\alpha}-1)$;
\item $C=\langle0|\mathcal{B}(x)|0\rangle$ where $\mathcal{B}(x)=b_0(x)+pb_1(x)+\cdots+p^{r-1}b_{r-1}(x)$ with $b_{r-1}(x)|b_{r-2}(x)|\cdots|b_1(x)|$ $b_0(x)|(x^{\beta}-1)$;
\item $C=\langle0|0|\mathcal{G}(x)\rangle$ where $\mathcal{G}(x)=g_0(x)+pg_1(x)+\cdots+p^{s-1}g_{s-1}(x)$ with $g_{s-1}(x)|g_{s-2}(x)|\cdots|g_1(x)|$ $g_0(x)|(x^{\gamma}-1)$;
\item $C=\langle l(x)|\mathcal{B}(x)|0\rangle$ where $\mathcal{B}(x)=b_0(x)+pb_1(x)+\cdots+p^{r-1}b_{r-1}(x)$ with $b_{r-1}(x)|b_{r-2}(x)|\cdots|b_1(x)|$ $b_0(x)|(x^{\beta}-1)$ and $(x^{\alpha}-1)|\phi\Big(\frac{x^{\beta}-1}{b_{r-1}(x)}\Big)l(x)$;

\item $C=\langle (l_1(x)|l_2(x)|\mathcal{G}(x))\rangle$ with $(x^{\alpha}-1)|\phi\big(\frac{x^{\gamma}-1}{g_{s-1}(x)}\big)l_1(x), ~(x^{\beta}-1)|\phi\big(\frac{x^{\gamma}-1}{g_{s-1}(x)}\big)l_2(x)$ and the polynomial $\mathcal{G}(x)=g_0(x)+pg_1(x)+\cdots+p^{s-1}g_{s-1}(x)$ satisfies $g_{s-1}(x)|g_{s-2}(x)|\cdots|g_1(x)|g_0(x)|(x^{\gamma}-1)$;

\item $C=\langle (\mathcal{A}(x)|0|0),(l(x)|\mathcal{B}(x)|0) \rangle$ with $\deg (l(x))< \deg (\mathcal{A}(x)),~ \mathcal{A}(x)|\phi\Big(\frac{x^{\beta}-1}{b_{r-1}(x)}\Big)l(x)$ and the polynomials $\mathcal{A}(x),~\mathcal{B}(x)$ satisfy above mentioned conditions;
\item $C=\langle (\mathcal{A}(x)|0|0),(l_1(x)|l_2(x)|\mathcal{G}(x)\rangle$ with $\deg(l_1(x))< \deg(\mathcal{A}(x)),~\mathcal{A}(x)|\phi\Big(\frac{x^{\gamma}-1}{g_{s-1}(x)}\Big)l_1(x)$, $(x^{\beta}-1)|\phi\big(\frac{x^{\gamma}-1}{g_{s-1}(x)}\big)l_2(x)$ and the polynomials $\mathcal{A}(x),~\mathcal{G}(x)$ satisfy above mentioned conditions;
\item $C=\langle (l(x)|\mathcal{B}(x)|0),(l_1(x)|l_2(x)|\mathcal{G}(x)\rangle$ with $(x^{\alpha}-1)|\phi\big(\frac{x^{\beta}-1}{b_{r-1}(x)}\big)l(x),~(x^{\alpha}-1)|\phi\big(\frac{x^{\gamma}-1}{g_{s-1}(x)}\big)l_1(x)$, $ $ $\deg(l_2(x))< \deg(\mathcal{B}(x))$ and the polynomials $\mathcal{B}(x),~\mathcal{G}(x)$ satisfy above mentioned conditions;
\item $C=\langle (0|\mathcal{B}(x)|0),(0|l_2(x)|\mathcal{G}(x)\rangle$ with $\mathcal{B}(x)|\phi\big(\frac{x^{\gamma}-1}{g_{s-1}(x)}\big)l_2(x)$, $\deg(l_2(x))< \deg(\mathcal{B}(x))$ and the polynomials $\mathcal{B}(x),~\mathcal{G}(x)$ satisfy above mentioned conditions;
\item $C=\langle (\mathcal{A}(x)|0|0), (0|\mathcal{B}(x)|0),(l_1(x)|l_2(x)|\mathcal{G}(x)\rangle$ with  $\deg(l_1(x))< \deg(\mathcal{A}(x)),~\mathcal{A}(x)|\phi\Big(\frac{x^{\gamma}-1}{g_{s-1}(x)}\Big)l_1(x)$, $\deg(l_2(x))< \deg(\mathcal{B}(x)),~\mathcal{B}(x)|\phi\big(\frac{x^{\gamma}-1}{g_{s-1}(x)}\big)l_2(x)$ and the polynomials $\mathcal{A}(x),~\mathcal{B}(x),~\mathcal{G}(x)$ satisfy above mentioned conditions;
\item $C=\langle (\mathcal{A}(x)|0|0), (0|\mathcal{B}(x)|0),(0|0|\mathcal{G}(x)\rangle$ and the polynomials $\mathcal{A}(x),~\mathcal{B}(x),~\mathcal{G}(x)$ satisfy above mentioned conditions;
\item $C=\langle (\mathcal{A}(x)|0|0), (l(x)|\mathcal{B}(x)|0),(l_1(x)|0|\mathcal{G}(x)\rangle$ with  $\deg(l_1(x))< \deg(\mathcal{A}(x)),~\deg(l_1(x))< \deg(\mathcal{A}(x))$, $\mathcal{A}(x)|\phi\Big(\frac{x^{\gamma}-1}{g_{s-1}(x)}\Big)l_1(x),~\mathcal{A}(x)|\phi\Big(\frac{x^{\beta}-1}{b_{r-1}(x)}\Big)l(x)$and the polynomials $\mathcal{A}(x),~\mathcal{B}(x),~\mathcal{G}(x)$ satisfy above mentioned conditions;
\item $C=\langle (\mathcal{A}(x)|0|0), (l(x)|\mathcal{B}(x)|0),(0|l_2(x)|\mathcal{G}(x)\rangle$ with  $\deg(l(x))< \deg(\mathcal{A}(x)),~\mathcal{A}(x)|\phi\Big(\frac{x^{\beta}-1}{b_{r-1}(x)}\Big)l(x)$, $\deg(l_2(x))< \deg(\mathcal{B}(x)),~\mathcal{B}(x)|\phi\big(\frac{x^{\gamma}-1}{g_{s-1}(x)}\big)l_2(x)$ and the polynomials $\mathcal{A}(x),~\mathcal{B}(x),~\mathcal{G}(x)$ satisfy above mentioned conditions;
\item $C=\langle (\mathcal{A}(x)|0|0), (l(x)|\mathcal{B}(x)|0),(l_1(x)|l_2(x)|\mathcal{G}(x)\rangle$ with  $\deg(l(x))< \deg(\mathcal{A}(x))$, $\deg(l_1(x))<\linebreak \deg(\mathcal{A}(x))$, $\deg(l_2(x))< \deg(\mathcal{B}(x))$, where $l(x),~l_1(x),~l_2(x)$ satisfy the conditions presented in Lemma \ref{lemma4.6},  and the polynomials $\mathcal{A}(x),~\mathcal{B}(x),~\mathcal{G}(x)$ satisfy above mentioned conditions.
\end{enumerate}
\end{theorem}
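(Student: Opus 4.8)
The plan is to run an exhaustive case analysis whose starting point is the general description of $C$ furnished by Theorem \ref{th 4.4}: every $\mathbb{Z}_{p^r}\mathbb{Z}_{p^r}\mathbb{Z}_{p^s}$-additive cyclic code can be written as $C=\langle (\mathcal{A}(x)|0|0),(l(x)|\mathcal{B}(x)|0),(l_1(x)|l_2(x)|\mathcal{G}(x))\rangle$ with the divisibility chains on the $a_i,b_i,g_i$. By Lemma \ref{lemma 4.5} we may normalise the generators so that $\deg l<\deg\mathcal{A}$, $\deg l_1<\deg\mathcal{A}$ and $\deg l_2<\deg\mathcal{B}$, and by Lemma \ref{lemma4.6} the polynomials $l,l_1,l_2$ satisfy the three divisibility relations recorded there. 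The classification then amounts to splitting according to which of the three generators is genuinely needed — equivalently, whether each of $\mathcal{A}(x)$, $\mathcal{B}(x)$, $\mathcal{G}(x)$ is a proper divisor of $x^\alpha-1$, $x^\beta-1$, $x^\gamma-1$ (so the corresponding ideal is nonzero) or equals it (so the corresponding generator collapses to zero) — and, within each such configuration, which of $l,l_1,l_2$ is forced to vanish.

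Concretely, I would first branch on $C_\gamma=\langle\mathcal{G}(x)\rangle$, which is computed in Lemma \ref{lemma4.8}. If $C_\gamma=0$ then the last $\gamma$ coordinates vanish on all of $C$, so $C$ is the image of a double cyclic code over $\mathbb{Z}_{p^r}$ and Theorem \ref{th46} applies; running through its possibilities — $\mathcal{A}$ present or not, $\mathcal{B}$ present or not, $l$ zero or not — produces items (1), (2), (4), (6), with the fully degenerate subcase $C=0$ and the pure item (3) handled alongside. If $C_\gamma\neq 0$ the third generator is genuinely present, and one examines $\mathrm{Ker}(\Phi)=\langle(\mathcal{A}(x)|0|0),(l(x)|\mathcal{B}(x)|0)\rangle$, which projected onto the first two blocks is again a double cyclic code over $\mathbb{Z}_{p^r}$; combining each shape of $\mathrm{Ker}(\Phi)$ with the third generator $(l_1(x)|l_2(x)|\mathcal{G}(x))$ and using Lemmas \ref{lemma4.9} and \ref{lemma4.10} to decide when $l$, $l_1$, $l_2$ must be $0$ yields items (5) and (7)--(14): e.g. $\mathrm{Ker}(\Phi)=0$ gives (5); when only $(\mathcal{A}(x)|0|0)$ survives in $\mathrm{Ker}(\Phi)$ one gets (7), (10) or (11) according to whether $l_2=0$ and $\mathcal{G}$ is trivial; and when all three generators survive one gets (12) ($l_2=0$), (13) ($l_1=0$) or the generic item (14). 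In each configuration one must check that the divisibility relations of Lemma \ref{lemma4.6} specialise correctly — in particular that a condition of the form $\mathcal{A}(x)\mid\phi(\cdot)l(x)$ degenerates to $(x^\alpha-1)\mid\phi(\cdot)l(x)$ exactly when the first generator has collapsed, and similarly $\mathcal{B}(x)$ versus $x^\beta-1$.

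The genuinely routine part is verifying, in each of the fourteen configurations, that the displayed elements do generate $C$ and that the stated constraints hold; this is immediate from Theorem \ref{th 4.4} together with Lemmas \ref{lemma 4.5}, \ref{lemma4.6}, \ref{lemma4.8}, \ref{lemma4.9}, \ref{lemma4.10} and the degree-reduction trick used in the proof of Lemma \ref{lemma 4.5}. The main obstacle is organisational rather than mathematical: one has to be certain the list is exhaustive and that the cases are arranged so that the ``above mentioned conditions'' are inherited consistently, tracking precisely how the divisibility relations of Lemma \ref{lemma4.6} degenerate when one of $\mathcal{A}(x)$, $\mathcal{B}(x)$, $\mathcal{G}(x)$ equals $x^\alpha-1$, $x^\beta-1$, $x^\gamma-1$, and checking that the degree normalisations rule out spurious overlaps. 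A secondary subtlety is that ``present'' versus ``absent'' for a generator is not simply ``$\mathcal{A}(x)$ trivial or not'': the generator $(\mathcal{A}(x)|0|0)$ may be redundant because it already lies in the submodule generated by the other two, which by Lemma \ref{lemma4.6}(i),(iii) happens exactly when $\mathcal{A}(x)$ divides the relevant $\phi(\cdot)l$-expressions — and it is precisely the degree conditions $\deg l<\deg\mathcal{A}$, $\deg l_1<\deg\mathcal{A}$, $\deg l_2<\deg\mathcal{B}$ that normalise this ambiguity away.
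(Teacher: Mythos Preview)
Your proposal is correct and is in fact more detailed than what the paper does: the paper states Theorem \ref{th4.12} with no proof at all, presenting it as an immediate enumeration of the special cases of Theorem \ref{th 4.4} once Lemmas \ref{lemma 4.5}, \ref{lemma4.6}, \ref{lemma4.9} and \ref{lemma4.10} are in hand. Your outline --- branching on whether $C_\gamma$ is trivial, then analysing $\mathrm{Ker}(\Phi)$ as a $\mathbb{Z}_{p^r}$-double cyclic code, and using the degree normalisations and divisibility lemmas to decide which of $l,l_1,l_2$ vanish --- is exactly the argument the paper leaves implicit; one small slip is that items (10) and (11) have \emph{both} $(\mathcal{A}|0|0)$ and $(0|\mathcal{B}|0)$ surviving in $\mathrm{Ker}(\Phi)$ (with $l=0$), not only the first, but this is organisational rather than mathematical.
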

We obtain the following remark from Theorem \ref{th4.12}.
\begin{remark}\em
 In the cases $(1)$, $(2)$ and $(3)$, the code $C$ is a $\mathbb{Z}_{p^r}$-additive cyclic code,  $\mathbb{Z}_{p^s}$-additive cyclic code and in the case $(6)$ the code $C$ is a double cyclic code over $\mathbb{Z}_{p^r}$. In the case $(9)$ the code $C$ is a $\mathbb{Z}_{p^r}\mathbb{Z}_{p^s}$-additive cyclic code. Finally, the case $(14)$ is the general definition of $\mathbb{Z}_{p^r}\mathbb{Z}_{p^r}\mathbb{Z}_{p^s}$-additive cyclic codes that covers all the other cases.
\end{remark}
Now, we present an example to illustrate our results.
\begin{example}\em
 Let $R_{5,5,2}=\mathbb{Z}_3[x]/\langle x^5-1\rangle\times\mathbb{Z}_3[x]/\langle x^5-1\rangle\times\mathbb{Z}_9[x]/\langle x^2-1\rangle$. Consider a $\mathbb{Z}_3\mathbb{Z}_3\mathbb{Z}_9$-additive cyclic code $C$ of block length $(5,5,2)$  generated by
 $$
 \{ (1+x+x^2+x^3+x^4|0|0),(1+x|2+x|0),(1|0|(1+x)+3)\}.
 $$
 Then, $a_0(x)=1+x+x^2+x^3+x^4,~b_0(x)=2+x,~l(x)=1+x,~l_1(x)=1,~l_2(x)=0,~g_0(x)=1+x$ and $g_1(x)=1$.
 \end{example}
 \section{Minimal Generating Sets}
In this section, our goal is to find minimal generating sets of $\mathbb{Z}_{p^r}\mathbb{Z}_{p^r}\mathbb{Z}_{p^s}$-additive cyclic codes as  $\mathbb{Z}_{p^s}$-submodules. Once we found this set, we will use it to determine the size of $\mathbb{Z}_{p^s}$-submodules  of $R_{\alpha,\beta,\gamma}$.\par
For the rest of the discussion, any $\mathbb{Z}_{p^r}\mathbb{Z}_{p^r}\mathbb{Z}_{p^s}$-additive cyclic code $C$ of block length $(\alpha,\beta,\gamma)$ is of the from $C= \langle (\mathcal{A}(x)|0|0),(l(x)|\mathcal{B}(x)|0), (l_1(x)|l_2(x)|\mathcal{G}(x))\rangle$ where $\mathcal{A}(x)= a_0(x)+pa_1(x)+\cdots+p^{r-1}a_{r-1}(x),~ \mathcal{B}(x)=b_0(x)+pb_1(x)+\cdots+p^{r-1}b_{r-1}(x)$ and $\mathcal{G}(x)=g_0(x)+pg_1(x)+\cdots+p^{s-1}g_{s-1}(x)$, for the polynomials $a_i(x),~b_j(x),~\mbox{and}~ g_k(x)$ as in Theorem \ref{th 4.4}. Since $a_0(x)$ is a factor of $x^{\alpha}-1$ and the polynomial $a_i(x)$ is a factor of $a_{i-1}(x)$ for $i=1,2,\cdots, r-1$, we denote $\hat{a}_0(x)= \frac{x^{\alpha}-1}{a_0(x)},~ \hat{a}_i(x)= \frac{a_{i-1}(x)}{a_i(x)}$ for $i=1,2,\cdots, r-1$ and $\hat{a}_r(x)= a_{r-1}(x)$. Similarly, we define $\hat{b}_0(x)= \frac{x^{\beta}-1}{b_0(x)},~ \hat{b}_j(x)= \frac{b_{j-1}(x)}{b_j(x)}$ for $j=1,2,\cdots, r-1$ and $\hat{b}_r(x)= b_{r-1}(x)$. In the same way, we define $\hat{g}_0(x)= \frac{x^{\gamma}-1}{g_0(x)},~ \hat{g}_k(x)= \frac{g_{k-1}(x)}{g_k(x)}$ for $k=1,2,\cdots, s-1$ and $\hat{g}_s(x)= g_{s-1}(x)$.
\begin{theorem}\label{th5.1}
 Let $C=\langle (\mathcal{A}(x)|0|0), (l(x)|\mathcal{B}(x)|0),(l_1(x)|l_2(x)|\mathcal{G}(x))$ be a $\mathbb{Z}_{p^r} \mathbb{Z}_{p^r} \mathbb{Z}_{p^s}$-additive cyclic code of block length $(\alpha,\beta,\gamma)$, where $\mathcal{A}(x)=a_0(x)+pa_1(x)+\cdots+p^{r-1}a_{r-1}(x) ,~\mathcal{B}(x)=b_0(x)+pb_1(x)+\cdots+p^{r-1}b_{r-1}(x),~\mathcal{G}(x)=g_0(x)+pg_1(x)+\cdots+p^{s-1}g_{s-1}(x)$ such that $a_{r-1}(x)|a_{r-2}(x)|\cdots|a_1(x)|a_0(x)|(x^{\alpha}-1)$, $b_{r-1}(x)|b_{r-2}(x)|\cdots|b_1(x)|$ $b_0(x)|(x^{\beta}-1),~g_{s-1}(x)|g_{s-2}(x)|\cdots|g_1(x)|g_0(x)|(x^{\gamma}-1).$ Define, 
 \[A_i= \Bigg\{x^m\big(\prod_{t=0}^{i-1}\hat{a}_t(x)\big)*(\mathcal{A}(x)|0|0)\Bigg\}_{m=0}^{\deg(\hat{a}_i(x))-1}\]
 for $0\leq i\leq r-1,$
 \[B_j= \Bigg\{x^m\big(\prod_{t=0}^{j-1}\hat{b}_t(x)\big)*(l(x)|\mathcal{B}(x)|0)\Bigg\}_{m=0}^{\deg(\hat{b}_j(x))-1}\]
 for $0\leq j\leq r-1,$ and
  \[G_k= \Bigg\{x^m\big(\prod_{t=0}^{k-1}\hat{g}_t(x)\big)*(l_1(x)|l_2(x)|\mathcal{G}(x))\Bigg\}_{m=0}^{\deg(\hat{g}_k(x))-1}\]
 for $0\leq k\leq s-1$. Then
 \[S=\bigg(\bigcup_{i=0}^{r-1}A_i\bigg)\cup\bigg(\bigcup_{j=0}^{r-1}B_j\bigg)\cup \bigg(\bigcup_{k=0}^{s-1}G_k\bigg)\]
 forms a minimal generating set for $C$ as a $\mathbb{Z}_{p^s}$-submodule. Furthermore, 
 \[|C|=p^{\sum_{i=0}^{r-1}(r-i)\deg(\hat{a}_i(x))+\sum_{j=0}^{r-1}(r-j)\deg(\hat{b}_j(x))+\sum_{k=0}^{s-1}(s-k)\deg(\hat{g}_k)}.\]
\end{theorem}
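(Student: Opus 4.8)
The plan is to imitate the proof of Theorem~\ref{th2.5}, now carried out on the three ``layers'' of $C$ exposed by the homomorphism $\Phi\colon C\to\mathbb{Z}_{p^s}[x]/\langle x^{\gamma}-1\rangle$ from the proof of Theorem~\ref{th 4.4} and by the double cyclic structure of $\ker(\Phi)$. The inclusion $\langle S\rangle_{p^s}\subseteq C$ is immediate, since every element of $A_i$, $B_j$ or $G_k$ has the shape $\eta(x)*g$ with $g$ one of the three generators of $C$ and $\eta(x)\in\mathbb{Z}_{p^s}[x]$ (for $A_i$ and $B_j$ the multiplier $x^m\prod\hat a_t(x)$, resp.\ $x^m\prod\hat b_t(x)$, lies in $\mathbb{Z}_{p^r}[x]$, which embeds into $\mathbb{Z}_{p^s}[x]$). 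So the heart of the argument is the reverse inclusion $C\subseteq\langle S\rangle_{p^s}$.

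To prove $C\subseteq\langle S\rangle_{p^s}$, write an arbitrary codeword as $m_1(x)*(\mathcal{A}(x)|0|0)+m_2(x)*(l(x)|\mathcal{B}(x)|0)+m_3(x)*(l_1(x)|l_2(x)|\mathcal{G}(x))$ and reduce it in three stages. Stage~1: apply Theorem~\ref{th2.5} to the cyclic code $\langle\mathcal{G}(x)\rangle=\Phi(C)$ over $\mathbb{Z}_{p^s}$ to express $m_3(x)\mathcal{G}(x)$ as a $\mathbb{Z}_{p^s}$-combination of the polynomials $x^m(\prod_{t=0}^{k-1}\hat g_t)\mathcal{G}$; subtracting the matching $\mathbb{Z}_{p^s}$-combination of elements of $\bigcup_k G_k$ from the codeword produces an element whose third coordinate is $0$, i.e.\ an element of $\ker(\Phi)$. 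Stage~2: by the construction in Theorem~\ref{th 4.4}, $\ker(\Phi)=\langle(\mathcal{A}|0|0),(l|\mathcal{B}|0)\rangle$ and $\{(w|0|0)\in\ker(\Phi)\}=\langle(\mathcal{A}|0|0)\rangle$; writing the remaining vector as $m_1'*(\mathcal{A}|0|0)+m_2'*(l|\mathcal{B}|0)$, its second coordinate is $\phi(m_2')\mathcal{B}\in\langle\mathcal{B}\rangle$, which Theorem~\ref{th2.5} rewrites as a $\mathbb{Z}_{p^r}$-combination of $x^m(\prod_{t=0}^{j-1}\hat b_t)\mathcal{B}$; subtracting the corresponding combination of $\bigcup_j B_j$ leaves an element of $\langle(\mathcal{A}|0|0)\rangle$. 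Stage~3: apply Theorem~\ref{th2.5} to $\langle\mathcal{A}(x)\rangle$ to write the last first coordinate as a combination of $x^m(\prod_{t=0}^{i-1}\hat a_t)\mathcal{A}$, i.e.\ of $\bigcup_i A_i$. The point requiring attention is that the subtractions in Stages~1 and~2 also alter the earlier coordinates (through $l_1,l_2$, resp.\ through $l$); these alterations are simply carried along as part of the vectors processed in the later stages, and the reduction terminates inside $\langle S\rangle_{p^s}$ thanks to the degree bounds $\deg l<\deg\mathcal{A}$, $\deg l_1<\deg\mathcal{A}$, $\deg l_2<\deg\mathcal{B}$ of Lemma~\ref{lemma 4.5}.

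For the size of $C$, $\Phi$ is a surjective $\mathbb{Z}_{p^s}[x]$-module homomorphism onto $\langle\mathcal{G}(x)\rangle$, so $|C|=|\ker(\Phi)|\cdot|\langle\mathcal{G}(x)\rangle|$; Corollary~\ref{cor1} gives $|\langle\mathcal{G}(x)\rangle|=p^{\sum_{k=0}^{s-1}(s-k)\deg\hat g_k}$. Since $\ker(\Phi)$ is the double cyclic code $\langle(\mathcal{A}|0|0),(l|\mathcal{B}|0)\rangle$ over $\mathbb{Z}_{p^r}$, its projection onto the second coordinate is surjective onto $\langle\mathcal{B}(x)\rangle$ with kernel $\langle(\mathcal{A}|0|0)\rangle$, whence $|\ker(\Phi)|=|\langle\mathcal{A}(x)\rangle|\cdot|\langle\mathcal{B}(x)\rangle|=p^{\sum_{i=0}^{r-1}(r-i)\deg\hat a_i}\cdot p^{\sum_{j=0}^{r-1}(r-j)\deg\hat b_j}$, again by Corollary~\ref{cor1}. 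Multiplying the three factors yields the stated formula for $|C|$.

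It remains to see that $S$ is minimal. We already know $S$ generates $C$ and that $|S|=\sum_i\deg\hat a_i+\sum_j\deg\hat b_j+\sum_k\deg\hat g_k$, which is exactly the total number of elements in the three minimal generating sets of $\langle\mathcal{A}\rangle$, $\langle\mathcal{B}\rangle$, $\langle\mathcal{G}\rangle$ given by Theorem~\ref{th2.5}. To rule out a shorter set, I would show that the images of $S$ in $C/pC$ are $\mathbb{Z}_p$-linearly independent, by applying $\Phi$ and then the second-coordinate projection of $\ker(\Phi)$ and peeling off, in turn, the $G_k$'s, the $B_j$'s and the $A_i$'s: at each step the surviving images are the mod-$p$ reductions of the Theorem~\ref{th2.5} minimal generating sets of $\langle\mathcal{G}\rangle$, $\langle\mathcal{B}\rangle$, $\langle\mathcal{A}\rangle$, which are independent there. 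The main obstacle is precisely this last point: because the layers are linked by $l,l_1,l_2$ and do not split $C$ as a direct sum (an element of $B_j$ may have additive order larger than $p^{r-j}$), minimality cannot be read off a single reduction modulo $p$, and one must use the defining relations of Lemma~\ref{lemma4.6} to show that $pC\cap\ker(\Phi)=p\ker(\Phi)$ (and the analogous statement one layer down) so that the peeling can be iterated. Controlling the cross-coordinate corrections in the spanning step is the other place where the argument, though elementary, has to be done with care.
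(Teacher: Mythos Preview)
Your proposal is correct and follows essentially the same route as the paper. The paper also expresses an arbitrary codeword as $m_1*(\mathcal{A}|0|0)+m_2*(l|\mathcal{B}|0)+m_3*(l_1|l_2|\mathcal{G})$ and pushes each summand into $\langle S\rangle_{p^s}$ by repeated division against the $\hat b_j$'s and $\hat g_k$'s (Theorem~\ref{th2.5}); where you appeal directly to $\ker\Phi=\langle(\mathcal{A}|0|0),(l|\mathcal{B}|0)\rangle$ to absorb the residue after clearing the $\gamma$-coordinate, the paper instead invokes Lemma~\ref{lemma4.6} to show that the final overflow $q_{s-1}\big(\prod_{t=0}^{s-1}\hat g_t\big)*(l_1|l_2|\mathcal{G})$ can be rewritten as a combination of $(\mathcal{A}|0|0)$ and $(l|\mathcal{B}|0)$ --- these are two packagings of the same fact.

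Two minor points. First, the degree bounds of Lemma~\ref{lemma 4.5} are not what make your reduction terminate; Theorem~\ref{th2.5} applied to $\langle\mathcal{G}\rangle$, $\langle\mathcal{B}\rangle$, $\langle\mathcal{A}\rangle$ is already a finite procedure, and the cross-coordinate corrections you carry along are simply absorbed when you re-express the residue via the generators of $\ker\Phi$. Second, your proposed mod-$p$ peeling argument for minimality (and the condition $pC\cap\ker\Phi=p\ker\Phi$) is more than the paper uses: it dispatches minimality in one sentence by observing that the $\gamma$-, $\beta$-, $\alpha$-projections of $\bigcup_k G_k$, $\bigcup_j B_j$, $\bigcup_i A_i$ are precisely the Theorem~\ref{th2.5} minimal generating sets of $\langle\mathcal{G}\rangle$, $\langle\mathcal{B}\rangle$, $\langle\mathcal{A}\rangle$, so any redundancy in $S$ would project to a redundancy in one of those cyclic codes.
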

 \begin{proof}
By Theorem \ref{th2.5}, we can easily see that the elements in $S$ are $\mathbb{Z}_{p^s}$-linearly independent as $\big(\bigcup_{i=0}^{r-1}A_i\big)_{\alpha}$, $\big(\bigcup_{j=0}^{r-1}B_j\big)_{\beta}$ and $\big(\bigcup_{k=0}^{s-1}G_k\big)_{\gamma}$ form  minimal generating sets for the codes $C_{\alpha}$, $C_{\beta}$ and $C_{\gamma}$, respectively. Let $c(x)$ be any codeword of $C$, then \[c(x)=m_1(x)*(\mathcal{A}(x)|0|0)+m_2(x)* (l(x)|\mathcal{B}(x)|0)+m_3(x)*(l_1(x)|l_2(x)|\mathcal{G}(x)).\] 
By applying similar argument as Theorem \ref{th2.5}, we get \[ m_1(x)*(\mathcal{A}(x)|0|0)\in \langle\cup_{i=0}^{r-1}A_i\rangle_{\mathbb{Z}_{p^s}}.\] 
Now, if $\deg (m_2(x))< \deg (\hat{b}_0(x))$ then $m_2(x)* (l(x)|\mathcal{B}(x)|0)\in \langle B_0\rangle_{\mathbb{Z}_{p^s}}.$ Otherwise, by division algorithm $m_2(x)= q_0(x)\hat{b}_0(x)+p_0(x)$ with $\mbox{deg}(p_0(x))<\mbox{deg}(\hat{b}_0(x))$. Then, \[m_2(x)*(l(x)|\mathcal{B}(x)|0)=q_0(x)\hat{b}_0(x)*(l(x)|\mathcal{B}(x)|0)+p_0(x)*(l(x)|\mathcal{B}(x)|0).\] 
Clearly, $p_0(x)*(l(x)|\mathcal{B}(x)|0)\in \langle B_0\rangle_{\mathbb{Z}_{p^s}}$. If $\deg (q_0(x))< \deg (\hat{b}_1(x))$, then $q_0(x)\hat{b}_0(x)*(l(x)|\mathcal{B}(x)|0)\in \langle B_1\rangle_{\mathbb{Z}_{p^s}}$.\par
In the worst-case scenario and by applying similar arguments, one can obtain that \[m_2(x)*(l(x)|\mathcal{B}(x)|0)\in \langle S\rangle_{\mathbb{Z}_{p^s}},\] 
if $q_{r-2}(x)(\prod_{t=0}^{r-2}\hat{b}_t(x))*(l(x)|\mathcal{B}(x)|0)\in \langle S \rangle_{\mathbb{Z}_{p^s}}.$ It is clear, if $\mbox{deg}(q_{r-2}(x))< \mbox{deg}(\hat{b}_{r-1}(x))$ then \[q_{r-2}(x)\Big(\prod_{t=0}^{r-2}\hat{b}_t(x)\Big)*(l(x)|\mathcal{B}(x)|0)\in \langle B_{r-1}\rangle_{\mathbb{Z}_{p^s}}.\]
Otherwise, again by division algorithm $q_{r-2}(x)=q_{r-1}(x)\hat{b}_{r-1}(x)+p_{r-1}(x)$ with $\mbox{deg}(p_{r-1}(x))<\mbox{deg}(\hat{b}_{r-1}(x))$. Therefore,
\[q_{r-2}(x)\Big(\prod_{t=0}^{r-2}\hat{b}_t(x)\Big)*(l(x)|\mathcal{B}(x)|0)= q_{r-1}(x)\Big(\prod_{t=0}^{r-1}\hat{b}_t(x)\Big)*(l(x)|\mathcal{B}(x)|0)+ p_{r-1}(x)\Big(\prod_{t=0}^{r-2}\hat{b}_t(x)\Big)*(l(x)|\mathcal{B}(x)|0).\]
Clearly, on the one hand, $p_{r-1}(x)\Big(\prod_{t=0}^{r-2}\hat{b}_t(x)\Big)*(l(x)|\mathcal{B}(x)|0)\in \langle B_{r-1}\rangle_{\mathbb{Z}_p^s}$. On the other hand, \[q_{r-1}(x)\Big(\prod_{t=0}^{r-1}\hat{b}_t(x)\Big)*(l(x)|\mathcal{B}(x)|0)= q_{r-1}(x)*\Big(\phi\Big(\frac{x^{\beta}-1}{b_{r-1}(x)}\Big)l(x)|0|0\Big).\]
From Lemma \ref{lemma4.6}, we have $\mathcal A(x)|\phi\Big(\frac{x^{\beta}-1}{b_{r-1}(x)}\Big)l(x)$. Hence, we obtain that \[q_{r-1}(x)\Big(\prod_{t=0}^{r-1}\hat{b}_t(x)\Big)*(l(x)|\mathcal{B}(x)|0)\in \langle\cup_{i=0}^{r-1}A_i\rangle_{\mathbb{Z}_{p^s}}.\]
Thus, $m_2(x)*(l(x)|\mathcal{B}(x)|0)\in \langle \cup_{i=0}^{r-1}A_i\rangle_{\mathbb{Z}_{p^s}}\cup \langle \cup_{j=0}^{r-1}B_j\rangle_{\mathbb{Z}_{p^s}}.$
\par
Now, we only have to show that $m_3(x)*(l_1(x)|l_2(x)|\mathcal{G}(x))\in \langle S\rangle_{\mathbb{Z}_{p^s}}$. If $\mbox{deg}(m_3(x))<\mbox{deg}(\hat{g}_0(x))$, then $m_3(x)*(l_1(x)|l_2(x)|\mathcal{G}(x))\in \langle G_0\rangle_{\mathbb{Z}_{p^s}}$ and $c(x)\in \langle S\rangle_{\mathbb{Z}_{p^s}}$. Otherwise, by division algorithm, $m_3(x)= d_0(x)\hat{g}_0(x)+e_0(x)$ with $\mbox{deg}(e_0(x))<\mbox{deg}(\hat{g}_0(x))$. Then, \[m_3(x)*(l_1(x)|l_2(x)|\mathcal{G}(x))=d_0(x)\hat{g}_0(x)*(l_1(x)|l_2(x)|\mathcal{G}(x))+e_0(x)*(l_1(x)|l_2(x)|\mathcal{G}(x)).\] 
Clearly,  $e_0(x)*(l_1(x)|l_2(x)|\mathcal{G}(x))\in \langle G_0\rangle_{\mathbb{Z}_{p^s}}$. If $\deg (d_0(x))< \deg (\hat{g}_1(x))$, then $d_0(x)\hat{g}_0(x)*(l_1(x)|l_2(x)|$ $ \mathcal{G}(x))\in \langle G_1\rangle_{\mathbb{Z}_{p^s}}$.\par
In the worst-case scenario and by applying similar arguments, one can obtain that \[m_3(x)*(l_1(x)|l_2(x)|\mathcal{G}(x))\in \langle S\rangle_{\mathbb{Z}_{p^s}},\] 
if $d_{s-2}(x)(\prod_{t=0}^{s-2}\hat{g}_t(x))*(l_1(x)|l_2(x)|\mathcal{G}(x))\in \langle S\rangle_{\mathbb{Z}_{p^s}}.$ It is clear, if $\mbox{deg}(d_{s-2}(x))< \mbox{deg}(\hat{g}_{s-1}(x))$ then 
\[q_{s-2}(x)\Big(\prod_{t=0}^{s-2}\hat{g}_t(x)\Big)*(l_1(x)|l_2(x)|\mathcal{G}(x))\in \langle G_{s-1}\rangle_{\mathbb{Z}_{p^s}}\] 
and we are done. Otherwise, again by division algorithm, $d_{s-2}(x)=d_{s-1}(x)\hat{g}_{s-1}(x)+e_{s-1}(x)$ with $\mbox{deg}(e_{s-1}(x)(x))<\mbox{deg}(\hat{g}_{s-1}(x))$. Therefore,
\begin{align*}
  d_{s-2}(x)\Big(\prod_{t=0}^{s-2}\hat{g}_t(x)\Big)*(l_1(x)|l_2(x)|\mathcal{G}(x))&=d_{s-1}(x)(x)\Big(\prod_{t=0}^{s-1}\hat{g}_t(x)\Big)*(l_1(x)|l_2(x)|\mathcal{G}(x))\\
  &~~~+e_{s-1}(x)\Big(\prod_{t=0}^{s-2}\hat{g}_t(x)\Big)*(l_1(x)|l_2(x)|\mathcal{G}(x)).  
\end{align*} 

Clearly, we have $e_{s-2}(x)\Big(\prod_{t=0}^{s-2}\hat{g}_t(x)\Big)*(l_1(x)|l_2(x)|\mathcal{G}(x))\in \langle G_{s-1}\rangle_{\mathbb{Z}_{p^s}}$ and
\begin{align*}
 q_{s-1}(x)\Big(\prod_{t=0}^{s-1}\hat{g}_t(x)\Big)*(l_1(x)|l_2(x)|\mathcal{G}(x))&= q_{s-1}(x)\Big(\prod_{t=0}^{s-1}\hat{g}_t(x)\Big)*(l_1(x)|l_2(x)|0)\\
 &= q_{s-1}(x)*\Big(\phi\Big(\frac{x^{\gamma}-1}{g_{s-1}(x)}\Big)l_1(x)|\phi\Big(\frac{x^{\gamma}-1}{g_{s-1}(x)}\Big)l_2(x)|0\Big).
\end{align*}

Moreover, from Lemma \ref{lemma4.6}, there exists some $Q(x)\in \mathbb{Z}_{p^r}[x]$ such that $Q(x)\mathcal B(x)= \phi\Big(\frac{x^{\gamma}-1}{g_{s-1}(x)}\Big)l_2(x)$ and 
\begin{align*}
  \mathcal{A}(x)|(Q(x)l(x)-\phi\Big(\frac{x^{\gamma}-1}{g_{s-1}(x)}\Big)l_1(x))&\implies \mathcal{A}(x)\lambda(x)= Q(x)l(x)-\phi\Big(\frac{x^{\gamma}-1}{g_{s-1}(x)}\Big)l_1(x)\\
  &\implies \phi\Big(\frac{x^{\gamma}-1}{g_{s-1}(x)}\Big)l_1(x)= Q(x)l(x)- \mathcal{A}(x)\lambda(x).
\end{align*}
Therefore,
\begin{align*}
  q_{s-1}(x)&*\Big(\phi\Big(\frac{x^{\gamma}-1}{g_{s-1}(x)}\Big)l_1(x)|\phi\Big(\frac{x^{\gamma}-1}{g_{s-1}(x)}\Big)l_2(x)|0\Big)\\&= q_{s-1}(x)*(Q(x)l(x)- \mathcal{A}(x)\lambda(x)|Q(x)\mathcal B(x)|0)\\
  &= q_{s-1}(x)[Q(x)(l(x)|\mathcal{B}(x)|0)- \lambda(x)(\mathcal{A}(x)|0|0)].
\end{align*}

Hence, $q_{s-1}(x)(\prod_{t=0}^{s-1}\hat{g}_t(x))*(l_1(x)|l_2(x)|\mathcal{G}(x))=\eta_1(x)*(\mathcal{A}(x)|0|0)+\eta_2(x)*(l(x)|\mathcal{B}(x)|0)\in \linebreak \left\langle \big(\bigcup_{i=0}^{r-1}A_i\big)\cup \big(\bigcup_{j=0}^{r-1}B_j\big)\right\rangle_{\mathbb{Z}_{p^s}}.$ Thus, we conclude that $c(x)\in \langle S\rangle_{\mathbb{Z}_{p^s}}$ and $S$ forms a minimal generating set for $C$. Furthermore, $|C|$ can be calculate from Corollary \ref{cor1}.
 \end{proof}
 \begin{example}\em
 Let $R_{3,3,3}=\mathbb{Z}_2[x]/\langle x^3-1\rangle \times\mathbb{Z}_2[x]/\langle x^3-1\rangle\times\mathbb{Z}_4[x]/\langle x^3-1\rangle$. Consider a $\mathbb{Z}_2\mathbb{Z}_2\mathbb{Z}_4$-additive cyclic code $C$ of block length $(3,3,3)$ generated by
 $$
 \{(1+x+x^2|0|0),(1|x-1|(1+x+x^2)+2)\},
 $$
 where $a_0(x)=1+x+x^2,~b_0(x)=x^3-1,~l_1(x)=1,~l_2(x)=x-1,~g_0(x)=1+x+x^2$ and $g_1(x)=1$. Then, $A_0=\{(1+x+x^2|0|0)\}$,  $G_0=\{(1|x-1|3+x+x^2)\}$ and $G_1=\{(x-1|0|2x-2), (x^2-x|0|2x^2-2x)\}$ forms a minimal generating set for $C$ and $|C|= 2^{1+\sum_{k=0}^{2-1}(s-k)\mbox{deg}(\hat{g}_k(k))}= 2^5=32.$
 \end{example}

 \section{Duality of $\mathbb{Z}_{p^r}\mathbb{Z}_{p^r}\mathbb{Z}_{p^s}$-Additive Cyclic Codes}
In this section, we study the duality of  $\mathbb{Z}_{p^r}\mathbb{Z}_{p^r}\mathbb{Z}_{p^s}$-additive cyclic codes. We discuss some results that determine the relationship between $\mathbb{Z}_{p^r}\mathbb{Z}_{p^r}\mathbb{Z}_{p^s}$-additive cyclic codes and their duals. \par
 Let $C$ be a $\mathbb{Z}_{p^r}\mathbb{Z}_{p^r}\mathbb{Z}_{p^s}$-additive cyclic code and $C^{\perp}$ be its dual. Taking any codeword $\mathbf{u}$ of $C^{\perp}$, $\mathbf{u}\cdot\mathbf{u'}=0$ for all $\mathbf{u'}\in C$. Since $\mathbf{u'}$ is a codeword of $C$, we know that $\mathbf{u'}^{(-1)}$ is also a codeword. So, $\mathbf{u'}^{(-1)}\cdot \mathbf{u}= \mathbf{u'}\cdot \mathbf{u}^{(1)}=0$ for all $\mathbf{u}\in C$. Therefore,  $\mathbf{u}^{(1)}\in C^{\perp}$, hence $C^{\perp}$ is also a $\mathbb{Z}_{p^r}\mathbb{Z}_{p^r}\mathbb{Z}_{p^s}$-additive cyclic code. So we obtain the following result.
 \begin{proposition}\label{pr6.1}
 Let $C$ be a $\mathbb{Z}_{p^r}\mathbb{Z}_{p^r}\mathbb{Z}_{p^s}$-additive cyclic code. Then its dual $C^{\perp}$ is also $\mathbb{Z}_{p^r}\mathbb{Z}_{p^r}\mathbb{Z}_{p^s}$-additive cyclic code.
 \end{proposition}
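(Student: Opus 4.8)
The statement asks for two things: that $C^{\perp}$ is again a $\mathbb{Z}_{p^r}\mathbb{Z}_{p^r}\mathbb{Z}_{p^s}$-additive code (i.e.\ a $\mathbb{Z}_{p^s}$-submodule of $\mathbb{Z}_{p^r}^{\alpha}\times\mathbb{Z}_{p^r}^{\beta}\times\mathbb{Z}_{p^s}^{\gamma}$), and that it is closed under the simultaneous cyclic shift. The first point is essentially free: the pairing $\mathbf{w}\cdot\mathbf{w}'$ is $\mathbb{Z}_{p^s}$-bilinear with respect to the external multiplication $*$, because $a\mapsto p^{s-r}\epsilon(a)$ turns the $*$-action on the first two blocks into ordinary multiplication in $\mathbb{Z}_{p^s}$, so the set of vectors orthogonal to every element of $C$ is automatically a $\mathbb{Z}_{p^s}$-submodule. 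Thus the only real content is cyclicity, and the plan is to reduce it, exactly as in the classical case, to a shift-adjunction identity.

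The key identity I would isolate first is
\[
\mathbf{u}^{(1)}\cdot\mathbf{v}=\mathbf{u}\cdot\mathbf{v}^{(-1)}\qquad\text{for all }\mathbf{u},\mathbf{v}\in\mathbb{Z}_{p^r}^{\alpha}\times\mathbb{Z}_{p^r}^{\beta}\times\mathbb{Z}_{p^s}^{\gamma},
\]
where $\mathbf{v}^{(-1)}$ denotes the codeword obtained by shifting each of the three blocks backward by one position (reading subscripts modulo $\alpha$, $\beta$, $\gamma$ respectively). This is checked block by block: in each block the defining sum runs over a complete residue system modulo the block length, so the substitution $i\mapsto i+1$ carries $\sum_i u_{i-1}v_i$ to $\sum_i u_i v_{i+1}$, and likewise for the two $p^{s-r}$-weighted blocks; the factors $p^{s-r}$ and the maps $\epsilon$, $\phi$ act coefficientwise and so are untouched by the re-indexing. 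I would also record that $\mathbf{v}^{(-1)}$ is a genuine codeword of $C$ whenever $\mathbf{v}\in C$: choosing $N\equiv-1\pmod{\mathrm{lcm}(\alpha,\beta,\gamma)}$ with $N>0$, we have $\mathbf{v}^{(-1)}=\mathbf{v}^{(N)}=x^{N}*\mathbf{v}(x)$ in $R_{\alpha,\beta,\gamma}$, and $C$ is invariant under the forward shift by hypothesis, hence under its $N$-th iterate.

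Putting these together finishes the argument: for any $\mathbf{u}\in C^{\perp}$ and any $\mathbf{v}\in C$ we have $\mathbf{v}^{(-1)}\in C$, so $\mathbf{u}^{(1)}\cdot\mathbf{v}=\mathbf{u}\cdot\mathbf{v}^{(-1)}=0$; since $\mathbf{v}$ was arbitrary, $\mathbf{u}^{(1)}\in C^{\perp}$, and therefore $C^{\perp}$ is a $\mathbb{Z}_{p^r}\mathbb{Z}_{p^r}\mathbb{Z}_{p^s}$-additive cyclic code. The only delicate point is the bookkeeping in the shift-adjunction identity—keeping the three blocks, with their different lengths and mixed $\mathbb{Z}_{p^r}/\mathbb{Z}_{p^s}$ entries, handled uniformly—but once that identity is in place the rest of the proof is purely formal.
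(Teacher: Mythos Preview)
Your argument is correct and follows essentially the same route as the paper: both proofs hinge on the shift-adjunction identity $\mathbf{u}^{(1)}\cdot\mathbf{v}=\mathbf{u}\cdot\mathbf{v}^{(-1)}$ together with the observation that $\mathbf{v}^{(-1)}\in C$ whenever $\mathbf{v}\in C$. Your version is in fact more careful than the paper's, which simply asserts that $\mathbf{u'}^{(-1)}$ is a codeword and that $\mathbf{u'}^{(-1)}\cdot\mathbf{u}=\mathbf{u'}\cdot\mathbf{u}^{(1)}$ without spelling out either the iteration argument for the backward shift or the blockwise re-indexing justifying the identity; you also make explicit the $\mathbb{Z}_{p^s}$-submodule structure of $C^{\perp}$, which the paper leaves implicit.
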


\begin{proposition}\label{pr6.2}
 Let $C \subseteq \mathbb{Z}^{\alpha}_{p^r}\times \mathbb{Z}^{\beta}_{p^r}\times \mathbb{Z}^{\gamma}_{p^s}$ be a $\mathbb{Z}_{p^r}\mathbb{Z}_{p^r}\mathbb{Z}_{p^s}$-additive cyclic code of block length $(\alpha,\beta,\gamma)$ and $C$ has the generator polynomials given in Theorem \ref{th 4.4}. Then,
 \[|C^{\perp}|= p^{\sum_{i=1}^{r}i\deg(\hat{a}_i(x))+\sum_{j=1}^{r}j\deg(\hat{b}_j(x))+\sum_{k=1}^{s}k\deg(\hat{g}_k(x))}.\]
\end{proposition}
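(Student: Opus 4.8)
The plan is to combine the size formula for $C$ from Theorem \ref{th5.1} with the general duality identity $|C|\cdot|C^{\perp}|=|\mathbb{Z}^{\alpha}_{p^r}\times \mathbb{Z}^{\beta}_{p^r}\times \mathbb{Z}^{\gamma}_{p^s}|=p^{r\alpha+r\beta+s\gamma}$, and then to simplify the resulting exponent by a telescoping argument on the degrees of the $\hat a_i$, $\hat b_j$, $\hat g_k$.

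First I would establish the identity $|C|\cdot|C^{\perp}|=p^{r\alpha+r\beta+s\gamma}$. Composing the inner product $\mathbf{w}\cdot\mathbf{w}'$ with a fixed faithful character $\chi\colon \mathbb{Z}_{p^s}\to \mathbb{C}^{*}$, $\chi(x)=e^{2\pi i x/p^s}$, yields a bi-additive pairing on $G=\mathbb{Z}^{\alpha}_{p^r}\times \mathbb{Z}^{\beta}_{p^r}\times \mathbb{Z}^{\gamma}_{p^s}$. The crucial point is that this pairing is non-degenerate: on each $\mathbb{Z}_{p^s}$ coordinate $b\mapsto \chi(\,\cdot\, b)$ already realizes $\mathbb{Z}_{p^s}$ as its own character group, and on each $\mathbb{Z}_{p^r}$ coordinate the factor $p^{s-r}$ makes $b\mapsto \chi\big(p^{s-r}\epsilon(\,\cdot\, b)\big)$ coincide with the standard perfect pairing of $\mathbb{Z}_{p^r}$ with itself, since $\chi(p^{s-r}n)=e^{2\pi i n/p^r}$. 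Hence $\mathbf{w}'\mapsto\langle\,\cdot\,,\mathbf{w}'\rangle$ is an isomorphism $G\to \widehat{G}$, under which $C^{\perp}$ corresponds to the annihilator of $C$, so $|C^{\perp}|=|\widehat{G/C}|=|G|/|C|$. (This is the usual MacWilliams-type duality over a Frobenius alphabet; alternatively it follows directly from the structure theorem for finite abelian groups.)

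Next I would rewrite the exponent. From $\hat a_0(x)=\frac{x^{\alpha}-1}{a_0(x)}$, $\hat a_i(x)=\frac{a_{i-1}(x)}{a_i(x)}$ for $1\le i\le r-1$ and $\hat a_r(x)=a_{r-1}(x)$, a telescoping of degrees gives $\sum_{i=0}^{r}\deg(\hat a_i(x))=\alpha$, and likewise $\sum_{j=0}^{r}\deg(\hat b_j(x))=\beta$ and $\sum_{k=0}^{s}\deg(\hat g_k(x))=\gamma$. Substituting $\alpha=\sum_{i=0}^{r}\deg(\hat a_i)$ into $r\alpha-\sum_{i=0}^{r-1}(r-i)\deg(\hat a_i)$ and collecting coefficients yields $\sum_{i=0}^{r-1} i\deg(\hat a_i)+r\deg(\hat a_r)=\sum_{i=1}^{r} i\deg(\hat a_i(x))$, and symmetrically for the $\hat b$ and $\hat g$ blocks. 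Adding the three contributions and using the value of $|C|$ from Theorem \ref{th5.1} gives exactly $|C^{\perp}|=p^{\sum_{i=1}^{r}i\deg(\hat a_i(x))+\sum_{j=1}^{r}j\deg(\hat b_j(x))+\sum_{k=1}^{s}k\deg(\hat g_k(x))}$.

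The main obstacle is the non-degeneracy step: one must verify that inserting the weights $p^{s-r}$ in the first two blocks does not destroy non-degeneracy of a form taking values in $\mathbb{Z}_{p^s}$, equivalently that $\mathrm{Hom}(\mathbb{Z}_{p^r},\mathbb{Z}_{p^s})\cong p^{s-r}\mathbb{Z}_{p^s}\cong\mathbb{Z}_{p^r}$ and that $a\mapsto\big(b\mapsto p^{s-r}ab \bmod p^s\big)$ is injective on $\mathbb{Z}_{p^r}$. Once this is in place, everything else is bookkeeping with degrees. An alternative route, avoiding characters, is to assemble a generator matrix of $C$ in row-reduced form from the minimal generating set of Theorem \ref{th5.1}, exhibit the corresponding parity-check/generator data for $C^{\perp}$, and count ranks directly; but the character argument is shorter and makes the identity $|C|\cdot|C^{\perp}|=p^{r\alpha+r\beta+s\gamma}$ transparent.
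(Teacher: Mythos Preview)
Your proposal is correct and follows essentially the same route as the paper: invoke $|C|\cdot|C^{\perp}|=p^{r\alpha+r\beta+s\gamma}$, plug in the value of $|C|$ from Theorem~\ref{th5.1}, and simplify the exponent. The paper simply asserts the duality identity as ``known'' and leaves the telescoping implicit, whereas you supply a character-theoretic justification for non-degeneracy (handling the $p^{s-r}$ weight on the $\mathbb{Z}_{p^r}$ blocks) and spell out the degree identity $\sum_{i=0}^{r}\deg(\hat a_i)=\alpha$; these additions are sound and arguably make the argument more self-contained, but the underlying strategy is the same.
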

 \begin{proof}
 It is known that $|\mathbb{Z}^{\alpha}_{p^r}\times \mathbb{Z}^{\beta}_{p^r}\times \mathbb{Z}^{\gamma}_{p^s}|= |C||C^\perp|= p^{r\alpha+r\beta+s\gamma}$ and suppose that $|C^{\perp}|= p^t$ for some positive integer $t$. By theorem \ref{th5.1}, we have $|C|=p^{\sum_{i=0}^{r-1}(r-i)\mbox{deg}(\hat{a}_i(x))+\sum_{j=0}^{r-1}(r-j)\mbox{deg}(\hat{b}_j(x))+\sum_{k=0}^{s-1}(s-k)\mbox{deg}(\hat{g}_k(x))}$, this implies
\begin{align*}
    t&= r\alpha+r\beta+s\gamma-\big(\sum_{i=0}^{r-1}(r-i)\mbox{deg}(\hat{a}_i(x))+\sum_{j=0}^{r-1}(r-j)\mbox{deg}(\hat{b}_j(x))+\sum_{k=0}^{s-1}(s-k)\mbox{deg}(\hat{g}_k(x))\big)\\
    &= \sum_{i=1}^{r}i~\mbox{deg}(\hat{a}_i(x))+\sum_{j=1}^{r}j~\mbox{deg}(\hat{b}_j(x))+\sum_{k=1}^{s}k~\mbox{deg}(\hat{g}_k(x)).
\end{align*}
Hence, we prove the required.
 \end{proof}
 By using Proposition \ref{pr6.2}, we obtain the following relation between the $\mathbb{Z}_{p^r}\mathbb{Z}_{p^r}\mathbb{Z}_{p^s}$-additive cyclic code $C$ and the
dual of its dual.
\begin{proposition}\label{pr6.3}
 Let $C$ be a $\mathbb{Z}_{p^r}\mathbb{Z}_{p^r}\mathbb{Z}_{p^s}$-additive cyclic code of block length $(\alpha,\beta,\gamma)$ and $C^\perp$ be its dual. Then $C={(C^\perp)}^\perp$.
 \end{proposition}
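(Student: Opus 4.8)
\noindent The plan is to prove the inclusion $C \subseteq (C^\perp)^\perp$ directly from the definitions and then to upgrade it to an equality by a cardinality count; this is precisely the situation for which Proposition \ref{pr6.2} was set up.

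First I would verify the easy inclusion. Take $\mathbf{w} \in C$. For every $\mathbf{w'} \in C^\perp$ the defining property of $C^\perp$ gives $\mathbf{w'}\cdot\mathbf{w} = 0$, and since the inner product on $\mathbb{Z}_{p^r}^\alpha \times \mathbb{Z}_{p^r}^\beta \times \mathbb{Z}_{p^s}^\gamma$ is symmetric (the multiplications in $\mathbb{Z}_{p^r}$ and $\mathbb{Z}_{p^s}$ are commutative and $\epsilon$ is a ring homomorphism), this is the same as $\mathbf{w}\cdot\mathbf{w'} = 0$. As this holds for all $\mathbf{w'} \in C^\perp$, we get $\mathbf{w} \in (C^\perp)^\perp$, and hence $C \subseteq (C^\perp)^\perp$.

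Next I would compare sizes. By Proposition \ref{pr6.1} the code $C^\perp$ is again a $\mathbb{Z}_{p^r}\mathbb{Z}_{p^r}\mathbb{Z}_{p^s}$-additive cyclic code, so the cardinality identity $|D|\,|D^\perp| = p^{r\alpha+r\beta+s\gamma}$ — the same one already used inside the proof of Proposition \ref{pr6.2} — applies both to $D = C$ and to $D = C^\perp$. Writing these two instances gives
\[
|C|\,|C^\perp| = p^{r\alpha+r\beta+s\gamma} = |C^\perp|\,|(C^\perp)^\perp|,
\]
and cancelling the factor $|C^\perp|$ yields $|(C^\perp)^\perp| = |C|$. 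Since $C$ and $(C^\perp)^\perp$ are finite sets with $C \subseteq (C^\perp)^\perp$, the equality of cardinalities forces $C = (C^\perp)^\perp$, which completes the argument.

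The one step that needs care — and hence the part I would regard as the real obstacle — is the cardinality identity $|D|\,|D^\perp| = p^{r\alpha+r\beta+s\gamma}$, because the inner product carries the weight $p^{s-r}$ on the first two blocks; one must check that this weighted pairing $G \times G \to \mathbb{Z}_{p^s}$ is still non-degenerate (equivalently, that it induces an isomorphism of $G = \mathbb{Z}_{p^r}^\alpha \times \mathbb{Z}_{p^r}^\beta \times \mathbb{Z}_{p^s}^\gamma$ with its character group after composition with a faithful character of $\mathbb{Z}_{p^s}$). This is the standard fact underpinning the duality theory of $\mathbb{Z}_{p^r}\mathbb{Z}_{p^s}$-codes in \cite{BCV18} and is exactly the ingredient already invoked in Proposition \ref{pr6.2}, so no new work is required. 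If one wished to avoid any reference to non-degeneracy, an alternative plan would be to apply Proposition \ref{pr6.2} twice — to $C$ and then to $C^\perp$ — but that would first require extracting the generator polynomials of $C^\perp$, a considerably longer route to the same conclusion.
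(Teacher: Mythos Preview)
Your proposal is correct and follows essentially the same route as the paper: the inclusion $C\subseteq(C^\perp)^\perp$ from the definition, followed by a cardinality count using the identity $|D|\,|D^\perp|=p^{r\alpha+r\beta+s\gamma}$ (the paper phrases the count via the explicit degree formula of Proposition~\ref{pr6.2} rather than cancelling $|C^\perp|$ directly, but the substance is identical). Your added remark about non-degeneracy of the weighted pairing is the correct justification for that identity, which the paper simply assumes.
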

 \begin{proof}
 On the one hand, by the definition of dual code, we have $C\subseteq{(C^\perp)}^\perp$.\par
 On the other hand, by Proposition \ref{pr6.1}, we know that ${(C^\perp)}^\perp$ is also a $\mathbb{Z}_{p^r}\mathbb{Z}_{p^r}\mathbb{Z}_{p^s}$-additive cyclic code. Therefore, we have $|C^\perp||{(C^\perp)}^\perp|=p^{r\alpha+r\beta+s\gamma}$. So, by Proposition \ref{pr6.2}, we get $$|{(C^\perp)}^\perp|= p^{\sum_{i=0}^{r-1}(r-i)\mbox{deg}(\hat{a}_i(x))+\sum_{j=0}^{r-1}(r-j)\mbox{deg}(\hat{b}_j(x))+\sum_{k=0}^{s-1}(s-k)\mbox{deg}(\hat{g}_k(x))},$$ which implies that $|C|=|{(C^\perp)}^\perp|$. Hence, we deduce that $C={(C^\perp)}^\perp$.
 \end{proof}
 Let $h(x)=h_0+h_1x+h_2x^2+\cdots+h_nx^n$ be a polynomial in $\mathbb{Z}_{p^s}[x]$, then its $reciprocal~polynomial~ h^*(x)$ defined as $h^*(x)= h_n+h_{n-1}x+\cdots+h_1x^{n-1}+h_0x^n$ i.e. $h^*(x)= x^{\mbox{deg}(h(x))}h(1/x)$. Moreover,  we denote the polynomial $\sum_{i=0}^{m-1}x^i$ by $\theta_m(x)$. From now on, we suppose that $L=\mbox{lcm}[\alpha,\beta,\gamma].$

 \begin{definition}
For any two elements $\mathbf{u}(x)=(u_1(x)|u_2(x)|u_3(x))$ and $\mathbf{w}(x)=(w_1(x)|w_2(x)|w_3(x))$ of $R_{\alpha,\beta,\gamma}$,  we define 
$\bullet: R_{\alpha,\beta,\gamma} \times R_{\alpha,\beta,\gamma}\rightarrow \mathbb{Z}_{p^s}[x]/\langle x^L-1\rangle$ such that
\begin{eqnarray*}
\bullet (\mathbf{u}(x),\mathbf{w}(x))&=& p^{s-r}\epsilon(u_1(x)w^*_1(x))\theta_{\frac{L}{\alpha}}(x^{\alpha})x^{L-1-\deg(w_1(x))} \\
&+& p^{s-r}\epsilon(u_2(x)w^*_2(x))\theta_{\frac{L}{\beta}}(x^{\beta})x^{L-1-\deg(w_2(x))}\\
&+& u_3(x)w^*_3(x)\theta_{\frac{L}{\gamma}}(x^{\gamma})x^{L-1-\deg(w_3(x))}\pmod{(x^L-1)}.
\end{eqnarray*}
\end{definition}
The map $\bullet$ is a linear map in each of its argument. That is, the map $\bullet$ is a bilinear map between $\mathbb{Z}_{p^s}[x]$-modules.
This map can be seen as a generalized map of the given map in \cite[{Definition 5.2}]{224}. Now on wards, we denote $\bullet (\mathbf{u}(x),\mathbf{w}(x))$ by $\mathbf{u}(x)\bullet \mathbf{w}(x)$. Note that $\mathbf{u}(x)\bullet \mathbf{w}(x) \in \mathbb{Z}_{p^s}[x]/\langle x^L-1\rangle$.
\begin{proposition}\label{l6.5}
Let $\mathbf{u}$ and $\mathbf{w}$ be two vectors of $\mathbb{Z}^{\alpha}_{p^r}\times \mathbb{Z}^{\beta}_{p^r}\times \mathbb{Z}^{\gamma}_{p^s}$ with the associated polynomials $\mathbf{u}(x)=(u_1(x)|u_2(x)|u_3(x))$ and $\mathbf{w}(x)=(w_1(x)|w_2(x)|w_3(x))$, respectively. Then $\mathbf{u}$ is orthogonal to $\mathbf{w}$ and all its cyclic shifts if and only if $\mathbf{u}(x)\bullet \mathbf{w}(x)= 0\pmod{(x^L-1)}$.
\end{proposition}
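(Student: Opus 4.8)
The plan is to show that, regarded as an element of $\mathbb{Z}_{p^s}[x]/\langle x^L-1\rangle$, the polynomial $\mathbf{u}(x)\bullet\mathbf{w}(x)$ has the coefficient of $x^t$ equal to the inner product $\mathbf{u}\cdot\mathbf{w}^{(t+1)}$ for every $t$ with $0\le t\le L-1$. Since each of $\alpha,\beta,\gamma$ divides $L$, we have $\mathbf{w}^{(L)}=\mathbf{w}$, so as $t$ ranges over $\{0,\dots,L-1\}$ the shift $\mathbf{w}^{(t+1)}$ runs exactly through the (at most $L$) distinct cyclic shifts of $\mathbf{w}$. Hence $\mathbf{u}(x)\bullet\mathbf{w}(x)=0$ in $\mathbb{Z}_{p^s}[x]/\langle x^L-1\rangle$ if and only if $\mathbf{u}\cdot\mathbf{w}^{(i)}=0$ for all $i$, which is precisely the claim. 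So the whole argument reduces to computing the coefficient of $x^t$ in each of the three summands that define $\bullet$.

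To do this I would first record two elementary facts. First, for $f(x)\in\mathbb{Z}_{p^s}[x]/\langle x^L-1\rangle$, multiplication by $\theta_{L/\alpha}(x^\alpha)=1+x^\alpha+\cdots+x^{L-\alpha}$ ``folds $f$ modulo $\alpha$'': the coefficient of $x^t$ in $f(x)\theta_{L/\alpha}(x^\alpha)\bmod(x^L-1)$ is the sum of the coefficients of $x^m$ in $f$ over all $m\equiv t\pmod{\alpha}$. This folding is exactly what is needed, because the first block of a codeword lives in $\mathbb{Z}_{p^r}[x]/\langle x^\alpha-1\rangle$, where exponents are read modulo $\alpha$; the same applies with $\beta$ and $\gamma$. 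Second, since $w_1^*(x)=x^{\deg(w_1(x))}w_1(1/x)$, we get $w_1^*(x)x^{L-1-\deg(w_1(x))}\equiv x^{L-1}w_1(1/x)\pmod{x^L-1}$, so the coefficient of $x^m$ in $u_1(x)w_1^*(x)x^{L-1-\deg(w_1(x))}$ is $\sum_{\ell}u_{1,m-L+1+\ell}\,w_{1,\ell}$ (indices mod $\alpha$); the role of the degree-dependent factor $x^{L-1-\deg(w_1(x))}$ is precisely to cancel $x^{\deg(w_1(x))}$ so that these correlation sums land in the right coefficient slots regardless of $\deg(w_1(x))$ (and if $w_1(x)=0$ the summand vanishes, matching the vanishing of the corresponding block of $\mathbf{u}\cdot\mathbf{w}^{(t+1)}$).

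Combining these two observations on the third block --- the simplest, since it carries no factor $p^{s-r}$ and no $\epsilon$ --- a direct computation shows that the coefficient of $x^t$ in $u_3(x)w_3^*(x)\theta_{L/\gamma}(x^\gamma)x^{L-1-\deg(w_3(x))}$ equals $\sum_{j=0}^{\gamma-1}u_{3,j}\,w_{3,j-(t+1)}$ with subscripts read modulo $\gamma$, which is exactly the $\gamma$-part of $\mathbf{u}\cdot\mathbf{w}^{(t+1)}$. The identical computation on the first and second blocks produces $\sum_{j=0}^{\alpha-1}\epsilon(u_{1,j}w_{1,j-(t+1)})$ and $\sum_{j=0}^{\beta-1}\epsilon(u_{2,j}w_{2,j-(t+1)})$, using that $\epsilon$ is an injective ring homomorphism and therefore commutes with products and finite sums; multiplying each by $p^{s-r}$, as the definition of $\bullet$ prescribes, reproduces verbatim the first two terms of $\mathbf{u}\cdot\mathbf{w}^{(t+1)}$. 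Adding the three contributions gives $[x^t]\big(\mathbf{u}(x)\bullet\mathbf{w}(x)\big)=\mathbf{u}\cdot\mathbf{w}^{(t+1)}$, and the proposition follows.

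The step I expect to be the main obstacle is the index bookkeeping in the previous paragraph: one must move carefully between the ambient modulus $x^L-1$ in the definition of $\bullet$ and the three separate moduli $\alpha,\beta,\gamma$ of the blocks, and verify that the folding by $\theta_{L/\alpha}(x^\alpha)$, $\theta_{L/\beta}(x^\beta)$, $\theta_{L/\gamma}(x^\gamma)$ together with the shifts by $x^{L-1-\deg(\cdot)}$ conspire to give the cyclic inner products with no leftover cross terms, and that the shift index $t+1$ is the same for all three blocks. The remaining ingredients --- bilinearity of $\bullet$, the homomorphism properties of $\epsilon$, and the fact that $\mathbf{w}$ has only finitely many cyclic shifts, all of period dividing $L$ --- are routine.
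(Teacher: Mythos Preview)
Your approach is the same as the paper's: compute $\mathbf u(x)\bullet\mathbf w(x)$ explicitly and identify the coefficient of each power of $x$ with one of the inner products $\mathbf u\cdot\mathbf w^{(i)}$. The paper does this in one line, asserting $\mathbf u(x)\bullet\mathbf w(x)=\sum_{i=0}^{L-1}A_ix^{L-1-i}$ with $A_i=\mathbf u\cdot\mathbf w^{(i)}$, while you unpack the mechanism (the folding by $\theta_{L/\alpha}(x^\alpha)$ and the role of the shift $x^{L-1-\deg(\cdot)}$) more carefully; your indexing ``coefficient of $x^t$ equals $\mathbf u\cdot\mathbf w^{(t+1)}$'' is correct and the conclusion is the same.

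One genuine slip to fix: the inclusion $\epsilon:\mathbb Z_{p^r}\hookrightarrow\mathbb Z_{p^s}$, $a\mapsto a$, is \emph{not} a ring (or even additive) homomorphism when $r<s$; for instance $\epsilon\big((p^r-1)+1\big)=\epsilon(0)=0\neq p^r=\epsilon(p^r-1)+\epsilon(1)$ in $\mathbb Z_{p^s}$. So you cannot justify distributing $\epsilon$ over the convolution sums by appealing to a homomorphism property of $\epsilon$ alone. What rescues the computation is the prefactor $p^{s-r}$: the map $a\mapsto p^{s-r}\epsilon(a)$ \emph{is} an additive homomorphism $\mathbb Z_{p^r}\to\mathbb Z_{p^s}$, because any carry in $\mathbb Z_{p^r}$ contributes a multiple of $p^r$, which is killed by $p^{s-r}$ in $\mathbb Z_{p^s}$. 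Replace your sentence about $\epsilon$ with this observation about $p^{s-r}\epsilon$ and the rest of your argument goes through unchanged.
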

\begin{proof}
Let $\mathbf{u}=(u_{1,0},u_{1,1},\ldots, u_{1,\alpha-1}|u_{2,0},u_{2,1},\ldots,u_{2,\beta-1}|u_{3,0},u_{3,1},\ldots,u_{3,\gamma-1}),~\mathbf{w}=(w_{1,0},w_{1,1}\linebreak,\ldots, w_{1,\alpha-1}|w_{2,0},w_{2,1},\ldots,w_{2,\beta-1}|w_{3,0},w_{3,1},\ldots,w_{3,\gamma-1})$. Consider that $\mathbf{w}^{(i)}=(w_{1,-i},w_{1,1-i},\ldots,\linebreak
w_{1,\alpha-1-i}|w_{2,-i},w_{2,1-i},\ldots,w_{2,\beta-1-i}|w_{3,-i},w_{3,1-i},\ldots,w_{3,\gamma-1-i})$ is the $i$th cyclic shift of vector $\mathbf{w}$, where $i=0,1,2,\ldots, L-1$.  Then $\mathbf{u}\cdot \mathbf{w}^{(i)}=0$ if and only if\[p^{s-r}\sum_{j=0}^{\alpha-1}\epsilon(u_{1,j}w_{1,j-i})+p^{s-r}\sum_{k=0}^{\beta-1}\epsilon(u_{2,k}w_{2,k-i})+\sum_{t=0}^{\gamma-1}u_{3,t}w_{3,t-i}=0\] 
Let $A_i=p^{s-r}\sum_{j=0}^{\alpha-1}\epsilon(u_{1,j}w_{1,j-i})+p^{s-r}\sum_{k=0}^{\beta-1}\epsilon(u_{2,k}w_{2,k-i})+\sum_{t=0}^{\gamma-1}u_{3,t}w_{3,t-i}$. Thus, we have
\begin{eqnarray*}
\mathbf{u}(x)\bullet \mathbf{w}(x)&=& p^{s-r}\theta_{\frac{L}{\alpha}}(x^{\alpha})\sum_{a=0}^{\alpha-1}\sum_{j=0}^{\alpha-1}\epsilon(u_{1,j}w_{1,j-a})x^{L-1-a}+ p^{s-r}\theta_{\frac{L}{\beta}}(x^{\beta})\sum_{b=0}^{\beta-1}\sum_{k=0}^{\beta-1}\epsilon(u_{2,k}w_{2,k-b})x^{L-1-b}\\
 &+&\theta_{\frac{L}{\gamma}}(x^{\gamma})\sum_{c=0}^{\gamma-1}\sum_{t=0}^{\gamma-1}(u_{3,t}w_{3,t-c})x^{L-1-c }\\
&=& \sum_{i=0}^{L-1}A_ix^{L-1-i}\pmod{(x^L-1)}.
\end{eqnarray*}
 Therefore, $\mathbf{u}(x)\bullet \mathbf{w}(x)=0\pmod{(x^L-1)}$ if and only if $A_i=0$, for $i=0,1,\ldots, L-1.$
\end{proof}
Proposition \ref{l6.5} shows that the map $\bullet$  is the corresponding polynomial operation to the inner product of vectors. \par 
\vskip 5pt
\noindent
As an application of our study, we consider $p=2$ and $r=s=1$, and using the generator polynomials studied in Theorem \ref{th 4.4}, we obtain several optimal binary codes according to \cite{Gra} in Table \ref{t1}. Note that if $p=2$ and $r=s=1$, then we get that $C$ forms a $\mathbb{Z}_2\mathbb Z_2\mathbb{Z}_2$-additive cyclic codes.
 
\begin{table}
  \begin{center}
    \caption{Some optimal binary codes. \label{t1}}
    \renewcommand{\arraystretch}{1.3}
     \vspace{0.5cm}
\begin{tabular}{|c|c|c|}
\hline
$(\alpha,\beta,\gamma)$& Generator polynomials of $C$ & Optimal binary codes \\
\hline
$(3,1,3)$ &  \parbox{10cm}{$\mathcal A(x)=x^2+x+1,~\mathcal B(x)=x+1,~l(x)=0,\\ l_1(x)=x+1,~ l_2(x)=1,~\mathcal G(x)=1$.} & $[7,4,3]$ \\
 \hline
$(1,1,7)$ &  \parbox{10cm}{$\mathcal A(x)=x+1,~\mathcal B(x)=x+1,~l(x)=0,\\ l_1(x)=1,~ l_2(x)=1,~\mathcal G(x)=x^3+x^2+1$.} & $[9,4,4]$ \\
 \hline
 $(7,1,7)$ &  \parbox{10cm}{$\mathcal A(x)=x^7+1,~\mathcal B(x)=x+1,~l(x)=0,\\ l_1(x)=x^4+x^3+x^2+1,~ l_2(x)=0,~\mathcal G(x)=x^4+x^3+x^2+1$.} & $[15,3,8]$ \\
 \hline
 $(1,1,15)$ &  \parbox{10cm}{$\mathcal A(x)=x+1,~\mathcal B(x)=x+1,~l(x)=0,\\ l_1(x)=1,~ l_2(x)=1,~\mathcal G(x)=x^8+x^7+x^6+x^4+1$.} & $[17,7,6]$ \\
 \hline
 $(1,1,15)$ &  \parbox{10cm}{$\mathcal A(x)=x+1,~\mathcal B(x)=x+1,~l(x)=0,\\ l_1(x)=1,~ l_2(x)=1,~\mathcal G(x)=x^4+x+1$.} & $[17,11,4]$ \\
 \hline
 $(1,1,17)$ &  \parbox{10cm}{$\mathcal A(x)=x+1,~\mathcal B(x)=x+1,~l(x)=0,\\ l_1(x)=1,~ l_2(x)=1,~\mathcal G(x)=x^8+x^5+x^4+x^3+1$.} & $[19,9,6]$ \\
 \hline

 $(7,7,7)$ &  \parbox{10cm}{$\mathcal A(x)=x^7+1,~\mathcal B(x)=x^7+1,~l(x)=0,\\ l_1(x)=x^4+x^3+x^2+1,~ l_2(x)=x^4+x^3+x^2+1,~\mathcal G(x)=x^4+x^3+x^2+1$.} & $[21,3,12]$ \\
 \hline

\end{tabular}
  \end{center}
\end{table}

 \section{Conclusion}
In this paper, we have discussed the algebraic structure of $\mathbb Z_{p^r}\mathbb Z_{p^r}\mathbb Z_{p^s}$-additive cyclic codes, where $r\leq s$. These codes can be viewed as $\mathbb{Z}_{p^s}[x]$-submodules of the module $\mathbb{Z}_{p^r}[x]/\langle x^{\alpha}-1\rangle \times \mathbb{Z}_{p^r}[x]/\langle x^{\beta}-1\rangle\times \mathbb{Z}_{p^s}[x]/\langle x^{\gamma}-1\rangle$.  We study the generator polynomials of $\mathbb{Z}_{p^r}\mathbb{Z}_{p^r}\mathbb{Z}_{p^s}$-additive cyclic codes in Theorem \ref{th 4.4} and some necessary and sufficient conditions on the generator polynomials in next few results. In Example \ref{ex:Z2Z2Z4} and Remark \ref{R4.12}, we explain that the generator polynomials are given in \cite{mosta, Gao2016, 224} do not generate a general additive cyclic code.  We also discuss the structure of separable $\mathbb{Z}_{p^r}\mathbb{Z}_{p^r}\mathbb{Z}_{p^s}$-additive cyclic codes and determine their generator polynomials also. In Theorem \ref{th4.12}, we classify some cases of generator polynomials of this family of codes.  After that, We determine the minimal generating sets of $\mathbb{Z}_{p^r}\mathbb{Z}_{p^r}\mathbb{Z}_{p^s}$-additive cyclic codes and their size. In the last section, we discuss the duality of $\mathbb{Z}_{p^r}\mathbb{Z}_{p^r}\mathbb{Z}_{p^s}$-additive cyclic codes and obtain the relation between the $\mathbb{Z}_{p^r}\mathbb{Z}_{p^r}\mathbb{Z}_{p^s}$-additive cyclic codes and the dual of their duals. By using the generator polynomials of this family of codes, some optimal binary codes are obtained in Table \ref{t1}.

\section*{Acknowledgement}
First author is thankful to Spanish MINECO under Grant PID2019-104664GB-I00 (AEI/FEDER, UE) for partially support.

 \end{document}